\DeclareMathOperator{\poly}{poly}
\newcommand\blfootnote[1]{%
  \begingroup
  \renewcommand\thefootnote{}\footnote{#1}%
  \addtocounter{footnote}{-1}%
  \endgroup
}
\begin{document}

\title{Partitioning Hypergraphs is Hard: Models, Inapproximability, and Applications}


\author{P\'al Andr\'as Papp}
\orcid{0009-0005-6667-802X}
\email{pal.andras.papp@huawei.com}
\affiliation{%
  \department{Computing Systems Lab}
  \institution{Huawei Zurich Research Center}
  \city{Zurich}
  \country{Switzerland}
}

\author{Georg Anegg}
\authornote{Author is currently affiliated with ETH Zurich, Switzerland. \vspace{3pt}}
\email{ganegg@ethz.ch}
\orcid{0000-0002-5730-5812}
\affiliation{%
  \department{Computing Systems Lab}
  \institution{Huawei Zurich Research Center}
  \city{Zurich}
  \country{Switzerland}
}

\author{Albert-Jan N. Yzelman}
\email{albertjan.yzelman@huawei.com}
\orcid{0000-0001-8842-3689}
\affiliation{%
  \department{Computing Systems Lab}
  \institution{Huawei Zurich Research Center}
  \city{Zurich}
  \country{Switzerland}
}

\begin{abstract}
	We study the balanced $k$-way hypergraph partitioning problem, with a special focus on its practical applications to manycore scheduling. Given a hypergraph on $n$ nodes, our goal is to partition the node set into $k$ parts of size at most $(1+\epsilon)\cdot \frac{n}{k}$ each, while minimizing the cost of the partitioning, defined as the number of cut hyperedges, possibly also weighted by the number of partitions they intersect. We show that this problem cannot be approximated to within a $n^{1/\poly \log\log n}$ factor of the optimal solution in polynomial time if the Exponential Time Hypothesis holds, even for hypergraphs of maximal degree $2$. We also study the hardness of the partitioning problem from a parameterized complexity perspective, and in the more general case when we have multiple balance constraints.
	
	Furthermore, we consider two extensions of the partitioning problem that are motivated from practical considerations. Firstly, we introduce the concept of hyperDAGs to model precedence-constrained computations as hypergraphs, and we analyze the adaptation of the balanced partitioning problem to this case. Secondly, we study the hierarchical partitioning problem to model hierarchical NUMA (non-uniform memory access) effects in modern computer architectures, and we show that ignoring this hierarchical aspect of the communication cost can yield significantly weaker solutions.
\end{abstract}

\begin{CCSXML}
<ccs2012>
<concept>
<concept_id>10003752.10003809.10003636</concept_id>
<concept_desc>Theory of computation~Approximation algorithms analysis</concept_desc>
<concept_significance>500</concept_significance>
</concept>
<concept>
<concept_id>10003752.10003753.10003761.10003762</concept_id>
<concept_desc>Theory of computation~Parallel computing models</concept_desc>
<concept_significance>300</concept_significance>
</concept>
<concept>
<concept_id>10003752.10003777.10003779</concept_id>
<concept_desc>Theory of computation~Problems, reductions and completeness</concept_desc>
<concept_significance>300</concept_significance>
</concept>
</ccs2012>
\end{CCSXML}

\ccsdesc[500]{Theory of computation~Approximation algorithms analysis}
\ccsdesc[300]{Theory of computation~Parallel computing models}
\ccsdesc[300]{Theory of computation~Problems, reductions and completeness}

\keywords{Hypergraph, HyperDAG, Balanced partitioning, Parallel computing, Approximation, Hierarchical NUMA}

\maketitle

\section{Introduction}

One of the most fundamental graph problems is to partition the node set of a graph into $k$ parts of similar size, while minimizing the number of cut edges. Recently, the focus of this balanced $k$-way partitioning problem has shifted from graphs to hypergraphs, where a hyperedge can contain not only two, but an arbitrary number of nodes.

\blfootnote{\copyright P\'al Andr\'as Papp, Georg Anegg, Albert-Jan N. Yzelman, 2023. This is the author's full version of the work, posted here for personal use. Not for redistribution. The definitive version (extended abstract) was published in the 35th ACM Symposium on Parallelism in Algorithms and Architectures (SPAA 2023), https://doi.org/10.1145/3558481.3591087.}

A prominent application of this problem is finding the most efficient way to execute a complex computation in a parallel manner on $k$ processors. That is, we can use each node $v$ of a hypergraph to represent a specific step of a computation (e.g.\ a function call, or in a more fine-grained model, a single operation). A hyperedge $e$, on the other hand, represents a unit of data (e.g.\ an input or output variable) in this computation that is shared by a given subset of nodes; hence it would be desirable to execute the nodes of $e$ on the same processor in order to avoid data movement. In such a setting, the partitioning of the nodes into $k$ parts can be interpreted as an assignment of the computational steps to the $k$ available processors, and the balance constraint on the size of the parts ensures that the workload is indeed sufficiently parallellized. On the other hand, our objective is to cut as few hyperedges as possible, which corresponds to minimizing the total \textit{communication cost} between the processors; this is indeed known to be the bottleneck in many real-world computations.

The main advantage of this hypergraph model is that it allows us to capture the communication cost accurately: if the hyperedge $e$ (representing the value of some variable) intersects $\lambda_e$ out of the $k$ parts, then it takes exactly $(\lambda_e-1)$ data transfer operations to move this value from one of these processors (where it is initially stored) to all others (where it is needed). In contrast to this, if we try to model this connection between the same subset of nodes $e$ as a simple graph, then this will always result in an over- or underestimation of the real cost in some cases \cite{H98, HK00}.

In this paper, we present new hardness results for the balanced hypergraph partitioning problem, extending a $n^{1/ \poly \log\log n}$-factor inapproximability bound which was only known for the similar bisection problem before. More importantly, we show that this hardness result already holds for hypergraphs of very small degree, thus pointing out the crucial role of heuristics in practice.

Besides this general hardness result, we also study two further aspects of the partitioning problem which are both strongly motivated by practical considerations. Firstly, we study the setting where we also have dependencies (precedence relations) between the different computational steps in our hypergraph, e.g.\ when modelling the steps of an entire algorithm. We introduce the notion of \textit{hyperDAGs} to capture the communication cost of a parallel execution in this case: this essentially combines the concept of computational DAGs with the more accurate hyperedge-based modeling of communication costs. We study the partitioning problem on hyperDAGs specifically.

Finally, we also define the hierarchical variant of the hypergraph partitioning problem. This deals with an oversimplification from our original model, namely, that the communication cost is assumed to be uniform between any pair of processors. The majority of modern computing architectures, however, are organized into a hierarchical tree structure: several cores are connected to the same CPU, several CPUs to the same RAM, and then we possibly have multiple such units connected on a network level. Due to this, these architectures exhibit highly non-uniform communication costs between different pairs of processing units: transferring data between two cores on the same processor only induces a small cost, whereas transferring it through multiple levels of the hierarchy is much more time-consuming. As such, for a realistic model of communication cost, it is essential to also incorporate this hierarchical structure into the partitioning problem.

Our main contributions are as follows:
\vspace{2pt}
\begin{itemize}[leftmargin=20pt]
    \setlength{\itemsep}{3pt}
    \setlength{\parskip}{2.5pt}
    \item As our main result, we show that assuming the Exponential Time Hypothesis (ETH), there is no polynomial-time approximation algorithm of factor less than $n^{1/\poly \log\log n}$ to the $\epsilon$-balanced hypergraph partitioning problem for any $k \geq 2$ or $\epsilon \geq 0$. Furthermore, this hardness result already applies for practically relevant cases: it already holds if our inputs are restricted to hyperDAGs of node degree at most $2$.
    \item We define and analyze a special class of hypergraphs (hyperDAGs) that provide a more accurate model of capturing I/O cost in computations modelled by DAGs. We then study two natural techniques to develop more appropriate balance constraints for hyperDAGs. In case of layer-wise constraints, we show that the best solution cannot be approximated to an $n^{1-\delta}$ factor (for any $\delta>0$). In case of schedule-based constraints, our observations show that a precise measurement of parallelization in hyperDAGs is not viable in practice.
    \item We discuss several hardness results for the natural extension of the partitioning problem where we have multiple independent balance constraints.
    \item Finally, we introduce the hierarchical partitioning problem to obtain a significantly more accurate model of I/O cost in today's computing architectures. We show that ignoring this hierarchical aspect of the cost function can result in significantly weaker solutions.
\end{itemize}

\section{Related work}

Both the graph- and hypergraph partitioning problem is known to be NP-hard already for $k=2$, and for any (non-trivial) $\epsilon \geq 0$ \cite{GJ79}. The problems have a wide range of applications, including parallel computing, VLSI design, and scientific computing \cite{applic1}.

There is a long line of work on approximation algorithms for the case of $\epsilon=0$ and $k=2$, also known as the \textit{bisection problem}, culminating in an $O(\log{n})$-approximation by R\"{a}cke \cite{FKN00, KF06, R08}. The variant of the problem without a balance constraint has also been studied \cite{GH94, SV95}, as well as lower bounds for the case when $k$ is a variable part of the input \cite{AR06, FF12}.

Many further works on approximating the partitioning problem have focused on ($\alpha, \beta$)-bi-criteria approximations of the bisection problem \cite{bicrit1, bicrit2, bicrit3, FF12}, where the cost is at most $\alpha$ times that of the optimal bisection, and every partition has at most $\beta \cdot \frac{n}{k}$ nodes (i.e.\ the strict balance constraint can be violated by a factor $\beta$). However, this is a significantly different concept from approximating the $\epsilon$-balanced partitioning problem, because bi-criteria approximations compare each solution only to the optimal bisection cost, and this optimal bisection cost can be a factor $\Theta(n)$ larger than the optimum for $\epsilon$-balanced partitioning. As such, in applications where our goal is to find an $\epsilon$-balanced solution of low cost, the guarantees of the bi-criteria approach might not be meaningful: even if a low-cost $\epsilon$-balanced solution exists, the bi-criteria approximations may return a solution that only approximates the (possibly much higher) optimal bisection cost.

A hierarchical version of the partitioning problem has also been studied  on simple graphs, mainly through similar bi-criteria approximations where $\beta$ also depends on the height of the hierarchy \cite{HJKS14, RS16}.

In recent years, the attention in partitioning problems has shifted to hypergraphs. The case of hypergraph partitioning without a balance constraint has been analyzed \cite{CC20, CL20}. As for the constrained case, the work of R\"{a}cke, Schwartz and Stotz \cite{BisectionApprox} again focuses on the bisection problem: they present an approximation algorithm of factor $\widetilde{O}(\sqrt{n})$, as well as several lower bounds for approximability, and they also show that tree-based methods (which provide some of the best approximations for graph partitioning) are not viable for hypergraphs.

The closest result to our main theorem (also from \cite{BisectionApprox}) is a similar inapproximability bound of $n^{1/\poly \log\log n}$ for the bisection problem. Our result is different from this in two ways. First, we show this bound for the balanced partitioning problem with $\epsilon > 0$, which is an easier problem than bisection: there is a simple reduction from balanced partitioning to bisection via adding isolated nodes (see Appendix \ref{app:basics}), but the other direction is not straightforward. Second, we show that the bound already holds in hypergraphs (or even hyperDAGs) of very small degree.

We note that some of these related works consider the natural extension of the problem with node or edge weights; our hardness results also carry over to these more general settings.

We also point out that a similar (slightly more general) notion to hyperDAGs has already been discussed in the work of \cite{DOH}, noting that it provides a more accurate model of communication cost in computational DAGs; however, this work does not study the topic (either hyperDAGs or the partitioning problem) from a theoretical perspective.

Finally, due to the wide applicability of hypergraph partitioning, finding efficient and scalable solutions in practice has also been an active area of research for a long time. This includes sophisticated heuristics \cite{BMSSS13, hMetis, Parkway, KaHyPar, more_recent_advances22} as well as optimized exact algorithms \cite{KB20, exact1}.

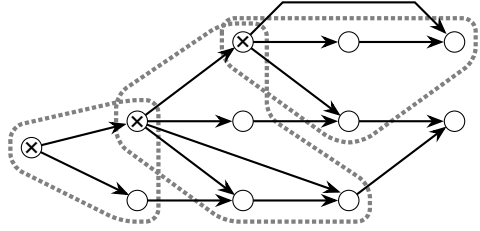
\begin{figure*}[!ht]
    \centering
    \vspace{10pt}
    \begin{tikzpicture}

	\begin{scope}[rounded corners = 5pt]
	\draw[gray, ultra thick, densely dotted] (-8pt,38pt) -- (-8pt,23pt) -- (34pt,2pt) -- (48pt,2pt) -- (48pt,48pt) -- (36pt,48pt) -- cycle;
	
	\draw[gray, ultra thick, densely dotted] (32pt,32pt) -- (32pt,50pt) -- (72pt,77pt) -- (89pt,77pt) -- (89pt,42pt) -- (128pt,16pt) -- (128pt,2pt) -- (73pt,2pt) -- cycle;
	
	\draw[gray, ultra thick, densely dotted] (72pt,79pt) -- (72pt,63pt) -- (115pt,32pt) -- (125pt,32pt) -- (168pt,63pt) -- (168pt,79pt) -- cycle;
	\end{scope}

	\begin{scope}[thick, arrows=-Stealth]
	\draw (0pt,30pt) -- (37pt,12pt);
	\draw (0pt,30pt) -- (36pt,39pt);
	\draw (40pt,10pt) -- (76pt,10pt);
	\draw (40pt,40pt) -- (76pt,40pt);
	\draw[thick, arrows=-Stealth] (40pt,40pt) -- (77pt,68pt);
	\draw[thick, arrows=-Stealth] (40pt,40pt) -- (77pt,12pt);
	\draw[thick, arrows=-Stealth] (40pt,40pt) -- (117pt,13pt);
	\draw[thick, arrows=-Stealth] (80pt,10pt) -- (116pt,10pt);
	\draw[thick, arrows=-Stealth] (80pt,40pt) -- (116pt,40pt);
	\draw[thick, arrows=-Stealth] (80pt,70pt) -- (116pt,70pt);
	\draw[thick, arrows=-Stealth] (80pt,70pt) -- (117pt,42pt);
	\draw[thick, arrows=-Stealth] (80pt,70pt) -- (95pt,85pt) -- (145pt,85pt) -- (157pt,73pt);
	\draw[thick, arrows=-Stealth] (120pt,40pt) -- (156pt,40pt);
	\draw[thick, arrows=-Stealth] (120pt,10pt) -- (157pt,38pt);
	\draw[thick, arrows=-Stealth] (120pt,70pt) -- (156pt,70pt);
	\end{scope}

	\draw[black, fill=white] (0pt,30pt) circle (1.0ex);
	\draw[black, fill=white] (40pt,10pt) circle (1.0ex);
	\draw[black, fill=white] (40pt,40pt) circle (1.0ex);
	\draw[black, fill=white] (80pt,10pt) circle (1.0ex);
	\draw[black, fill=white] (80pt,40pt) circle (1.0ex);
	\draw[black, fill=white] (80pt,70pt) circle (1.0ex);
	\draw[black, fill=white] (120pt,10pt) circle (1.0ex);
	\draw[black, fill=white] (120pt,40pt) circle (1.0ex);
	\draw[black, fill=white] (120pt,70pt) circle (1.0ex);
	\draw[black, fill=white] (160pt,40pt) circle (1.0ex);
	\draw[black, fill=white] (160pt,70pt) circle (1.0ex);

\node at (0pt, 30pt) {\small$\boldsymbol\times$};
\node at (40pt, 40pt) {\small$\boldsymbol\times$};
\node at (80pt, 70pt) {\small$\boldsymbol\times$};

\end{tikzpicture}
    \caption{Illustration of converting a computational DAG into a hyperDAG. For simplicity, we only show hyperedges with at least $3$ nodes (generated by the nodes marked with an $\times$).}
    \label{fig:hyperDAG}
\end{figure*}

\section{Preliminaries} \label{sec:prelim}
\subsection{Hypergraphs and partitioning}

A hypergraph $G(V,E)$ consists of a set of nodes $V$ and hyperedges $E \subseteq 2^V$, where $2^V$ denotes the power set of $V$. We denote the number of nodes by $n:=|V|$, the total number of pins by $\rho := \sum_{e \in E} \, |e|$, and the maximal node degree by $\Delta:= \max_{v \in V} \, |\{ e \in E \, | \, v \in e \}|$. We also use $[\ell]$ as shorthand notation for the set of integers $\{ 1, \ldots, \ell \}$.

A $k$-way partitioning $\mathcal{P}$ of $G$ is a disjoint partitioning $P_1, \ldots, P_k$ of the nodes $V$. For a hyperedge $e \in E$, we define $\lambda_e\coloneqq |\{ i \in [k] \mid e \cap P_i \neq \emptyset \}|$ as the number of partitions intersecting $e$, and we say that $e$ is \textit{cut} if $\lambda_e > 1$. There are two popular cost metrics for a $k$-way partitioning $P$: the \textit{cut-net} metric $| \{ e\in E \, | \, \lambda_e>1 \} |$, and the \textit{connectivity} metric $\sum_{e\in E} \, (\lambda_e-1)$. Our hardness results apply to both of these cost metrics (unless one of the metrics is explicitly specified). Note that for the simplest case of $k=2$, the two metrics are identical. In this case, we will also refer to the nodes in $P_1$ and $P_2$ as red and blue nodes for simplicity.

Given a balance constraint parameter $\epsilon>0$, we say that $\mathcal{P}$ is \textit{$\epsilon$-balanced} if for all $i \in [k]$ we have $|P_i| \leq (1+\epsilon) \cdot \frac{n}{k}$. For convenience, this is sometimes relaxed to $|P_i| \leq \lceil (1+\epsilon) \cdot \frac{n}{k} \rceil$ to ensure that a balanced $\mathcal{P}$ always exists. We also implicitly assume $\epsilon<k-1$, i.e.\ the balance constraint ensures that no part $P_i$ can contain the entire set $V$.

\begin{definition}
In the \emph{$\epsilon$-balanced $k$-way hypergraph partitioning problem} (or simply \textit{partitioning problem}), we are given an input hypergraph $G(V,E)$, and our goal is to find an $\epsilon$-balanced partitioning of $V$ with minimal cost (with respect to either the cut-net or the connectivity metric).
In the decision version of the problem, the input also contains an $L \in \mathbb{Z}$, and we need to decide if there is an $\epsilon$-balanced partitioning of cost at most $L$.
\end{definition}

Note that both $k \geq 2$ and $\epsilon \geq 0$ are fixed constants, i.e.\ parameters of the problem and not part of the input. The special case of $k=2,\, \epsilon=0$ is known as the bisection problem.

Some of our hardness results are built on different complexity assumptions (we discuss these in Appendix \ref{app:inapprox} in more detail). Most important among these is the \textit{Exponential Time Hypothesis} (ETH); intuitively, this states that $3$-SAT cannot be solved in subexponential time. We also occasionally use stronger variants of this hypothesis (such as SETH or Gap-ETH).

We also assume some familiarity with the parameterized complexity classes W[1], XP, and para-NP (see Appendix \ref{app:inapprox} for details). Intuitively, a problem with some parameter $L$ is in XP if it can be solved in $n^{f(L)}$ time; it is para-NP-hard if it is already NP-hard for a fixed $L \in O(1)$.

\subsection{HyperDAGs}

General hypergraphs are indeed the appropriate way to model large computations when we can execute the computational steps in arbitrary order; for example, they are often used to model the parallelization of large SpMV (sparse matrix-dense vector) multiplications \cite{KB20}.

However, in other cases, our goal is e.g.\ to model the steps of a complex algorithm, where we also have dependencies between the different computational steps; as such, we clearly cannot execute them in any desired order. Such a setting can be modelled as a directed acyclic graph (DAG), where the nodes again represent specific computational steps (intermediate values to compute), and the directed edges represent precedence relations between these computations: the edge $(u,v)$ implies that the output value of computation $u$ is an input to computation $v$. This computational DAG model has been studied extensively in terms of scheduling, time-memory trade-off and many other perspectives \cite{DAG2proc1, DAGcommcost, RBpebble}.

Note, however, that if we directly apply computational DAGs to capture communication costs, we face the same problem as in simple graphs: the number of cut edges does not directly describe the number of values transferred between processors, and as such, it can provide a very inaccurate metric for the cost. In the extreme case, it can happen that a red node $u$ has an edge to $(n-1)$ distinct blue successors; while this implies $(n-1)$ cut edges between the two parts, in reality, we only need to transfer a single value once: the output of computation $u$ from the red to the blue processor. As such, to obtain an accurate model of communicate costs in computational DAGs, we introduce the notion of \emph{hyperDAGs}.

\begin{definition}
For a given a computational DAG $G(V,E)$, the corresponding \emph{hyperDAG} $G'(V', E')$ is defined by $V' \coloneqq V$ and
\[ E' \coloneqq \left\{ \{u\}\cup S_u \mid u\in V \right\} \, , \]
where $S_u\coloneqq \{v\in V \mid (u,v)\in E\}$ is the set of immediate successors of $u\in V$.
\end{definition}

Given this hypergraph representation of the computational DAG (illustrated in Figure \ref{fig:hyperDAG}), the hypergraph partitioning problem now provides the correct metric for the communication cost: if an intermediate value $u$ (represented by hyperedge $e=\{u\}\cup S_u$) is computed and stored on some processor $p_u \in [k]$, then we need $(\lambda_e-1)$ transfer operations to make this value available for all the other processors that compute a successor of $u$. As such, using hyperDAGs instead of computational DAGs allows us to also capture the communication cost correctly when modelling a computation with precedence constraints.

We point out that many works on computational DAGs assume that the indegrees of nodes are bounded by a small constant \cite{const_indegree1, const_indegree2}. This directly translates to a small $\Delta$ in the resulting hyperDAG: if e.g.\ each node of a DAG has an indegree of at most $2$ (i.e.\ all operations are binary), then our hyperDAG will have $\Delta \leq 3$. As such, hyperDAGs with a small constant degree are of particular interest in practice.

If we have a description of our hyperDAG that specifies for each hyperedge the node from which it was generated, then one can easily verify whether this hyperDAG corresponds to a valid (computational) DAG. On the other hand, if we are only given a general hypergraph $G$ (without the generator nodes specified), it is not trivial to decide whether $G$ is actually a hyperDAG, i.e.\ if it corresponds to the hyperDAG representation of some DAG. For example, the triangle in Figure \ref{fig:non-hyperDAG} is a simple hypergraph that cannot be obtained as a hyperDAG from any original DAG: for instance, it does not have a node of degree $1$ that could correspond to a source of the original DAG.

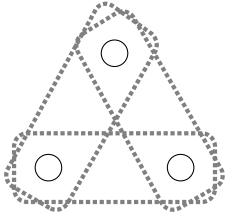
\begin{figure}
	\centering
    \vspace{10pt}
	\begin{tikzpicture}
    
    \begin{scope}[rounded corners = 5pt]
    \draw[gray, ultra thick, densely dotted] (-13pt,-13pt) -- (-13pt,13pt) -- (63pt,13pt) -- (63pt,-13pt) -- cycle;
    
    \draw[gray, ultra thick, densely dotted] (4.75pt,-17.76pt) -- (-17.75pt,-4.76pt) -- (20.24pt,64.06pt) -- (42.76pt,48.06pt) -- cycle;
    
    \draw[gray, ultra thick, densely dotted] (8.76pt,48.06pt) -- (29.76pt,61.06pt) -- (67.76pt,-4.76pt) -- (45.24pt,-17.76pt) -- cycle;
    \end{scope}
    
    
    \draw[black, fill=white] (0pt,0pt) circle (1.3ex);
    \draw[black, fill=white] (50pt,0pt) circle (1.3ex);
    \draw[black, fill=white] (25pt,43.3pt) circle (1.3ex);

\end{tikzpicture}
	\caption{Example hypergraph that does not correspond to any computational DAG.}
	\label{fig:non-hyperDAG}
\end{figure}

This shows that hyperDAGs are only a subclass of general hypergraphs, and hence understanding their properties is an important question: hyperDAGs could have some structural properties that make the partitioning problem easier on them. As such, even though this is not closely related to the main focus of the paper, we also provide a brief analysis of the fundamental properties of hyperDAGs in Appendix \ref{app:hyperDAGs}.

\begin{itemize}[topsep=3pt, itemsep=3pt, parsep=2pt]
 \item Firstly, we develop a complete characterization of hyperDAGs: we show that a hypergraph is a hyperDAG if and only if a specific property holds for all of its subgraphs.
 \item Using this characterization, we then also show that it can be decided in linear time whether a given hypergraph is a hyperDAG.
 \item Finally, for the sake of completeness, we prove that the partitioning problem still remains NP-hard if restricted to hyperDAG inputs.
\end{itemize}

\section{Inapproximability result} \label{sec:main}

In this section we discuss our main theorem, which extends the previously known hardness result from the bisection case to the partitioning problem for general $\epsilon\geq 0$. More importantly, we also show that this hardness result already holds in heavily restricted cases (hyperDAGs of degree at most $2$), suggesting that the problem is not even approximable in practically relevant settings.

\begin{theorem} \label{th:main}
Assuming ETH, it is not possible to approximate the optimum of the partitioning problem to an $n^{1/(\log \log n)^{\delta}}$ factor in polynomial time (for some constant $\delta>0$). This holds for any $k \geq 2$ and $\epsilon \geq 0$, even if the input is restricted to hyperDAGs with $\Delta = 2$.
\end{theorem}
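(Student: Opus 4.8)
The plan is to reduce from the gap-instances underlying the known $n^{1/\poly\log\log n}$ bisection inapproximability of \cite{BisectionApprox}, producing degree-$2$ hyperDAG instances of the $\epsilon$-balanced $k$-way problem while preserving the multiplicative gap. A first concern is bookkeeping: I would insist the reduction run in polynomial time and blow the instance size up only polynomially, from $n$ to $N=\poly(n)$. Since then $\log\log N=\Theta(\log\log n)$, a source gap of the form $n^{1/(\log\log n)^{\delta_0}}$ translates into a target gap $N^{1/(\log\log N)^{\delta}}$ for a suitable constant $\delta$, so the exact shape of the bound is inherited. This immediately rules out any super-polynomial amplification gadget, which constrains all the constructions below.

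I would handle the parameters $k$ and $\epsilon$ by embedding a two-way instance inside a $k$-way one. For general $k$, I would attach $k-2$ \emph{filler} gadgets, each of size about $\tfrac{N}{k}$ and with no hyperedges to the rest of the instance, and tune the capacities $(1+\epsilon)\tfrac{N}{k}$ so that every reasonable solution devotes one part to each filler and leaves exactly two active parts for the embedded instance; being disconnected from everything else, the fillers contribute nothing to either cost metric. For $\epsilon\geq 0$ the difficulty flagged earlier in the paper is that I cannot force an exact bisection: the slack $\epsilon\tfrac{N}{k}$ permits imbalance, and this extra freedom could in principle make the target cheaper in the NO-case than the source bisection. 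I would therefore not try to force balance with ballast (which fails, as relative slack is scale-invariant), but instead make each gadget individually so large that the $\epsilon$-slack cannot flip the ``decision'' it encodes; robustness to mild imbalance is then built into the sizing rather than enforced externally.

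The core is the degree reduction to $\Delta=2$ together with the hyperDAG realization. In a hyperDAG the degree of a node equals $1$ plus its in-degree, so $\Delta=2$ forces in-degree at most $1$: the underlying DAG is a forest and the hyperedges are exactly the closed out-neighborhoods (``stars'') of its nodes. Equivalently, a max-degree-$2$ hypergraph is the dual of an ordinary multigraph $H^{\ast}$ — its nodes are the edges of $H^{\ast}$ and its hyperedges the vertices of $H^{\ast}$ — so partitioning it amounts to $k$-coloring the edges of $H^{\ast}$ to minimize the number of vertices incident to more than one color, subject to a balance constraint on the color classes, with the forest/hyperDAG condition further requiring $H^{\ast}$ to admit an acyclic orientation in which every node receives at most one incoming arc. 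I would build such an $H^{\ast}$ directly from the source gadgets and orient each generated star away from its chosen generator.

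The main obstacle is that a generic black-box degree reduction does \emph{not} work, and recognizing this is the crux. The natural idea — split a degree-$d$ node into $d$ copies, each in one original hyperedge, and tie the copies together with binding hyperedges — fails because any bounded-degree binding is cheap to cut: an adversary minimizing cost simply clusters equally-colored copies, so separating them costs only $O(1)$, and the single extra binding hyperedge allowed by $\Delta=2$ cannot be amplified without raising the degree or incurring a super-polynomial blow-up. Splitting copies is thus essentially a \emph{relaxation} that could destroy the hardness rather than preserve it. I would instead inline the degree bound into the source reduction: design the variable- and constraint-gadgets so that they are already degree-$2$ forests and so that any partition cheaper than the gap threshold must color all copies of a logical variable consistently — i.e.\ the only advantageous ``splits'' already correspond to genuine, consistent assignments. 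Proving exactly this property — that the degree-$2$/forest structure admits no cost-saving inconsistency, simultaneously for the cut-net and connectivity metrics and robustly under the $\epsilon$-slack — is where I expect the real work to lie; once it holds, the polynomial blow-up delivers the $n^{1/(\log\log n)^{\delta}}$ bound under ETH for every fixed $k\geq 2$ and $\epsilon\geq 0$.
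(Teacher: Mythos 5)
There is a genuine gap, and you name it yourself: the entire technical core of the theorem is the thing you defer as ``where I expect the real work to lie.'' Your diagnosis that naive copy-splitting with bounded-degree binding hyperedges fails is correct, but the proof then needs a concrete degree-$2$ gadget that is provably expensive to split, and your proposal never exhibits one. The paper's solution is the \emph{grid gadget}: an $\ell \times \ell$ grid of nodes whose rows and columns each form a hyperedge, so every node has degree exactly $2$, together with the quantitative lemma that any $2$-coloring with $t_0$ occurrences of the minority color cuts at least $\sqrt{t_0}$ hyperedges, a concavity argument summing this over all gadgets, and a recoloring lemma for ``extended grids'' with outsider nodes showing that recoloring a gadget to its majority color never increases cost. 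Without such a gadget and its cut lower bound there is no argument that ``the degree-$2$/forest structure admits no cost-saving inconsistency''; asserting that the gadgets can be designed so that this holds is a restatement of the theorem, not a proof of it. (A smaller slip in the same vein: $\Delta=2$ does \emph{not} force the underlying DAG to be a forest --- a sink generates no hyperedge, since size-$1$ hyperedges are dropped, so sinks may have in-degree $2$; the paper's grids indeed have many such non-generator nodes.)

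Your choice of source problem compounds the gap. Treating the bisection gap instances of the R\"acke--Schwartz--Stotz result as a black box cannot survive the $\epsilon$-slack: in the NO case the optimal $\epsilon$-balanced cost can be far below the optimal bisection cost (the paper makes exactly this point when discussing bi-criteria approximations, where the discrepancy can be a $\Theta(n)$ factor), and your remedy --- sizing each gadget so slack cannot flip its decision --- requires opening the box and re-engineering the gadgets, i.e., redoing the reduction from scratch. The paper avoids this by reducing directly from Smallest $p$-Edge Subgraph, whose $n^{1/(\log\log n)^{\delta}}$-inapproximability under ETH is known, and by using the balance constraint \emph{positively}: the sizes of the anchor blocks $A$, $A'$ and the edge blocks $B_e$ are tuned so that, for any $\epsilon \geq 0$, feasibility forces at least $p$ edge blocks to be red, whereupon the cut cost equals the number of vertices covered by the chosen $p$ edges. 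Your filler-gadget handling of general $k$ has a related soft spot: disconnected fillers constrain nothing unless they are internally dense blocks, and for large $\epsilon$ parts may legitimately remain empty, so the number of fillers must be adjusted (the paper uses $k_0 = \lceil k/(1+\epsilon) \rceil$ covering components) rather than fixed at $k-2$. Your polynomial-blow-up bookkeeping, by contrast, is exactly right and matches the paper's $n' = O(n^3)$ (general) and $O(n^4)$ ($\Delta = 2$) accounting.
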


\renewcommand*{\proofname}{Proof (sketch)}

\begin{proof}
We use a reduction from the well-studied Smallest $p$-Edge Subgraph (S$p$ES) problem: given a graph $G(V,E)$, we need to find a subset $V_0 \subseteq V$ such that the subgraph induced by $V_0$ has at least $p$ edges, and $|V_0|$ is minimized. It is known that if ETH holds, then there is a $\delta>0$ such that no polynomial-time $n^{1/(\log \log n)^{\delta}}$-factor approximation exists to this problem \cite{ETHhardness}.

Given an instance of the S$p$ES problem, the main idea is to convert it into a hypergraph that mostly consists of \textit{blocks}: groups of nodes which are so densely interconnected by hyperedges that they all need to receive the same color, otherwise we end up with an unreasonably high cost. These blocks are also used as a fundamental ingredient in several other constructions throughout the paper. In our current construction, we begin by creating two very large blocks $A$, $A'$, and enforcing (through the balance constraint) that they must obtain different colors; let us assume w.l.o.g.\ that $A$ is colored blue, and $A'$ is colored red. We also create a smaller block for each $e \in E$, and we carefully select the size of blocks such that at least $p$ of these edge blocks must be colored red in order to satisfy the balance constraint. Finally, for each $v \in V$, we create a hyperedge which contains (i) a node from the block of every $e$ that is incident to $v$, and (ii) a further node that is forced to be blue (due to further hyperedges connecting it to $A$). The construction is illustrated in Figure \ref{fig:mainth}.

\begin{figure*}
	\centering
    \vspace{10pt}
	\begin{tikzpicture}
	
	\draw[fill=white] (275pt,55pt) -- (275pt,125pt) -- (345pt,125pt) -- (345pt,55pt) -- cycle;
 	\node[anchor=center] at (310pt,97pt) {\large $A'$};
    \node[anchor=center] at (310pt,83pt) {\small \textit{(all red)}};

    \draw[fill=white] (0pt,50pt) -- (0pt,130pt) -- (80pt,130pt) -- (80pt,50pt) -- cycle;
    \node[anchor=center] at (40pt,97pt) {\large $A$};
    \node[anchor=center] at (40pt,83pt) {\small \textit{(all blue)}};

    \draw[black, fill=white] (105pt,65pt) circle (1.3ex);

    \draw[ultra thick] (105pt,65pt) -- (72pt,65pt);
    
    \draw[fill=white] (135pt,50pt) rectangle (155pt,70pt);
    \draw[fill=white] (170pt,50pt) rectangle (190pt,70pt);
    \draw[fill=white] (205pt,50pt) rectangle (225pt,70pt);

    \draw[fill=white] (152pt,80pt) rectangle (172pt,100pt);
    \draw[fill=white] (208pt,80pt) rectangle (188pt,100pt);

    \draw[fill=white] (135pt,110pt) rectangle (155pt,130pt);
    \draw[fill=white] (170pt,110pt) rectangle (190pt,130pt);
    \draw[fill=white] (205pt,110pt) rectangle (225pt,130pt);
	
    
    \draw[rounded corners = 5pt, dashed, gray, ultra thick] (95pt,55pt) -- (143pt,55pt) -- (143pt,70pt) -- (160pt,80pt) -- (160pt,100pt) -- (143pt,110pt) -- (143pt,125pt) -- (125pt,125pt) -- (95pt,75pt) -- cycle;
    
    \begin{scope}[lightgray, very thick]
    
    \draw(132pt,139pt) -- (132pt,142pt) -- (228pt,142pt) -- (228pt,139pt);
    \draw(180pt,142pt) -- (180pt,146pt);

    \draw(99pt,50pt) -- (99pt,47pt) -- (111pt,47pt) -- (111pt,50pt);
    \draw[arrows=-Stealth](105pt,47pt) -- (105pt,42pt) -- (60pt,33pt) -- (60pt,20pt);
    
    \draw[arrows=-Stealth] (125pt,53pt) -- (125pt,42pt) -- (230pt,33pt) -- (230pt,20pt);
    
    \end{scope}
    
    \node[anchor=center] at (180pt,153pt) {\textit{each edge of $G$}};
    \node[anchor=center] at (180pt,163pt) {\textit{separate block for}};

    \node[anchor=center] at (60pt,13pt) {\textit{blue node (connected by}};
    \node[anchor=center] at (60pt,3pt) {\textit{many hyperedges to $A$)}};

    \node[anchor=center] at (230pt,13pt) {\textit{separate hyperedge for each node of $G$, which}};
    \node[anchor=center] at (230pt,3pt) {\textit{intersects into the blocks of incident edges}};

\end{tikzpicture}
	\caption{High-level illustration of the construction for Theorem \ref{th:main} in the general case, with the squares denoting block gadgets. To fulfill the balance constraint, at least $p$ of the blocks in the middle (corresponding to edges of $G$) need to be red. The hyperedges corresponding to the nodes of $G$ all contain a blue node, so they are cut if and only if at least one of the incident edge blocks is red. As such, a partitioning of cost at most $L$ corresponds to a subset of $p$ edges in $G$ that are altogether incident to at most $L$ nodes. Note that instead of a separate blue node, the hyperedges could also directly intersect into $A$; however, this would make the adaptation to $\Delta=2$ more technical.}
	\label{fig:mainth}
\end{figure*}
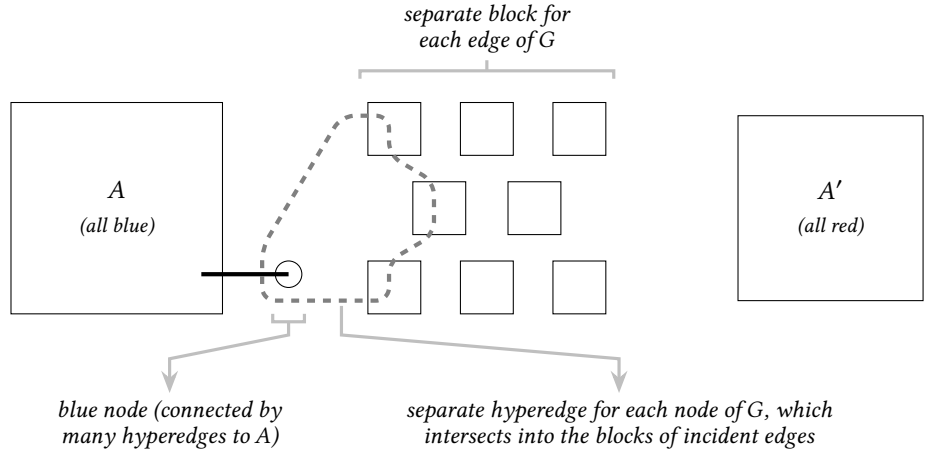

In the resulting hypergraph, we need to select a subset of (at least) $p$ edge gadgets that we color red. However, if a node $v \in V$ is incident to any of these $p$ edges, then the hyperedge corresponding to $v$ will be cut, since it contains both a red and a blue node. Altogether, the cost of a solution will be exactly the number of nodes covered by the $p$ chosen (red) edges of $G$; as such, approximating the minimum cost would also allow us to approximate the S$p$ES problem to the same factor.

The more technical part of the theorem is to extend the reduction first to hypergraphs with $\Delta = 2$, and then to hyperDAGs. For the extension to $\Delta \leq 2$, we essentially replace all the blocks by ``grid gadgets'': these are gadgets which are $2$-regular (each node has degree $2$), but they still ensure that cutting off a significant portion of the nodes from the gadget induces an unacceptably high cost. We then discuss how to connect these grid gadgets to each other in a way such that it essentially exhibits the same properties as the original construction with blocks. Finally, in order to convert the construction into a hyperDAG, we add further auxiliary nodes that cannot affect the optimal partitioning, and then we show that there is an injective assignment from hyperedges to generating nodes, i.e.\ there indeed exists a computational DAG that corresponds to this hypergraph.
\end{proof}

We note that our hardness results also carry over to the special class of $2$-regular hypergraphs recently studied by \cite{KB20} for modelling SpMV problems.

Furthermore, note that Theorem \ref{th:main} was expressed in terms of ETH, which is a rather standard complexity assumption. There are several stronger inapproximability results for S$p$ES based on less standard assumptions; these also provide stronger hardness results for the partition problem.

\begin{corollary}
Our reduction method also shows the inapproximability of the partitioning problem to the following factors based on stronger complexity conjectures:
\vspace{4pt}
\begin{itemize}
    \setlength{\itemsep}{3pt}
    \setlength{\parskip}{2pt}
    \item $n^{f(n)}$ for any function $f(n)=o(1)$, if Gap-ETH holds \cite{ETHhardness},
    \item $n^{\delta}$ for a given $\delta>0$, if specific one-way functions exist \cite{crypt},
    \item $n^{\frac{1}{12}-\delta}$ for any $\delta>0$, if the Hypergraph Dense vs.\ Random Conjecture holds \cite{DenseVsSparse2}.
\end{itemize}
\end{corollary}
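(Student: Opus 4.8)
The plan is to reuse the entire reduction machinery already established for Theorem~\ref{th:main}, and simply swap out the hardness source for S$p$ES. The corollary's three claims all have the same shape: S$p$ES admits a stronger inapproximability factor under a stronger complexity assumption, and our reduction transfers that factor verbatim to the partitioning problem. So the first thing I would establish is the precise \emph{approximation-preserving} nature of the reduction from the main theorem. Recall that in that construction, a partitioning of cost at most $L$ corresponds exactly to a set of $p$ edges of $G$ incident to at most $L$ vertices; conversely, a vertex subset $V_0$ inducing $\ge p$ edges yields a partitioning of cost $|V_0|$. Thus the optimal partitioning cost equals the optimal S$p$ES value $\mathrm{OPT}_{\mathrm{S}p\mathrm{ES}}$, and any $c$-factor approximation to the former gives a $c$-factor approximation to the latter. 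The key point I must verify is that this correspondence is an \emph{exact, multiplicatively tight} reduction (no additive slack, no loss in the ratio), so that whatever factor is hard for S$p$ES is hard for partitioning with the \emph{same} factor.

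\textbf{Handling the three cases.} Given approximation-preservation, each bullet reduces to citing the corresponding S$p$ES hardness result and checking that the reduction is polynomial-time. For the Gap-ETH bullet, I would invoke the result of~\cite{ETHhardness} that, under Gap-ETH, S$p$ES has no polynomial-time $n^{f(n)}$-approximation for any $f(n)=o(1)$; composing with our exact reduction immediately yields the same for partitioning. For the one-way-functions bullet, I would cite~\cite{crypt} giving $n^{\delta}$-inapproximability of S$p$ES under the stated cryptographic assumption, and transfer it identically. For the Hypergraph Dense-vs-Random bullet, I would cite the $n^{1/12-\delta}$ hardness of S$p$ES from~\cite{DenseVsSparse2}. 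In every case the argument is: \emph{our reduction maps an S$p$ES instance on $N$ vertices to a hypergraph on $n = \poly(N)$ nodes while preserving the optimum exactly, hence any approximation factor expressed as a function of the instance size carries over up to the polynomial blow-up in $n$.}

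\textbf{The one genuine subtlety} — and the step I expect to be the main obstacle — is the \emph{size blow-up}. Our construction is not size-preserving: the blocks $A$, $A'$, and the per-edge gadgets introduce $n = \poly(N)$ nodes, where $N=|V|$ is the original S$p$ES instance size. For the bullets phrased as a fixed polynomial factor ($n^\delta$, $n^{1/12-\delta}$), this blow-up is harmless, since $n^{\delta} = N^{\Theta(\delta)}$ merely rescales the exponent, and one can absorb the constant by choosing the $\delta$ in the conclusion slightly smaller than the $\delta$ in the hypothesis. I would make this rescaling explicit: if $n = N^{c}$ for some constant $c$, then an $n^{\delta'}$-hardness for partitioning follows from $N^{c\delta'}$-hardness for S$p$ES, so picking $\delta' = \delta/c$ suffices. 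For the Gap-ETH bullet the situation is even cleaner: since $f(n)=o(1)$ is an arbitrary vanishing function, the polynomial relationship between $n$ and $N$ does not change the $o(1)$ class, so no rescaling is needed. The remaining routine checks are that the reduction still runs in polynomial time and that the $\Delta=2$ / hyperDAG refinements from Theorem~\ref{th:main} continue to apply unchanged, since those modifications were purely structural and did not touch the cost-preservation argument.
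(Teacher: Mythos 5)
Your Gap-ETH bullet is fine and matches the paper: the reduction of Lemma \ref{lem:spes_reduct} preserves the optimum exactly, and a polynomial blow-up does not change the $o(1)$ class of the exponent. The genuine gap is in the other two bullets: the hardness results you invoke do not exist in the form you cite them. Both \cite{crypt} and \cite{DenseVsSparse2} prove inapproximability for the Minimum $p$-Union problem (M$p$U) --- the hypergraph generalization of S$p$ES where one selects $p$ hyperedges whose union is smallest --- and not for S$p$ES itself. The paper therefore first extends the reduction to M$p$U, which is not free: each block $B_e$ now has up to $n$ incident main hyperedges rather than two, and in the $\Delta=2$ refinement each grid gadget needs up to $n$ outsider nodes, which forces the block-size bookkeeping to change from $(\ell^2+2)$ to $(\ell^2+n)$ and requires an additional argument bounding the number of minority-colored outsider nodes across all grids. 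So your closing assertion that the $\Delta=2$/hyperDAG refinements ``continue to apply unchanged'' is exactly the place where the extension needs new work.

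Your treatment of the third bullet is also internally inconsistent. The source hardness under the Dense vs.\ Random Conjecture is $n^{1/4-\delta}$ for M$p$U; the constant $\frac{1}{12}$ in the corollary is not the source constant but is \emph{created} by the size blow-up: since the reduction produces $n'=O(n^3)$ nodes, an $(n')^{1/12-\delta}$-approximation for partitioning would yield roughly an $n^{1/4-\delta'}$-approximation for M$p$U, contradicting \cite{DenseVsSparse2}. You instead cite an ``$n^{1/12-\delta}$ hardness of S$p$ES'' (which the reference does not give) while simultaneously declaring the polynomial blow-up harmless and absorbable by rescaling $\delta$ --- but rescaling the exponent by the blow-up factor is precisely what degrades $1/4$ to $1/12$, and if your source really were $n^{1/12-\delta}$ you would only obtain roughly $n^{1/36-\delta}$ for partitioning, not the claimed bound. (Rescaling is legitimate for the $n^{\delta}$ bullet only because there $\delta$ is an unspecified constant.) Note finally that for $\Delta=2$ the blow-up is $O(n^4)$ and the paper's bound correspondingly weakens to $n^{1/16-\delta}$; the $\frac{1}{12}$ holds only for general hypergraphs, a distinction your argument does not track.
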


Besides approximation algorithms, it is also interesting to study the hardness of the problem from a parameterized complexity perspective, in terms of the allowed cost $L$. It follows easily from the W[1]-hardness of S$p$ES that the partitioning problem is also W[1]-hard. On the other hand, one can show that the problem is in XP, i.e.\ it can be solved in time $n^{f(L)}$ for some function $f$. Intuitively, the main idea is to try all possible combinations of cut hyperedges that can result in a total cost of at most $L$; this means that at most $L$ hyperedges are cut, so there are only $n^{f(L)}$ such cases. Then in each of these cases, we can essentially remove these cut hyperedges (converting them into constraints) to obtain a delicate packing problem that can be solved by a dynamic programming approach.

\begin{lemma} \label{lem:parameterized}
In terms of the allowed cost $L$ as a parameter, the partitioning problem is W[1]-hard (already for hyperDAGs of degree $\leq 2$), but it is in XP.
\end{lemma}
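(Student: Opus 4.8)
The plan is to prove the two claims separately. For \textbf{W[1]-hardness}, I would reuse the reduction from the Smallest $p$-Edge Subgraph (S$p$ES) problem that already underlies Theorem \ref{th:main}, but now read it as a parameterized reduction. Recall that in that construction a balanced partition of cost at most $L$ corresponds exactly to a set of $p$ edges of $G$ whose endpoints span at most $L$ vertices; equivalently, cost $\le L$ is achievable if and only if the S$p$ES instance admits a solution $V_0$ with $|V_0| \le L$. Hence the allowed cost $L$ of the partitioning instance equals the solution-size bound of the S$p$ES instance, so the reduction maps the parameter to itself (or, for the grid-gadget and hyperDAG variants, to an easily computable function of it), and it runs in polynomial time; this makes it a valid parameterized reduction. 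It therefore suffices to note that S$p$ES is W[1]-hard when parameterized by $|V_0|$: this already follows from Clique, since asking for $q$ vertices inducing at least $\binom{q}{2}$ edges is exactly asking for a $q$-clique. Since the reduction behind Theorem \ref{th:main} outputs hyperDAGs with $\Delta = 2$, the W[1]-hardness holds already on this restricted class.

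For \textbf{membership in XP}, the starting observation is that any partition of cost at most $L$ cuts at most $L$ hyperedges in either metric (in the connectivity metric each cut hyperedge contributes at least $1$). The algorithm first guesses the set $F \subseteq E$ of cut hyperedges with $|F| \le L$; there are at most $|E|^{L}$ choices, which is $(\text{input size})^{O(L)}$. For the connectivity metric I would additionally guess, for each $e \in F$, the set $T_e \subseteq [k]$ of parts it is allowed to touch, subject to $\sum_{e \in F}(|T_e| - 1) \le L$; since $k$ is a fixed constant and the total budget is $L$, the number of such guesses is bounded by a function of $L$ alone. (For the cut-net metric one simply takes $T_e = [k]$, so no second guess is needed.)

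Given $(F, \{T_e\})$, every hyperedge outside $F$ must be monochromatic, so I would contract the connected components of the graph in which two nodes are adjacent whenever they share a hyperedge of $E \setminus F$. This yields \emph{super-nodes} of positive integer weight summing to $n$, each of which must receive a single part. Each super-node $s$ is then assigned an allowed set of parts $D_s := \bigcap_{e \in F,\, s \cap e \neq \emptyset} T_e$ (and $D_s = [k]$ if $s$ meets no hyperedge of $F$). What remains is a constrained packing problem: assign every super-node $s$ to a part in $D_s$ so that each part has total weight at most $C := \lceil (1+\epsilon)\tfrac{n}{k} \rceil$. Because $k$ is constant and the weights are integers summing to $n$, this is solvable by a reachability dynamic program whose states are the load vectors $(\ell_1, \dots, \ell_{k-1})$ with each $\ell_i \in \{0, \dots, n\}$, of which there are only $(n+1)^{k-1} = \poly(n)$. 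We answer YES iff some guess admits a feasible packing.

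Correctness runs both ways: any cost-$\le L$ partition induces a valid guess (take $F$ as its cut hyperedges and $T_e$ as the parts each cut hyperedge meets) for which it is itself a feasible packing; conversely, a feasible packing keeps all of $E \setminus F$ monochromatic by construction and forces $\lambda_e \le |T_e|$ for $e \in F$, so its cost is at most $\sum_{e \in F}(|T_e| - 1) \le L$. Multiplying the number of guesses by the polynomial DP gives total time $(\text{input size})^{O(L)}$, i.e.\ XP. \textbf{The main obstacle} I anticipate is the connectivity metric: unlike cut-net, its cost depends on the exact values $\lambda_e$, which couple the assignments of super-nodes across parts. The key idea that makes it tractable is to guess the intersection patterns $T_e$ and then \emph{relax} the requirement to ``each cut hyperedge touches only parts in $T_e$,'' which decouples into per-super-node domain restrictions $D_s$; one must then check that this relaxation can only decrease the cost (so soundness is preserved) while the optimal partition still realizes one of the guesses (so completeness holds). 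A secondary point to get right is that the packing DP is genuinely polynomial only because $k$ is a fixed constant and the weights sum to $n$, turning an otherwise NP-hard bin-packing step into a pseudopolynomial---hence here polynomial---computation.
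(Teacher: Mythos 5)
Your proposal is correct and follows essentially the same route as the paper's proof: W[1]-hardness via the parameter-preserving S$p$ES reduction underlying Theorem~\ref{th:main} (with S$p$ES W[1]-hard as a generalization of Clique), and XP membership by enumerating the at most $L$ cut hyperedges together with allowed color sets $T_e$, contracting the monochromatic components of $E \setminus F$ into weighted super-nodes with domains $\bigcap T_e$, and solving the resulting constrained packing by a polynomial-size load-vector dynamic program. Your minor variations (fixing $T_e = [k]$ for the cut-net metric, and tracking $k-1$ loads instead of $k$ color counts) are cosmetic and do not change the argument.
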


\section{Balance constraints for hyperDAGs} \label{sec:multi}

If our hypergraph models an application where the computational steps have no inter-dependence, i.e., they can be executed in any order, then the balance constraint already ensures that the computational workload on the $k$ processors is evenly distributed. However, if we have a hyperDAG which was obtained from a computational DAG with precedence constraints, then a simple balance constraint may fail to ensure any amount of parallel execution.

Indeed, if for example our DAG $G$ is a serial concatenation of two DAGs $G_1$ and $G_2$ of the same size (as sketched in Figure \ref{fig:1constraint_limits}), then an assignment where $G_1$ and $G_2$ are assigned to the red and blue processors, respectively, is perfectly balanced. Yet, the blue processor will need to wait for all the computations on the red processor to finish before it can begin the computation of any blue nodes at all. Thus even though this partitioning satisfies the balance constraint, we are in fact unable to parallelize the workload between the red and blue processors at all.

This suggests that in order to ensure parallel execution in hyperDAGs, we require a more refined approach for our balance constraint.

\subsection{Layer-wise constraints} \label{sec:layers}
One natural idea is to divide a given algorithm into ``phases'', and ensure that the workload is balanced in each phase separately.
In the simplest case, this corresponds to dividing the nodes of a hyperDAG into \textit{layers}, i.e.\ disjoint sets $V_1, \ldots, V_{\ell}$ such that $\ell$ is the length of the longest path in the DAG, and for each directed edge $(u,v)$ with $u \in V_i, v \in V_j$, we have $i<j$. We illustrate such a layering in Figure \ref{fig:layers}. We can then define a layer-wise version of the partitioning problem where the balance constraint needs to hold in each layer separately (while our goal is still to minimize the cost of the cut as before).

\begin{figure}
	\centering
    \vspace{10pt}
    \resizebox{0.37\textwidth}{!}{\begin{tikzpicture}

	\draw[ultra thick, gray, dashed] (-8pt,2pt) -- (-8pt,58pt) -- (61pt,58pt) -- (61pt,2pt) -- cycle;
	\draw[ultra thick, gray, dashed] (138pt,2pt) -- (138pt,58pt) -- (69pt,58pt) -- (69pt,2pt) -- cycle;
	
	\node[anchor=center] at (26.5pt,-11pt) {$G_1$};
	\node[anchor=center] at (103.5pt,-11pt) {$G_2$};
	
	
	\draw[arrows=-Stealth] (0pt,20pt) -- (15pt,11pt);
	\draw[arrows=-Stealth] (0pt,20pt) -- (15pt,28pt);
	\draw[arrows=-Stealth] (0pt,40pt) -- (15pt,32pt);
	\draw[arrows=-Stealth] (0pt,40pt) -- (15pt,49pt);
	\draw[arrows=-Stealth] (18pt,10pt) -- (33pt,19pt);
	\draw[arrows=-Stealth] (18pt,30pt) -- (33pt,38pt);
	\draw[arrows=-Stealth] (18pt,50pt) -- (33pt,42pt);
	\draw[arrows=-Stealth] (36pt,20pt) -- (51pt,28pt);
	\draw[arrows=-Stealth] (36pt,40pt) -- (51pt,32pt);
	\draw[arrows=-Stealth] (54pt,30pt) -- (72pt,30pt);
	\draw[arrows=-Stealth] (76pt,30pt) -- (90pt,30pt);
	\draw[arrows=-Stealth] (76pt,30pt) -- (91pt,11pt);
	\draw[arrows=-Stealth] (76pt,30pt) -- (91pt,49pt);
	\draw[arrows=-Stealth] (94pt,50pt) -- (108pt,50pt);
	\draw[arrows=-Stealth] (94pt,10pt) -- (108pt,10pt);
	\draw[arrows=-Stealth] (94pt,10pt) -- (109pt,28pt);
	\draw[arrows=-Stealth] (94pt,30pt) -- (108pt,30pt);
	\draw[arrows=-Stealth] (94pt,30pt) -- (109pt,12pt);
	\draw[arrows=-Stealth] (112pt,10pt) -- (127pt,28pt);
	\draw[arrows=-Stealth] (112pt,30pt) -- (126pt,30pt);
	\draw[arrows=-Stealth] (112pt,50pt) -- (127pt,32pt);
	
	
	\draw[black, fill=white] (0pt,20pt) circle (0.6ex);
	\draw[black, fill=white] (0pt,40pt) circle (0.6ex);
	\draw[black, fill=white] (18pt,10pt) circle (0.6ex);
	\draw[black, fill=white] (18pt,30pt) circle (0.6ex);
	\draw[black, fill=white] (18pt,50pt) circle (0.6ex);
	\draw[black, fill=white] (36pt,20pt) circle (0.6ex);
	\draw[black, fill=white] (36pt,40pt) circle (0.6ex);
	\draw[black, fill=white] (54pt,30pt) circle (0.6ex);
	\draw[black, fill=white] (76pt,30pt) circle (0.6ex);
	\draw[black, fill=white] (94pt,10pt) circle (0.6ex);
	\draw[black, fill=white] (94pt,30pt) circle (0.6ex);
	\draw[black, fill=white] (94pt,50pt) circle (0.6ex);
	\draw[black, fill=white] (112pt,10pt) circle (0.6ex);
	\draw[black, fill=white] (112pt,30pt) circle (0.6ex);
	\draw[black, fill=white] (112pt,50pt) circle (0.6ex);
	\draw[black, fill=white] (130pt,30pt) circle (0.6ex);

\end{tikzpicture}}
	\caption{The limits of having a single balance constraint for hyperDAGs: while the above partitioning is perfectly balanced, it does not provide any parallelization opportunities in fact.}
	\label{fig:1constraint_limits}
\end{figure}
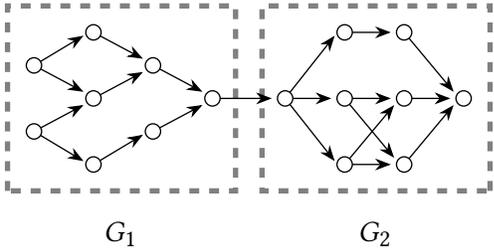

\begin{definition}
In the \emph{layer-wise balanced hyperDAG partitioning problem}, a partitioning is only feasible if each layer is balanced, i.e.\ if for all $j \in [\ell], i \in [k]$, we have $|P_i \cap V_j| \leq (1+\epsilon) \cdot \frac{|V_j|}{k}$.
\end{definition}

In the simplest case, we can create layers by sorting every node into the earliest possible layer: $V_1$ contains the source nodes of the DAG, and then $V_i$ (for $i \geq 2$) is the set of nodes that have all their predecessors contained in $\bigcup_{j=1}^{i-1} \, V_j$; this indeed divides the DAG into layers $V_1, \ldots, V_{\ell}$. However, in general, there are multiple different ways to divide the DAG into layers; for example, in the DAG in Figure \ref{fig:layers}, the lowermost node can be sorted either into layer $V_2$, $V_3$ or $V_4$. Hence we can also define a more general, \textit{flexible layering} version of the partitioning problem, where our goal is twofold: we first need to select a valid layering of the DAG as discussed above, and we then need to find a layer-wise balanced partitioning according to these layers, with the final goal of minimizing the cost.

With these layer-wise constraints, the partitioning problem turns out to be even harder: it is already NP-hard to distinguish between an optimal cost of $0$ and $n^{1-\delta}$ (for any $\delta>0$). 

\begin{theorem} \label{th:layers}
It is NP-hard to approximate the layer-wise balanced partitioning problem to any finite factor, both in the fixed and in the flexible layering case.
\end{theorem}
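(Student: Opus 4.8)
The plan is to rule out any finite-factor approximation by proving that it is already \textbf{NP}-hard to decide whether the optimum equals $0$. This suffices: a cost-$0$ optimum must be returned exactly by any multiplicative approximation (since $c\cdot 0 = 0$ for every finite factor $c$), so a polynomial-time $c$-approximation for any finite $c$ would separate instances of optimum $0$ from instances of strictly positive optimum, and hence decide this \textbf{NP}-hard question. A partitioning has cost $0$ precisely when every hyperedge is monochromatic; equivalently, when each connected component of the hypergraph (viewing every hyperedge as a clique on its nodes) receives a single color. Thus a cost-$0$ feasible solution is exactly a $k$-coloring of the components that respects the balance constraint in every layer, and the whole theorem reduces to the \textbf{NP}-hardness of this layer-balanced monochromatic coloring problem.

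First I would give the reduction for $k=2$, starting from a strongly \textbf{NP}-hard exact-cover problem, e.g.\ Exact Cover by $3$-Sets (X3C) with elements $e_1,\dots,e_{3q}$ and $3$-element sets $C_1,\dots,C_t$. I introduce one layer $V_{e}$ per element and one \emph{monochromatic gadget} per set $C_j$: a small connected component with exactly one node placed in layer $V_e$ for each $e\in C_j$, so that coloring this component red encodes ``$C_j$ is chosen''. Since the gadget is a single component, all its nodes share a color, so the number of red gadget-nodes inside layer $V_e$ equals the number of chosen sets containing $e$. Using the block gadgets of Theorem~\ref{th:main} as per-layer anchors, I add forced-red and forced-blue filler to each layer $V_e$ so that the layer is balanced \emph{if and only if} exactly one chosen set contains $e$. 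A balanced cost-$0$ coloring of all layers then corresponds exactly to an exact cover, while if no exact cover exists, every monochromatic coloring violates balance in some layer, so no feasible cost-$0$ solution exists and the optimum is strictly positive.

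Next I would extend to arbitrary $k\ge 2$ and to genuine hyperDAGs. For $k>2$ I attach $k-2$ additional large forced blocks that must absorb the remaining $k-2$ colors to meet the balance constraint, reducing the situation to the two free colors used above. To obtain a \emph{hyperDAG} rather than an abstract hypergraph, I realize every gadget, filler and anchor as a small acyclic structure and exhibit an injective assignment of hyperedges to generating nodes, exactly as in the proof of Theorem~\ref{th:main}; acyclicity is ensured by orienting all auxiliary edges consistently with the layer order. For the \emph{flexible} layering case I make the intended layering essentially forced: by padding each gadget with sufficiently long directed chains, the layer index of every relevant node is pinned down by path lengths, so any valid layering the solver might choose induces the same per-element layers and hence the same balance constraints. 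Consequently the fixed and flexible versions collapse to the same combinatorial condition, and hardness holds for both.

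The main obstacle is the \emph{exactness} of the encoding under the $(1+\epsilon)$ slack: I must ensure that balance in layer $V_e$ forces \emph{exactly} one red gadget-node, with no room for a NO-instance to sneak in a cost-$0$ coloring by exploiting the slack $\epsilon\cdot|V_e|/k$. As in Theorem~\ref{th:main}, this is handled by making the forced anchors large and choosing $|V_e|$ so that $\lceil(1+\epsilon)\,|V_e|/k\rceil$ leaves capacity for precisely one extra red node and the matching number of blue nodes; getting this rounding right for every fixed $\epsilon\ge 0$ simultaneously is the delicate part. A secondary obstacle is guaranteeing that the flexible layering cannot be rearranged to decouple gadget-nodes from their element layers, which the long forcing chains resolve. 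Finally, to strengthen the separation from $0$ versus positive to the $0$ versus $n^{1-\delta}$ gap mentioned in the text, I would replace each balance-violating node by many parallel copies, so that a NO-instance is forced to cut a number of hyperedges polynomial in $n$; this amplification does not affect the cost-$0$ case.
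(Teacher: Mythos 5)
Your high-level route is the same as the paper's: show that it is already NP-hard to distinguish optimum $0$ from positive optimum, encode an NP-hard problem into per-layer balance constraints acting on monochromatic connected components (the paper reduces from graph $3$-coloring via the multi-constraint construction of Lemma~\ref{lem:multi_lin}; your choice of X3C would in principle also work), and dispose of the flexible case by pinning every node to a unique layer. However, two of your gadget steps fail as stated.

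First, your single-layer ``exactly one'' enforcement is impossible for any fixed $\epsilon>0$. For $k=2$, a layer of size $N$ confines the number of red nodes to the window $\left[(1-\epsilon)\frac{N}{2},\,(1+\epsilon)\frac{N}{2}\right]$, which has width $\epsilon N$; pinning the red count to a single value therefore requires $\epsilon N<2$, i.e.\ $N<2/\epsilon=O(1)$, whereas your element layers necessarily grow with the instance. This is exactly why Lemma~\ref{lem:enforce} yields only one-sided (``at most'') conditions when $\epsilon>0$, and why the paper's proof spends \emph{two} layers per two-sided condition, exploiting that a monochromatic component carries its color across layers; you would need the same device, or recast exact cover as an ``at most one'' packing constraint per element plus a cardinality constraint on the chosen sets in one dedicated layer. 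Second, and more fundamentally, your set-gadgets are connected components pinned by long directed chains, and those chains deposit nodes --- of the gadget's solution-dependent color --- into every intermediate layer they traverse; hence the red count of layer $V_e$ is contaminated by gadgets whose sets do not contain $e$, and a NO instance can restore balance by trading passing-chain colors against membership nodes. (Reordering layers cannot avoid this, since overlapping triples cannot all be made consecutive.) The paper neutralizes precisely this: every component is a full path contributing exactly one node to every layer, membership is encoded by a size-$2$ block in the relevant layer, and $(k-1)\cdot n_0$ filler paths plus dedicated layers force exactly $n_0$ components of each color, so each layer's baseline is independent of the solution. Relatedly, your ``per-layer block anchors'' cannot live inside a single layer of a hyperDAG, since every hyperedge spans at least two layers; fixed colors must instead come from control components forced to distinct colors in auxiliary end layers, as the paper does. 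All of this is repairable, but only with machinery your sketch does not yet contain.
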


\begin{proof}
The proof consists of several technical steps; these are discussed in detail in Appendices \ref{app:multi} and \ref{app:sec:layerwise}. The main ideas behind the proof are as follows:
\begin{itemize}[topsep=3pt, itemsep=2pt, parsep=2pt]
\item Our DAG construction consists of several connected components, each having a carefully designed number of nodes in each layer. In order to obtain a partitioning of cost $0$, all of these components need to be monochromatic.
\item Besides the main components, we also add $k$ ``control components'', and use auxiliary layers at the end of the DAG to ensure that these all receive different colors. These control components are then used to add a desired number of nodes of fixed colors to any layer, and hence, intuitively, to ensure that specific layers must contain at least/at most a specific number of nodes of given colors.
\item At the core of our construction, there is a reduction from the well-known graph coloring problem, using the above tools to convert the coloring problem to this multi-constraint partitioning setting.
\end{itemize}
In the resulting DAG, a layer-wise partitioning of cost $0$ exists if and only if the original graph has a valid $3$-coloring. Moreover, our DAG is designed to allow only one possible layering, so this settles the proof for both the fixed and the flexible layering case.
\end{proof}

A slightly different version of this proof also shows that in the flexible layering case, this hardness result already applies separately to the subproblem of finding the best layering of the DAG, i.e.\ the layering where the optimum cost is smallest. In other words, even if we have an oracle that returns the optimal partitioning for a specific fixed layering of the DAG, the optimum is still not approximable to any finite factor.

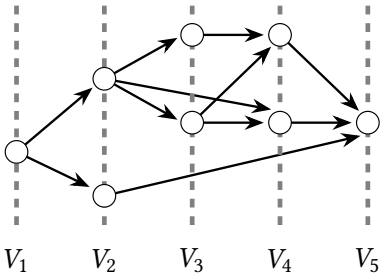
\begin{figure}
	\centering
    \vspace{10pt}
    \resizebox{0.3\textwidth}{!}{\begin{tikzpicture}
	
	\draw[thick, arrows=-Stealth] (0pt,30pt) -- (25pt,18pt);
	\draw[thick, arrows=-Stealth] (0pt,30pt) -- (26pt,52pt);
	\draw[thick, arrows=-Stealth] (30pt,55pt) -- (55pt,41pt);
	\draw[thick, arrows=-Stealth] (30pt,55pt) -- (55pt,69pt);
	\draw[thick, arrows=-Stealth] (30pt,55pt) -- (87pt,44.5pt);
	\draw[thick, arrows=-Stealth] (30pt,15pt) -- (117.5pt,35.5pt);
	\draw[thick, arrows=-Stealth] (60pt,40pt) -- (84.5pt,40pt);
	\draw[thick, arrows=-Stealth] (60pt,40pt) -- (87pt,66pt);
	\draw[thick, arrows=-Stealth] (60pt,70pt) -- (84.5pt,70pt);
	\draw[thick, arrows=-Stealth] (90pt,40pt) -- (114.5pt,40pt);
	\draw[thick, arrows=-Stealth] (90pt,70pt) -- (117pt,44pt);
	
	
	\draw[ultra thick, gray, dashed] (0pt,5pt) -- (0pt,80pt);
	\draw[ultra thick, gray, dashed] (30pt,5pt) -- (30pt,80pt);
	\draw[ultra thick, gray, dashed] (60pt,5pt) -- (60pt,80pt);
	\draw[ultra thick, gray, dashed] (90pt,5pt) -- (90pt,80pt);
	\draw[ultra thick, gray, dashed] (120pt,5pt) -- (120pt,80pt);
	
	\node[anchor=center] at (0pt,-7pt) {\large $V_1$};
	\node[anchor=center] at (30pt,-7pt) {\large $V_2$};
	\node[anchor=center] at (60pt,-7pt) {\large $V_3$};
	\node[anchor=center] at (90pt,-7pt) {\large $V_4$};
	\node[anchor=center] at (120pt,-7pt) {\large $V_5$};
	
	
	\draw[black, fill=white] (0pt,30pt) circle (1.0ex);
	\draw[black, fill=white] (30pt,15pt) circle (1.0ex);
	\draw[black, fill=white] (30pt,55pt) circle (1.0ex);
	\draw[black, fill=white] (60pt,40pt) circle (1.0ex);
	\draw[black, fill=white] (60pt,70pt) circle (1.0ex);
	\draw[black, fill=white] (90pt,40pt) circle (1.0ex);
	\draw[black, fill=white] (90pt,70pt) circle (1.0ex);
	\draw[black, fill=white] (120pt,40pt) circle (1.0ex);

\end{tikzpicture}}
	\caption{An example for dividing a DAG into layers.}
	\label{fig:layers}
\end{figure}

\subsection{Schedule-based constraints}

We have seen that layer-wise constraints (instead of a single constraint) allow us to exclude solutions where computations in a hyperDAG are in fact not parallelized. Unfortunately, the layer-wise approach can run into a different problem: it may impose a condition that is too strict, also excluding some solutions that are in fact perfectly parallelized.

In particular, consider a DAG with two distinct paths of length $3$ from a single source node to a single sink node, and then let us split both the first node in the upper path and the second node in the lower path into a larger set of $b$ nodes, as sketched in Figure \ref{fig:layers_limits}. With layer-wise constraints (and a sufficiently small $\epsilon$), we are forced to partition both of these sets in an (almost) balanced way, since they contain almost all the nodes in the given layer. Hence whichever color we choose for the successor of these sets, we will have a cost of $\Theta(b)$. In contrast to this, if we were to simply color the upper branch red and the lower branch blue (and the source and sink node with an arbitrarily chosen color), then we have near-perfect parallelization, and a cut cost of only $2$ altogether.

In general, the only straightforward way to develop an exact metric of parallelization is to consider a concrete \textit{scheduling} of the DAG, which assigns the nodes not only to processors, but also to time steps. A detailed discussion of scheduling problems is beyond the scope of this paper; in the rest of the section, we briefly show how scheduling can be used to develop a more accurate balance constraint for partitioning, and we also discuss the limits of this approach.

\begin{definition} \label{def:DAG_sched}
Given a DAG and a fixed constant $k$, a \emph{scheduling} is an assignment of the nodes to processors $p:V\rightarrow [k]$ and to time steps $t:V\rightarrow \mathbb{Z}^+$ such that
\vspace{4pt}
\begin{itemize}
    \setlength{\itemsep}{3pt}
    \setlength{\parskip}{2pt}
    \item for all $u,v \in V$ we have either $p(u) \neq p(v)$ or $t(u) \neq t(v)$ (the scheduling is correct),
    \item for all $(u,v) \in E$, we have $t(u) < t(v)$ (the precedence constraints are satisfied).
    \vspace{5pt}
\end{itemize}
The goal is then to minimize the makespan $\max_{v \in V\,} t(v)$ of the scheduling, i.e.\ to execute the computations as fast as possible (without considering communication costs).
\end{definition}

When compared to $n$, this optimal makespan essentially allows us to measure the parallelizability of the DAG. For example, if our DAG is simply a directed path, then the best makespan is $n$ (the DAG is not parallelizable at all); on the other hand, if it consists of $k$ disjoint DAG components of equal size, then the best makespan is $\frac{n}{k}$ (the DAG is perfectly parallelizable). As such, it is a natural idea to use this metric to define a more sophisticated, schedule-based balance constraint on our hyperDAGs, where a given partitioning is feasible if it can be relatively well parallelized compared to the best possible parallelization of the DAG.

More formally, given a DAG, let $\mu$ denote the minimal makespan in general (i.e.\ over all $p',t'$ such that $(p',t')$ is a valid schedule), and let $\mu_p$ denote the minimal makespan for a fixed partitioning $p$ (i.e.\ over all $t'$ such that $(p,t')$ is a valid schedule).

\begin{definition} \label{def:schedconstraint}
In a \emph{schedule-based balance constraint}, we say that a partitioning $p:V\rightarrow [k]$ is feasible if $\mu_p \leq (1+\epsilon) \cdot \mu$.
\end{definition}

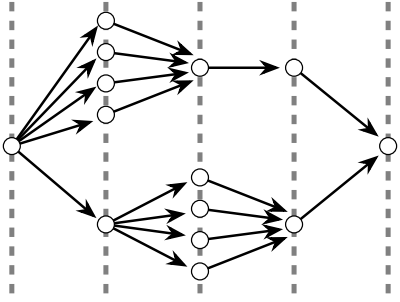
\begin{figure}
	\centering
    \vspace{10pt}
    \resizebox{0.3\textwidth}{!}{\begin{tikzpicture}
	
	\draw[thick, arrows=-Stealth] (0pt,80pt) -- (27pt,57pt);
	\draw[thick, arrows=-Stealth] (0pt,80pt) -- (26pt,88pt);
	\draw[thick, arrows=-Stealth] (0pt,80pt) -- (27pt,99pt);
	\draw[thick, arrows=-Stealth] (0pt,80pt) -- (27pt,108pt);
	\draw[thick, arrows=-Stealth] (0pt,80pt) -- (27.5pt,118.5pt);
	\draw[thick, arrows=-Stealth] (30pt,55pt) -- (56pt,41.5pt);
	\draw[thick, arrows=-Stealth] (30pt,55pt) -- (55.5pt,51.5pt);
	\draw[thick, arrows=-Stealth] (30pt,55pt) -- (55.5pt,58.5pt);
	\draw[thick, arrows=-Stealth] (30pt,55pt) -- (56pt,68.5pt);
	\draw[thick, arrows=-Stealth] (30pt,90pt) -- (58pt,101pt);
	\draw[thick, arrows=-Stealth] (30pt,100pt) -- (56.5pt,103.5pt);
	\draw[thick, arrows=-Stealth] (30pt,110pt) -- (56.5pt,106.5pt);
	\draw[thick, arrows=-Stealth] (30pt,120pt) -- (58pt,109pt);
	\draw[thick, arrows=-Stealth] (60pt,40pt) -- (88pt,51pt);
	\draw[thick, arrows=-Stealth] (60pt,50pt) -- (86.5pt,53.5pt);
	\draw[thick, arrows=-Stealth] (60pt,60pt) -- (86.5pt,56.5pt);
	\draw[thick, arrows=-Stealth] (60pt,70pt) -- (88pt,59pt);
	\draw[thick, arrows=-Stealth] (60pt,105pt) -- (85.5pt,105pt);
	\draw[thick, arrows=-Stealth] (90pt,55pt) -- (117pt,77pt);
	\draw[thick, arrows=-Stealth] (90pt,105pt) -- (117pt,83pt);
	
	
	\draw[ultra thick, gray, dashed] (0pt,33pt) -- (0pt,127pt);
	\draw[ultra thick, gray, dashed] (30pt,33pt) -- (30pt,127pt);
	\draw[ultra thick, gray, dashed] (60pt,33pt) -- (60pt,127pt);
	\draw[ultra thick, gray, dashed] (90pt,33pt) -- (90pt,127pt);
	\draw[ultra thick, gray, dashed] (120pt,33pt) -- (120pt,127pt);
	
	
	
	\draw[black, fill=white] (0pt,80pt) circle (0.7ex);
	\draw[black, fill=white] (30pt,55pt) circle (0.7ex);
	\draw[black, fill=white] (30pt,90pt) circle (0.7ex);
	\draw[black, fill=white] (30pt,100pt) circle (0.7ex);
	\draw[black, fill=white] (30pt,110pt) circle (0.7ex);
	\draw[black, fill=white] (30pt,120pt) circle (0.7ex);
	\draw[black, fill=white] (60pt,40pt) circle (0.7ex);
	\draw[black, fill=white] (60pt,50pt) circle (0.7ex);
	\draw[black, fill=white] (60pt,60pt) circle (0.7ex);
	\draw[black, fill=white] (60pt,70pt) circle (0.7ex);
	\draw[black, fill=white] (60pt,105pt) circle (0.7ex);
	\draw[black, fill=white] (90pt,55pt) circle (0.7ex);
	\draw[black, fill=white] (90pt,105pt) circle (0.7ex);
	\draw[black, fill=white] (120pt,80pt) circle (0.7ex);

\end{tikzpicture}}
	\caption{The limits of layer-based balance constraints. A layer-wise balanced partitioning has to split the large sets both in the upper and the lower branch, resulting in a large cost. On the other hand, coloring the upper branch red and the lower branch blue provides near-perfect parallelization at a cost of only $2$.}
	\label{fig:layers_limits}
\end{figure}

This schedule-based constraint provides a much more sophisticated condition of sufficient parallelization in hyperDAGs. On the other hand, the approach has strong limitations in practice; we discuss these in Appendix \ref{app:scheduling} in detail. In particular, the DAG scheduling problem (computing $\mu$) is not known to be polynomially solvable except for a few special cases, such as for $k=2$, or for special classes of DAGs such as out-trees, level-order DAGs or bounded-height DAGs \cite{DAG2proc1, oppforest, levelorder, boundedheight}. Moreover, maybe more surprisingly, we show that evaluating the quality of a given partitioning (computing $\mu_p$) is an even harder problem which remains NP-hard even in these very special cases.

\begin{theorem} \label{th:scheduling}
Computing $\mu_p$ is already NP-hard for $k=2$, even if the inputs are restricted to out-trees, level-order DAGs or bounded-height DAGs.
\end{theorem}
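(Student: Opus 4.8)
The plan is to prove NP-hardness of the decision problem ``is $\mu_p \le T$?'' by a polynomial reduction from a strongly NP-hard packing problem, for which I would take \textsc{3-Partition}. The starting observation is that for $k=2$ a \emph{fixed} coloring $p$ turns the scheduling problem into a purely combinatorial one: at every time step at most one red and one blue node may run, so a schedule of makespan $T$ is exactly an assignment $t\colon V\to[T]$ that is injective on each of $P_1,P_2$ and satisfies $t(u)<t(v)$ for every edge $(u,v)$. Two immediate lower bounds drive the construction: $\mu_p \ge \max(|P_1|,|P_2|)$ (each color needs distinct slots) and $\mu_p \ge$ (number of nodes on the longest directed path). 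The reduction is designed so that both bounds become tight simultaneously precisely when the \textsc{3-Partition} instance is solvable.

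First I would build a rigid \emph{skeleton} out of one color, together with cross-color precedence edges that, via the counting bound, pin a designated set of \emph{separator} nodes of the opposite color to prescribed times; these separators carve the timeline of the second color into $m$ \emph{windows}, each of capacity $B$. The $3m$ input integers are then encoded as \emph{item gadgets} (monochromatic sub-structures of the appropriate size) that must be distributed among the windows, and I would set the target makespan $T$ equal to the node count of the busier color, so that in any makespan-$T$ schedule both processors are fully packed with no idle slots. The core of the correctness argument is the equivalence: a partition of the integers into $m$ triples of sum $B$ yields a fully packed, precedence-respecting schedule of makespan $T$, and conversely any schedule of makespan $T$ forces the items occupying each window to have sizes summing to exactly $B$, recovering a valid \textsc{3-Partition}.

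Two points are the real work. The first, and main, obstacle is \emph{rigidity}: I must prevent a makespan-$T$ schedule from cheating, either by inserting idle time or by splitting a single item across two windows (a bare chain can ``jump over'' a separator through a gap even when both tracks are otherwise full). This is where densely interconnected gadgets in the spirit of those used in Theorem~\ref{th:main} — forcing a group of nodes to execute essentially contiguously and entirely before another group — must be adapted to the scheduling setting and combined with the full-packing target, so that any deviation provably pushes the makespan above $T$. The second obstacle is fitting the construction into each restricted class. For out-trees and level-order DAGs the long skeleton chain is harmless, since arbitrary depth and a layering into consecutive levels are available; the delicate case is \emph{bounded-height} DAGs, where the longest-path bound is only a constant and a long timeline chain is forbidden. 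Here I would replace the depth-based skeleton by constant-depth ``all-before-all'' gadgets — a complete bipartite red-to-blue gadget forces every red to precede every blue — recreating the separator/window structure using width rather than depth, and redo the counting so that the exact-fill equivalence still holds. Verifying that this bounded-height variant genuinely encodes the partition, rather than collapsing into a polynomially solvable special case, is the most technical part, and is exactly what makes the statement surprising given that the unconstrained makespan $\mu$ is tractable on all three classes.
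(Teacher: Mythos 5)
Your plan for the first two classes matches the paper's construction in outline (a reduction from $3$-partition, a timeline gadget carving the second color's schedule into windows of capacity $b$, and a full-packing makespan target so both processors are idle-free), but you treat the rigidity of this encoding as an open obstacle to be patched with dense block-like gadgets, and that is exactly where the paper's design is both simpler and where your patch would fail. The paper makes the items \emph{bicolored}: each integer $a_i$ becomes a chain of $a_i$ red nodes followed by $a_i$ blue nodes, and the ``skeleton'' is a single main path of alternating monochromatic blocks of size $b$ (blue, red, blue, \dots). Full packing forces exactly one main-path node and one item node per step, of opposite colors; hence item-red nodes can run only in odd windows and item-blue nodes only in even windows, and since the blue half of item $i$ is gated by its own red half, filling an even window with $b$ blue nodes forces a set of red prefixes of total size exactly $b$ to have been completed in the preceding odd window. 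No ``jumping over a separator'' is possible and no dense gadget is needed --- which matters, because your proposed fix is structurally unavailable in these classes: out-trees have indegree at most $1$, so complete-bipartite ``all-before-all'' gadgets and Theorem~\ref{th:main}-style blocks cannot be embedded, while the paper's bicolored chains are simultaneously chain graphs, level-order components, and (after adding a common root) an out-tree.

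The genuine gap is the bounded-height case, and your ``width instead of depth'' plan does not survive scrutiny with $3$-partition as the source problem. With a fixed coloring, the only mechanism that forces the gating component to emit colors in a prescribed temporal pattern is precedence \emph{within} that component, and in a DAG of height $h$ any such component can enforce at most $h$ ordered monochromatic phases; a $3$-partition instance needs $\Theta(t)$ ordered windows, so the separator/window structure simply cannot be recreated at constant depth, no matter how the counting is redone. The paper resolves this by changing the source problem: it reduces from \emph{clique}, which needs only constantly many phases. It builds a blue node per vertex, a red node per edge (with incidence edges between them), plus a four-layer level-order component $C$ with layers of $L$ red, $\binom{L}{2}$ blue, $|V|-L$ red, and $|E|-\binom{L}{2}$ blue nodes; since every two different-colored nodes of $C$ are comparable, a makespan of $|V|+|E|$ forces exactly one $C$-node per step, and the complementary processor must schedule $L$ vertex nodes followed by $\binom{L}{2}$ edge nodes whose endpoints lie among them --- i.e.\ a clique of size $L$. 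So your proposal needs this additional idea (a different reduction for bounded height), not merely a more careful version of the same one.
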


This provides an unusual situation in these special cases: we can efficiently compute the parallelizability of the DAG in general, but we cannot compute how parallelizable \textit{our own solution} is, and hence we cannot verify if it satisfies a schedule-based balance constraint. This suggests that such a schedule-based constraint is not a viable approach in practice, even for the simplest case of $k=2$.

\section{Multi-constraint partitioning} \label{sec:multi_sub}

The layer-wise balance constraints in our hyperDAGs are in fact a special case of a natural generalization of the partitioning problem, where instead of having only a single balance constraint for the whole set $V$, we have separate balance constraints for smaller subsets of nodes.

This problem might be of independent interest in several applications. For instance, as another (more practical) approach to ensure sufficient parallelization in hyperDAGs, one might decide to heuristically decompose the hyperDAG into relatively independent ``regions'' (preferably larger than layers but smaller than the entire graph, such as e.g.\ the sets $G_1$ and $G_2$ in Figure \ref{fig:1constraint_limits}), and enforce a balance constraint on each region separately. To our knowledge, similar multi-constraint problems have only been studied on simple graphs \cite{KV98} or in particular applications \cite{layerwiseDNN} before.

\begin{definition}
In the \emph{multi-constraint partitioning problem}, our input also contains disjoint subsets $V_1, \ldots, V_c \subseteq V$. We say that a partitioning $\mathcal{P}=\{P_1, \ldots, P_k \}$ of $V$ is feasible if it satisfies the balance constraint for all subsets, i.e.\ for all $j \in [c], i \in [k]$, we have $|P_i \cap V_j| \leq (1+\epsilon) \cdot \frac{|V_j|}{k}$.
\end{definition}

The simplest cases to analyze in terms of hardness are the two extremes of $c$. That is, when we only have $c=O(1)$ constraints, then the problem still remains in XP, and a simple reduction allows us to carry over some of the known approximation algorithms on standard partitioning to this multi-constraint case, although in a significantly weaker form. On the other hand, when we have $c \geq n^{\delta}$ for some constant $\delta>0$, the problem becomes significantly harder both in terms of approximability and parameterized complexity.

\begin{lemma} \label{lem:multi_const}
For $c \in O(1)$ constraints, there exists a reduction from multi-constraint bisection to the standard bisection problem (see Appendix \ref{app:multi:approx} for details), and the partitioning problem is still in XP (with respect to $L$).
\end{lemma}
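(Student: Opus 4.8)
The plan is to treat the two assertions separately: first a weight-based reduction that collapses the $c$ balance constraints into a single one, and then a dynamic-programming argument for XP membership.

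For the reduction, the idea is to encode the $c$ constraints \emph{positionally}. Fix a base $B := n+1$ and assign to every node $v \in V_j$ the weight $w(v) := B^{\,j-1}$ (nodes lying in no subset get weight $0$ and play no role). I would then show that a single weighted balance constraint on $\sum_{v\in P_i} w(v)$ is feasible if and only if the original multi-constraint instance is balanced. This follows from the uniqueness of base-$B$ representations: writing $d_j := |P_1\cap V_j| - |P_2\cap V_j|$, exact weight balance means $\sum_{j=1}^{c} B^{\,j-1} d_j = 0$, and since $|d_j| \le |V_j| < B$, this forces $d_j = 0$ for every $j$, i.e.\ every $V_j$ is split evenly (for $\epsilon>0$ the same argument survives, with the per-coordinate slack bounded by the gap between consecutive powers of $B$). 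The resulting instance is a \emph{weighted} bisection; to invoke the known (unweighted) bisection algorithms I would realize each weight $B^{\,j-1}$ by replacing the node with a block of $B^{\,j-1}$ densely interconnected copies, reusing exactly the block gadget of Theorem~\ref{th:main} so that any near-optimal bisection keeps each block monochromatic and the block behaves like a single node of the prescribed weight. The cut cost is preserved up to the fixed block overhead, and the new instance has $N = O(n^{c})$ nodes; hence an $\alpha(\cdot)$-approximation for standard bisection yields an $\alpha(n^{O(c)})$-approximation for the multi-constraint problem. Since $\alpha$ is a function of the instance size, this blow-up is exactly where the guarantee degrades into the \emph{significantly weaker form} (e.g.\ a $\widetilde{O}(\sqrt{N})$ bound becomes $n^{O(c)}$). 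The full details belong in Appendix~\ref{app:multi:approx}.

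For XP membership I would reuse the strategy sketched for the single-constraint case. If a solution has cost at most $L$ (under either metric) then at most $L$ hyperedges are cut, so I first enumerate the set $S$ of cut hyperedges together with the pattern of which parts each of them intersects; there are at most $m^{O(L)}\cdot 2^{O(kL)}$ such guesses, where $m := |E|$. Having fixed $S$, every hyperedge outside $S$ must be monochromatic, which merges its nodes into a single cluster; contracting these clusters turns the remaining task into assigning whole clusters to the $k$ parts subject to (i) the occupancy/separation constraints coming from the guessed intersection patterns of the hyperedges in $S$, and (ii) all $c$ balance constraints. I would solve this by a dynamic program over the clusters whose state records, for each part $i\in[k]$ and each subset $V_j$, how many nodes have already been placed, i.e.\ a $k\times c$ array of integers in $\{0,\dots,n\}$. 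This gives at most $(n+1)^{kc}$ states and $n^{O(kc)}$ running time; since both $k$ and $c$ are $O(1)$ here, this is polynomial, and combined with the $m^{O(L)}$ enumeration the total running time is $n^{f(L)}$, establishing XP.

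The main obstacle is in the reduction rather than the XP part: one must carefully argue the correspondence between optimal bisections of the blown-up instance and feasible multi-constraint solutions, namely that a sufficiently good bisection can always be taken with every copy-block monochromatic, and that the positional argument remains valid once $\epsilon>0$ injects slack into each of the $c$ constraints. The XP direction is comparatively routine; its only subtlety is encoding the guessed intersection patterns of the cut hyperedges faithfully as constraints in the dynamic program, and the key quantitative point is simply that the $k\times c$ load array stays polynomial-sized precisely because $c = O(1)$.
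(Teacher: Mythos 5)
Your XP argument is essentially the paper's own proof of the second claim (enumerate the at most $L$ cut hyperedges together with their allowed color sets, contract the uncut components, and run a dynamic program whose state is the $k\times c$ load array; this matches the proof of Lemma \ref{lem:parameterized} and its multi-constraint extension verbatim), and your reduction follows the same positional idea as the paper's Lemma \ref{lem:O1reduction}, which replaces each node of $V_i$ by a block of size $m_i=n_0\,\!^i$. However, as written your reduction has a concrete gap at the lowest digit. Nodes outside $\bigcup_j V_j$ cannot be given weight $0$ in an unweighted instance --- each such node contributes $1$ to the node-count balance --- so after the block realization they share the digit position $B^0=1$ with the nodes of $V_1$. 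Exact balance then only forces $d_0+d_1=0$ (where $d_0$ is the imbalance of the unconstrained nodes), not $d_1=0$: a bisection of the blown-up instance can place all of $V_1$ on one side and compensate with unconstrained nodes, so it need not correspond to any feasible multi-constraint solution, and conversely a feasible multi-constraint solution may not extend to an exact bisection. The paper avoids both failure modes by starting the constraint blocks at size $n_0\,\!^1$, with $n_0$ strictly exceeding the total number of singleton nodes, and by padding with $(k-1)\cdot|V\setminus\bigcup_i V_i|$ isolated nodes so that the unconstrained nodes can be rebalanced cost-free; your scheme needs the analogous index shift and padding.

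Two further points. First, your parenthetical claim that the positional argument survives for $\epsilon>0$ is false, and the paper explicitly disclaims it: with a single $\epsilon$-relaxed aggregate constraint, the slack is a \emph{global} budget of order $\epsilon\cdot B^{c-1}\cdot|V_c|$, which dwarfs the entire mass $B^{j-1}\cdot|V_j|<B^{j}$ of any low-order constraint, so for instance all of $V_1$ can receive the same color while the aggregate constraint is comfortably satisfied. This does not harm the lemma itself, which only asserts a reduction for bisection ($\epsilon=0$), but the remark should be dropped (the paper restricts Lemma \ref{lem:O1reduction} to $k$-section for exactly this reason). Second, ``reusing exactly the block gadget of Theorem \ref{th:main}'' is insufficient when $|E|=\omega(n)$: a simple block of size $b$ costs only $b-1$ to split, your smallest nontrivial blocks have size $B=n+1$, and reasonable solutions may already cost $\Theta(|E|)\gg n$, so monochromaticity is not enforced. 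Since the block sizes are pinned to $B^{j-1}$ by the weight encoding, you cannot simply enlarge them; the paper instead uses denser blocks (every subset of at least $b-h-2$ nodes as a hyperedge, where $|E|\le n^h$), and your construction needs the same strengthening.
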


\begin{lemma} \label{lem:multi_lin}
If we have $c \geq n^{\delta}$ constraints for some constant $\delta >0$, then no polynomial-time approximation exists for the partitioning problem to any finite factor, and the problem is para-NP-hard (with respect to $L$).
\end{lemma}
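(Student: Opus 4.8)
The plan is to reduce both statements to the single claim that deciding whether a multi-constraint instance with $c \geq n^{\delta}$ admits a feasible partitioning of cost exactly $0$ is NP-hard. Both parts follow from this. For the inapproximability, a polynomial-time algorithm with any finite factor $\alpha$ returns a feasible partitioning of cost at most $\alpha\cdot\mathrm{OPT}$; on an instance with $\mathrm{OPT}=0$ it must output cost $0$, while on an instance with $\mathrm{OPT}\geq 1$ every feasible partitioning (hence the one returned) has cost $\geq 1$, so such an algorithm would decide the cost-$0$ question, which is impossible unless $\mathrm{P}=\mathrm{NP}$. (A feasible partitioning always exists, e.g.\ by balancing each $V_j$ in round-robin fashion, so $\mathrm{OPT}$ is finite and the algorithm always has an output.) For para-NP-hardness, the cost-$0$ question is exactly the decision problem with the parameter fixed to the constant $L=0$, so NP-hardness for $L=0$ is para-NP-hardness with respect to $L$. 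The whole argument is metric-independent, since a partitioning has cost $0$ under either the cut-net or connectivity metric precisely when no hyperedge is cut, i.e.\ when every connected component is monochromatic.

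I would build the instance out of \emph{blocks}: connected sub-hypergraphs (e.g.\ a single hyperedge spanning all their nodes) that are forced to be monochromatic at cost $0$. The cost-$0$ question then becomes the question of assigning one colour to each block subject to the balance constraints, and the goal is to make this assignment encode an NP-hard colouring problem. The crucial difficulty is that the subsets $V_1,\dots,V_c$ must be \emph{disjoint}, so a single block cannot be placed into several constraints directly. I resolve this with a \emph{port} gadget: a block participating in $d$ constraints is given $d$ disjoint groups of nodes (``ports''), one in each relevant $V_j$; since the block is monochromatic, all its ports carry the same colour. On top of this I add $k$ \emph{colour blocks} $K_1,\dots,K_k$ forced to take pairwise distinct colours, so that after relabelling $K_i$ has colour $i$, and their ports then act as \emph{filler} of known colour inside the constraint groups.

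With these ingredients I would reduce, for $k\geq 3$, from graph $k$-colouring: each vertex $u$ becomes a block $C_u$ and each edge $\{u,w\}$ becomes a group $V_{uw}$ holding one port of $C_u$, one of $C_w$, and a calibrated amount of coloured filler, chosen so that the balance constraint of $V_{uw}$ is violated exactly when the two ports share a colour. For $k=2$, where proper $2$-colouring is trivial, I would instead reduce from monotone NAE-$3$SAT (equivalently, set splitting with $3$-element sets): each element is a block and each triple becomes a group of three ports plus filler, calibrated so that the group is balanced iff its three ports are not all equal. In both cases a cost-$0$ feasible partitioning exists iff the source instance is a yes-instance. Finally, since the groups are nonempty and disjoint we always have $c\le n$, and the reduction uses one group per edge/triple, so $c=\Theta(n)$; these instances therefore satisfy $c\geq n^{\delta}$ for every $\delta\le 1$ and lie squarely in the hard regime, in sharp contrast to the $c=O(1)$ case of Lemma~\ref{lem:multi_const}.

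The main obstacle I expect is the quantitative calibration of the filler sizes so that the balance thresholds cleanly separate the ``good'' and ``bad'' colour configurations for the fixed value of $\epsilon$, and in particular forcing the colour blocks $K_1,\dots,K_k$ to realise all $k$ colours for every admissible $\epsilon<k-1$: the naive size-$k$ distinctness group only works while $\epsilon<1$, and a larger $\epsilon$ loosens the per-colour budget and requires scaling up the block sizes and replicating gadgets. This is precisely the block-sizing argument already carried out for the closely related layer-wise construction behind Theorem~\ref{th:layers} (of which the multi-constraint problem is a strict generalisation, since arbitrary disjoint groups are more flexible than DAG layers), so I would relegate the detailed arithmetic to the appendix.
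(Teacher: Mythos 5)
Your proposal is correct in its overall architecture and matches the paper's meta-argument exactly: the paper likewise reduces everything to the NP-hardness of deciding whether a cost-$0$ multi-constraint partitioning exists, and derives both the finite-factor inapproximability and para-NP-hardness from that single fact. Your reductions, however, take a genuinely different route. The paper gives one reduction from $3$-coloring that works natively at $k=2$: for every vertex $v$ and graph-color $i\in[3]$ it builds a gadget forced monochromatic by a hyperedge (nodes $w_{v,e,i}$ for each incident edge $e$, plus $\hat w_{v,i,1},\hat w_{v,i,2}$), so that ``$v$ gets graph-color $i$'' is encoded by the gadget being \emph{red}; two per-vertex constraints on the disjoint copies $\hat w_{v,i,1}$ and $\hat w_{v,i,2}$ force at most one and at least one red indicator (exactly your port trick, used for the same disjointness reason), and per-edge-per-color constraints of the form ``at most one of two nodes red'' enforce properness; the lift to $k\ge 3$ is then done generically via the machinery of Appendix~\ref{app:sec:morecolors}. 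You instead let the partition colors themselves play the graph colors (reduction from $k$-coloring) for $k\ge 3$, and switch to monotone NAE-$3$SAT for $k=2$; both source problems are legitimately NP-hard, and your blocks/ports/colour-blocks are the same machinery as the paper's fixed blocks plus Lemma~\ref{lem:enforce}. One quantitative difference: the paper pads with isolated nodes to $\hat n=n^{3/\delta}$ so that hardness holds even when the number of constraints is only about $\hat n^{\delta}$, whereas your instances have $c=\Theta(n)$ — sufficient for the lemma as literally stated (since $c\le n$ always, only $\delta\le 1$ is non-vacuous), but silent about the sparse-constraint regime that contrasts with Theorem~\ref{th:OVP_quadratic}.

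One wrinkle you should repair in the $k=2$, $\epsilon=0$ case: under exact balance, a group consisting of three equal-size ports plus \emph{fixed-color} filler pins the number of red ports to exactly one value (the red count is strictly monotone in it), so a single constraint cannot express the two-sided NAE condition ``red ports $\in\{1,2\}$'' — and since the ports cannot be reused in a second constraint, you cannot split the condition either. The fix is the paper's own remark following Lemma~\ref{lem:enforce}: include \emph{free} (isolated) slack nodes inside the group. For instance, with single-node ports, equal red and blue filler of size $f$, and one slack node, the group of $2f+4$ nodes can be split exactly in half iff the number of red ports is $1$ or $2$, which is precisely NAE. The paper's encoding sidesteps this issue structurally, because its $3$-coloring gadgets only ever need one-sided ``at most''/``at least'' conditions, each living in its own constraint, which the exact-balance version of Lemma~\ref{lem:enforce} handles directly; your calibration remark correctly anticipates the arithmetic but not this specific two-sidedness obstruction.
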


Between the two cases when $O(1)<c<n^{\delta}$, the question is not so straightforward. However, with a stronger complexity assumption (the Strong Exponential Time Hypothesis, SETH), we can also show a hardness result here for any algorithm running in subquadratic time. This is indeed a relevant observation, since having quadratic running time is often already prohibitive in practice.

\begin{theorem} \label{th:OVP_quadratic}
For multi-constraint partitioning with $c=\omega(\log n)$, no finite factor approximation algorithm is possible in subquadratic time (i.e.\ $n^{2-\delta}$ for some $\delta>0$) if SETH holds.
\end{theorem}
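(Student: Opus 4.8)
The plan is to reduce from the Orthogonal Vectors problem (OV), the canonical source of SETH-based subquadratic lower bounds: given sets $A=\{a_1,\dots,a_N\}$ and $B=\{b_1,\dots,b_N\}$ of vectors in $\{0,1\}^d$, decide whether some pair $(a_i,b_j)$ is orthogonal (shares no coordinate in which both are $1$). Under SETH this cannot be solved in $O(N^{2-\delta})$ time for any $\delta>0$, already when $d=\omega(\log N)$. Since a finite-factor approximation must in particular distinguish optimum cost $0$ from cost $>0$, it suffices to build, in near-linear time, a multi-constraint instance of size $n=\tilde O(N)$ in which a cost-$0$ partitioning exists if and only if $A,B$ contain an orthogonal pair. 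Here I use that a partitioning has cost $0$ exactly when every connected component (in the primal graph) is monochromatic, so the task collapses to $2$-colouring a collection of rigid monochromatic components subject to the balance constraints. As in the block construction of Theorem~\ref{th:main}, I would include two large reference blocks forced to opposite colours and \emph{peg} auxiliary ``filler'' nodes to a fixed colour by hyperedge-connecting them to the appropriate reference block, so that their colour is known in any cost-$0$ solution.

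Second, I introduce one component $\hat a_i$ per vector of $A$ and one component $\hat b_j$ per vector of $B$, with the convention that red means ``selected''. The core gadget has to force \emph{exactly one} red component on each side, and this single-index selection is where I expect the main difficulty to lie: for a fixed $\epsilon>0$ the two-sided window that a single balance constraint imposes on the red count of an $N$-node subset has width $\Omega(\epsilon N)$, so one constraint can never pin that count to the single value $1$. The resolution I plan to use is to impose the upper and the lower bound on \emph{separate} subsets. A subset $V^{\mathrm{up}}_A$ holds one node of each $\hat a_i$ plus red-pegged filler, sized so that balance forces the red count $\le 1$; an independent subset $V^{\mathrm{lo}}_A$ holds a second node of each $\hat a_i$ plus blue-pegged filler, sized so that balance forbids the all-blue case and forces red count $\ge 1$. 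Because the two thresholds now live on subsets with independently chosen sizes and fillers, each one merely has to place the quantity $(1+\epsilon)\,|V|/2$ into a fixed width-$1$ window, which integer padding achieves for any fixed $\epsilon<1$; together they pin the red count to exactly $1$. The same device on the $B$-side forces a unique red $\hat b_{j^*}$.

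Third, I add one balance constraint per coordinate, giving $c=d+O(1)=\omega(\log N)=\omega(\log n)$. For coordinate $t$, the subset $V_t$ contains one node of each $\hat a_i$ with $a_i[t]=1$ and one node of each $\hat b_j$ with $b_j[t]=1$, together with pegged filler chosen so that $V_t$ is balanced iff the red count among these vector-nodes is at most $1$. Under the forced unique selection this red count equals $a_{i^*}[t]+b_{j^*}[t]\in\{0,1,2\}$, so $V_t$ is balanced iff the chosen pair is not simultaneously $1$ in coordinate $t$; crucially only the upper threshold needs to be tight while the lower side stays slack, so here a single constraint suffices. Consequently a feasible cost-$0$ colouring exists iff there is a pair avoiding a common $1$ in every coordinate, i.e.\ an orthogonal pair.

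Finally I would verify the fine-grained accounting. Each selection subset has $O(N)$ nodes and each $V_t$ has $O(N)$ nodes, so $n=O(Nd)=\tilde O(N)$ and the whole instance is produced in near-linear time; hence a finite-factor approximation running in $n^{2-\delta}$ time would decide OV in $\tilde O(N)^{2-\delta}=N^{2-\delta'}$ time, contradicting SETH. The routine work I am skipping is the exact choice of filler and subset sizes realising each width-$1$ threshold for a given fixed $\epsilon$ (including integrality), and the lift from $k=2$ to general $k\ge 2$, obtained by adding $k-2$ further reference blocks pinned to the remaining colours and pegging all filler so that colours $3,\dots,k$ cannot be used to cheat the balance of the selection and coordinate subsets. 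The genuinely delicate point, and the one I would write out in full, is the two-subset upper/lower-threshold trick that recovers exact single-index selection from ratio-based balance constraints.
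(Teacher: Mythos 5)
Your proposal is correct and takes essentially the same route as the paper: the paper also reduces from Orthogonal Vectors under SETH, uses one balance constraint per coordinate to force at most one red node per dimension, realizes the threshold constraints via fixed-colour blocks and filler nodes (its Lemma~\ref{lem:enforce}), and closes with the same ``finite-factor approximation must distinguish cost $0$'' gap argument and the same $n=\Theta(mD)$, $c=D+O(1)=\omega(\log n)$ accounting. The only cosmetic difference is that the paper uses the single-set OV variant with one lower-bound constraint (``at least two anchor nodes red,'' extra red anchors being harmless), whereas your bipartite exactly-one-per-side selection needs the two-subset upper/lower-threshold trick --- which is exactly Lemma~\ref{lem:enforce} applied once in each direction (as the paper itself does in the proof of Theorem~\ref{th:layers}), so the step you flag as delicate is already covered by the paper's standard machinery.
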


\begin{proof}
The proof uses a reduction from the so-called Orthogonal Vectors Problem (OVP): given a set of $m$ binary vectors $a_1, ..., a_m$, the goal is to decide whether any two of these vectors are orthogonal (i.e.\ their dot product is $0$). It is known that for vectors of dimension $D=\omega(\log{m})$, this cannot be decided in $O(m^{2-\delta})$ time unless SETH is falsified \cite{OVPhard}.

The main idea of the construction is to have a separate node $v_i\,\!^{(j)}$ representing the $j$-th dimension of the $i$-th vector for each $i_{\!} \in_{\!} [m]$, $j_{\!} \in_{\!} [D]$, and a further ``anchor'' node $u_i$ for each vector. For each fixed $i_{0\!} \in_{\!} [m]$, we add a hyperedge containing the node $u_{i_0}$, and all nodes $v_{i_0}\,\!^{(j)}$ such that the $j$-th coordinate of $a_i$ is $1$.

Then through a series of technical steps, we create balance constraints that fulfill the following properties. Firstly, for each fixed $j_{0\!} \in_{\!} [D]$, we add a dimension-wise balance constraint which ensures that at most one of the nodes $v_i\,\!^{(j_0)}$ can be red. Furthermore, we also add a single balance constraint on the anchor nodes, ensuring that at least two of the nodes $u_i$ need to be red.

Assume we want to find a valid multi-constraint partitioning of cost $0$ in the resulting construction. This requires us to color two of the anchor nodes $u_{i_1}$ and $u_{i_2}$ red; however, then for all entries that are $1$ in the chosen vectors, the corresponding nodes $v_i\,\!^{(j)}$ must also be red. However, such a solution can only satisfy the dimension-wise balance constraints if there is no dimension $j$ where both vectors have an entry of $1$, i.e.\ if they are orthogonal.
\end{proof}

\section{Hierarchical cost function} \label{sec:hier}

The simplicity of the partitioning problem makes it a very popular model to analyze the parallel execution of computations. On the other hand, due to this simplicity, the model cannot capture one of the most prominent characteristics of modern computing architectures, namely \emph{non-uniform memory access}: transferring data between different pairs of processing units can have very different costs in practice. This is usually due to the hierarchical structure of these architectures: we often have several cores within the same processor, several processors attached to the same RAM, and then several of these computers connected over a network. Modern architectures even expose such hierarchical structure within single processors. In such an architecture, the communication cost between two cores heavily depends on the highest level of the hierarchy that the data has to cross: sending data between two cores on the same processor is a relatively fast operation, while sending data to another core through the top-level network connection is drastically slower.

As such, it is a natural goal to extend our analysis of partitioning problems to such a hierarchical setting. Formally, we will model these architectures by a rooted tree of depth $d$, with the leaves of the tree corresponding to the compute units. We assume that each level of this tree has a fixed branching factor $b_i$, i.e.\ every node on the $i$-th level (from the top) has exactly $b_i$ children; this implies that we partition our hypergraph into $k=\prod_{i=1}^{d} \, b_i$ sets. Furthermore, assume we have a set of constant cost parameters $g_1, \ldots, g_{d}$ such that if two computing units have their lowest common ancestor in level $i$ of the tree, then transferring a variable between the parts has a cost of $g_i$ (as illustrated in Figure \ref{fig:hierarch}). We assume the $g_i$ are monotonically decreasing, and for simplicity, we normalize them to ensure $g_d=1$.

\begin{figure}
	\centering
    \vspace{10pt}
    \begin{tikzpicture}
	
	\draw[thick] (0pt,0pt) -- (11pt,20pt) -- (22pt,0pt);
	\draw[thick] (44pt,0pt) -- (55pt,20pt) -- (66pt,0pt);
	\draw[thick] (88pt,0pt) -- (99pt,20pt) -- (110pt,0pt);
	\draw[thick] (132pt,0pt) -- (143pt,20pt) -- (154pt,0pt);
	
	\draw[thick] (11pt,20pt) -- (33pt,40pt) -- (55pt,20pt);
	\draw[thick] (99pt,20pt) -- (121pt,40pt) -- (143pt,20pt);
	
	\draw[thick] (33pt,40pt) -- (77pt,60pt) -- (121pt,40pt);
	
	
	\draw[ultra thick, gray, arrows=-stealth] (44pt,0pt) arc (-110:-75:32pt);
	\draw[ultra thick, gray, arrows=-stealth] (44pt,0pt) arc (-40:-125:16pt);
	\draw[ultra thick, gray, arrows=-stealth] (44pt,0pt) arc (-173:-16:22pt);
	
	\node[anchor=center] at (55pt,-7.5pt) {$\boldsymbol{1}$};
	\node[anchor=center] at (34pt,-12.5pt) {$\boldsymbol{g_2}$};
	\node[anchor=center] at (66pt,-25.5pt) {$\boldsymbol{g_1}$};
	
	
	\draw[black, fill=white] (0pt,0pt) circle (0.7ex);
	\draw[black, fill=white] (22pt,0pt) circle (0.7ex);
	\draw[black, fill=white] (44pt,0pt) circle (0.7ex);
	\draw[black, fill=white] (66pt,0pt) circle (0.7ex);
	\draw[black, fill=white] (88pt,0pt) circle (0.7ex);
	\draw[black, fill=white] (110pt,0pt) circle (0.7ex);
	\draw[black, fill=white] (132pt,0pt) circle (0.7ex);
	\draw[black, fill=white] (154pt,0pt) circle (0.7ex);
	
	\draw[black, fill=white] (11pt,20pt) circle (0.4ex);
	\draw[black, fill=white] (55pt,20pt) circle (0.4ex);
	\draw[black, fill=white] (99pt,20pt) circle (0.4ex);
	\draw[black, fill=white] (143pt,20pt) circle (0.4ex);
	
	\draw[black, fill=white] (33pt,40pt) circle (0.4ex);
	\draw[black, fill=white] (121pt,40pt) circle (0.4ex);
	
	\draw[black, fill=white] (77pt,60pt) circle (0.4ex);

\end{tikzpicture}
	\caption{Illustration of communication costs in the hierarchical setting: the cost of transferring a variable depends on the level of the hierarchy that the data has to cross.}
	\label{fig:hierarch}
\end{figure}
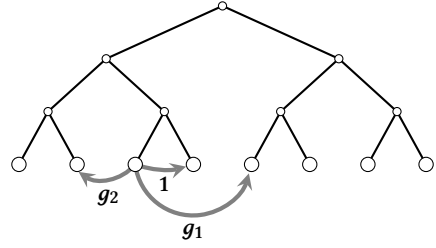

\begin{definition}
In the \emph{hierarchical partitioning problem}, we have constant parameters $b_1, \ldots, b_d$ such that $\prod_{i=1}^d b_i = k$. Our goal is to partition $V$ into $k$ sets $P_1\!^{\,(d)}, \ldots, P_k\!^{\,(d)}$ such that $|P_j\!^{\,(d)}|\leq (1+\epsilon) \cdot \frac{n}{k}$ for all $j \in [k]$. However, we now also need to organize these parts into a hierarchy, i.e.\ for all $i \in \{ 2, \ldots, d \}$, we partition the classes $P_j\!^{\,(i)}$ into $\prod_{\ell=1}^{i
	-1} \, b_{\ell}$ distinct sets of size $b_i$ each to form the $(i-1)$-th level parts $P_j\!^{\,(i-1)}$. For a hyperedge $e$, let $\lambda_e\!^{(i)}$ denote the number of $i$-th level parts that $e$ intersects (and for simplicity, let $\lambda_e\!^{(0)}:=1$). The cost induced by $e$ is the defined as
\[ \sum_{i=1}^{d} \, g_i \cdot (\lambda_e\!^{(i)} - \lambda_e\!^{(i-1)}) \, , \]
and the total cost of a partitioning is again the sum of this cost over all hyperedges $e \in E$.
\end{definition}

For example, if $e$ intersects all the $k=4$ parts in a $2$-level hierarchy with $b_1=b_2=2$, then regardless of which part the variable is stored in, the cost of transferring it to the other three parts is $g_1 + 2 \cdot g_2 = g_1 + 2$: we need to move the variable once over the top level, and twice over the bottom level of the hierarchy. The formula indeed equals to this for $\lambda_e\!^{(1)}=2$ and $\lambda_e\!^{(2)}=4$. Note that the standard partitioning problem is obtained as a special case of this setting when our hierarchy has depth $d=1$.

This hierarchical cost function results in a more complex version of the partitioning problem, where the role of different parts is not symmetric anymore. We briefly discuss some key properties of this more realistic model below, with the technical details deferred to Appendix \ref{app:hier}. Besides this, Appendix \ref{app:gen} also provides a brief discussion of two more questions that arise naturally regarding our results on this hierarchical model: (i) how they carry over to hyperDAGs and/or the multi-constraint setting from previous sections, and (ii) how they can be generalized to cost functions that are inspired not by a tree, but by a different (arbitrary) processor topology.

\subsection{Recursive approach}

\begin{figure*}
	\centering
    \vspace{10pt}
	\begin{tikzpicture}
	
	\draw[thick]
	(40pt,20pt) --
	(20pt,65pt) -- (40pt,110pt);
	
	\draw[thick] (81pt,15pt) -- (96pt,33pt) --
	(105pt,53pt) --
	(105pt,77pt) --
	(96pt,97pt) --
	(81pt,115pt);
	
	\draw[fill=white] (10pt,55pt) -- (10pt,75pt) -- (30pt,75pt) -- (30pt,55pt) -- cycle;
	
	\draw[fill=white] (30pt,10pt) -- (30pt,30pt) -- (50pt,30pt) -- (50pt,10pt) -- cycle;
	
	\draw[fill=white] (30pt,100pt) -- (30pt,120pt) -- (50pt,120pt) -- (50pt,100pt) -- cycle;
	
	\draw[fill=white] (76pt,10pt) -- (76pt,20pt) -- (86pt,20pt) -- (86pt,10pt) -- cycle;
	
	\draw[fill=white] (91pt,28pt) -- (91pt,38pt) -- (101pt,38pt) -- (101pt,28pt) -- cycle;
	
	\draw[fill=white] (100pt,48pt) -- (100pt,58pt) -- (110pt,58pt) -- (110pt,48pt) -- cycle;
	
	\draw[fill=white] (100pt,72pt) -- (100pt,82pt) -- (110pt,82pt) -- (110pt,72pt) -- cycle;
	
	\draw[fill=white] (91pt,92pt) -- (91pt,102pt) -- (101pt,102pt) -- (101pt,92pt) -- cycle;
	
	\draw[fill=white] (76pt,110pt) -- (76pt,120pt) -- (86pt,120pt) -- (86pt,110pt) -- cycle;


	\draw[thick]
	(245pt,20pt) --
	(225pt,65pt) -- (245pt,110pt);
	
	\draw[thick] (286pt,15pt) -- (301pt,33pt) --
	(310pt,53pt) --
	(310pt,77pt) --
	(301pt,97pt) --
	(286pt,115pt);
	
	\draw[fill=white] (215pt,55pt) -- (215pt,75pt) -- (235pt,75pt) -- (235pt,55pt) -- cycle;
	
	\draw[fill=white] (235pt,10pt) -- (235pt,30pt) -- (255pt,30pt) -- (255pt,10pt) -- cycle;
	
	\draw[fill=white] (235pt,100pt) -- (235pt,120pt) -- (255pt,120pt) -- (255pt,100pt) -- cycle;
	
	\draw[fill=white] (281pt,10pt) -- (281pt,20pt) -- (291pt,20pt) -- (291pt,10pt) -- cycle;
	
	\draw[fill=white] (296pt,28pt) -- (296pt,38pt) -- (306pt,38pt) -- (306pt,28pt) -- cycle;
	
	\draw[fill=white] (305pt,48pt) -- (305pt,58pt) -- (315pt,58pt) -- (315pt,48pt) -- cycle;
	
	\draw[fill=white] (305pt,72pt) -- (305pt,82pt) -- (315pt,82pt) -- (315pt,72pt) -- cycle;
	
	\draw[fill=white] (296pt,92pt) -- (296pt,102pt) -- (306pt,102pt) -- (306pt,92pt) -- cycle;
	
	\draw[fill=white] (281pt,110pt) -- (281pt,120pt) -- (291pt,120pt) -- (291pt,110pt) -- cycle;
	
	
	\draw[gray, ultra thick, dashed] (63pt,10pt) -- (63pt,120pt);
	
	\draw[gray, ultra thick, dashed] (5pt,65pt) -- (50pt,65pt);
	
	\draw[gray, ultra thick, dashed] (76pt,65pt) -- (121pt,65pt);
	
	\begin{scope}[rounded corners = 5pt, dashed]
    \draw[gray, ultra thick] (230pt,125pt) -- (296pt,125pt) -- (296pt,109pt) -- (252pt,95pt) -- (230pt,95pt) -- cycle;
    
    \draw[gray, ultra thick] (230pt,5pt) -- (296pt,5pt) -- (296pt,21pt) -- (252pt,35pt) -- (230pt,35pt) -- cycle;
    
    \draw[gray, ultra thick] (210pt,80pt) -- (238pt,80pt) -- (296pt,105pt) -- (310pt,105pt) -- (310pt,89pt) -- (301pt,89pt) -- (240pt,50pt) -- (210pt,50pt) -- cycle;
    
    \draw[gray, ultra thick] (302pt,85pt) -- (319pt,85pt) -- (319pt,45pt) -- (309pt,24pt) -- (292pt,24pt) -- (292pt,38pt) -- cycle;
    \end{scope}

\end{tikzpicture}
	\caption{Construction for Lemma \ref{lem:recurse}. Large and small squares correspond to blocks of size $\frac{n}{6}$ and $\frac{n}{12}$, respectively. Recursive bipartitioning (left side) first makes an optimal split of cost $0$ along the vertical axis; however, in the next recursive step, it needs to split one of the blocks to fulfill the balance constraint, resulting in a cost of $\Theta(n)$. On the other hand, direct $k$-way partitioning (right side) can provide a solution of cost $O(1)$ only.}
	\label{fig:recursive}
\end{figure*}
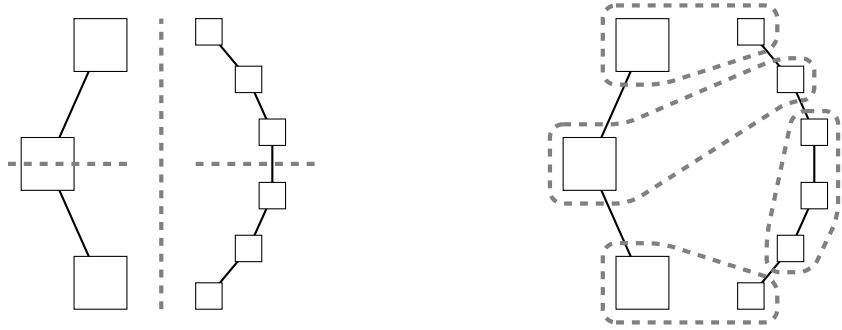

A natural solution idea for partitioning is to recursively split $G$ into smaller and smaller parts. Even in the regular $k$-way partitioning problem, such a recursive approach is very commonly used in heuristics: we can repeatedly split each part into two further parts, until the number of parts reaches $k$.

This recursive method also provides a natural solution approach for our hierarchical partitioning problem: we can first try to split $G$ cleverly into $b_{1}$ parts, then split each of these into $b_{2}$ further parts, and so on, forming the entire hierarchy of $k$ parts in such a recursive way. Moreover, the approach is even more intuitive in this hierarchical case: since the cuts on the highest level induce a much larger cost, it seems reasonable to first minimize the number of these cuts, and only then move on to the lower levels.

It has already been observed before that such a recursive approach is not always optimal \cite{recursive1, recursive2}; however, we can show in a simple example that it can even be a linear factor away from the optimum cost.

\begin{lemma} \label{lem:recurse}
The solution returned by recursive partitioning can be a factor $\Theta(n)$ off the optimum cost, both for regular and hierarchical partitioning, even if each of the recursive steps is optimal separately.
\end{lemma}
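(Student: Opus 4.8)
The plan is to exhibit a single explicit instance, illustrated in Figure~\ref{fig:recursive}, on which recursive bipartitioning is forced into a bad cut while a direct $k$-way partitioning achieves essentially zero cost. I would work with $k=4$ (two levels, $b_1=b_2=2$) and a balance parameter $\epsilon$ small enough (say $\epsilon < \tfrac{1}{6}$) that each part must have size at most essentially $\tfrac{n}{4}$. The construction uses the block gadgets from Theorem~\ref{th:main}: a block is a group of nodes so densely interconnected by hyperedges that any partitioning splitting it pays $\Theta(n)$. I would place six blocks, arranged so that there is a ``clean'' way to cut $V$ into two halves of $\tfrac{n}{2}$ each along one axis (cost $0$, every block monochromatic), but any such first cut forces a subsequent imbalance: to reach the $k=4$ parts, the second level must split one of the large blocks, because the blocks do not subdivide evenly into quarters of size $\tfrac{n}{4}$. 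The sizes $\tfrac{n}{6}$ and $\tfrac{n}{12}$ in the figure are chosen precisely so that the vertical split is balanced and cost-free, yet no horizontal refinement of both halves into two balanced quarters avoids cutting a block.

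The argument then splits into two halves. For the \emph{lower bound on recursive partitioning}, I would argue that the optimal first-level bipartition has cost $0$ (the vertical split), so an optimal recursive step selects it (or a symmetric cost-$0$ split); but then in the second recursive step, each half of size $\tfrac{n}{2}$ must be divided into two balanced parts of size $\approx\tfrac{n}{4}$, and by the chosen block sizes every such balanced division must cut through at least one block, incurring cost $\Theta(n)$. The phrase ``even if each of the recursive steps is optimal separately'' is handled by noting that the cost-$0$ first cut \emph{is} the optimum for the first level, so a greedy-optimal recursive algorithm is genuinely trapped. For the \emph{upper bound on direct partitioning}, I would exhibit an explicit $4$-way partitioning (the right side of Figure~\ref{fig:recursive}) in which every block is monochromatic and only a constant number of hyperedges connecting the blocks are cut, giving total cost $O(1)$; combining the two halves yields the claimed factor $\Theta(n)$ gap. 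The same blocks and sizes work verbatim for regular $k$-way partitioning by taking all $g_i$ equal, which establishes both the ``regular'' and ``hierarchical'' clauses at once.

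The main obstacle I expect is arithmetic calibration of the block sizes and $\epsilon$: I must choose the sizes so that (i) a perfectly balanced cost-$0$ first cut exists, (ii) every balanced second-level refinement of that first cut is forced to cut a block, and (iii) simultaneously a direct cost-$O(1)$ quartering exists that respects all size constraints. These three requirements pull in slightly different directions, and verifying that no clever rebalancing of the recursive second step escapes the block-cut requires checking the modular arithmetic of the part sizes against the block sizes. A secondary technical point is confirming that the block gadget genuinely enforces monochromaticity at cost $\Theta(n)$ when split even slightly; this follows from reusing the block construction of Theorem~\ref{th:main}, so I would simply invoke its established property rather than re-deriving it. Once the sizes are pinned down, the two cost bounds are immediate from the monochromaticity of blocks and a count of the constant number of inter-block hyperedges.
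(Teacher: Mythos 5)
Your proposal is correct and follows essentially the same route as the paper's proof: the identical construction from Figure~\ref{fig:recursive} with block sizes $\frac{n}{6}$ and $\frac{n}{12}$, the unique cost-$0$ vertical first cut that traps the recursive second step into splitting a block, the explicit direct $4$-way partitioning of cost $O(1)$, and the observation that constant $g_i$ settle the regular and hierarchical cases simultaneously. The only slip is the block count --- the construction has nine blocks (three large of size $\frac{n}{6}$ and six small of size $\frac{n}{12}$, so that the pairing into quarters of size $\frac{n}{4}$ works out), not six --- but this does not affect the argument.
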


\begin{proof}
Let $b_1\!=\!b_2\!=\!2$, and consider the hypergraph sketched in Figure \ref{fig:recursive}, consisting of $9$ densely connected blocks and only a few hyperedges between these blocks. Assume that the larger blocks in the figure each consist of $\frac{n}{6}$ nodes, while the smaller blocks each consist of $\frac{n}{12}$ nodes. 

In an optimal recursive partitioning, the first step will split this hypergraph into two parts of equal size along the vertical axis, without cutting any hyperedges (see the left side of the figure). However, in the next step, the recursive approach needs to split both sides into two further parts; with a small enough $\epsilon$ in the balance constraint, this forces us to split one of the densely connected blocks on the left side, resulting in a cost of $\Theta(n)$.

In contrast to this, there exists a direct $4$-way partitioning of the hypergraph where only $O(1)$ hyperedges are cut (right side of the figure). This solution has a factor $\Theta(n)$ smaller cost than the recursive solution, not only under the regular (cut-net or connectivity) cost metrics, but also according to our hierarchical cost function, since the coefficients $g_i$ are constants.
\end{proof}

\subsection{Hierarchy-agnostic partitioning}

Another natural idea in this setting is to apply a regular partitioning algorithm that does not consider the underlying processor hierarchy at all. More specifically, given an input hypergraph, we can use the following \textit{two-step method} to obtain a hierarchical partitioning:
\vspace{1.5pt}
\begin{enumerate}[label=(\roman*),leftmargin=15pt, itemsep=4pt]
    \setlength{\itemsep}{1pt}
    \setlength{\parskip}{2pt}
    \item first find a good regular $k$-way partitioning of the hypergraph,
    \item then assign these $k$ parts to the $k$ leaf positions in the hierarchy in a clever way.
\end{enumerate}
\vspace{3pt}
For the analysis of this two-step method, we will assume that both steps happen optimally: we first find an optimal (regular) partitioning of the hypergraph, and then we also assign the $k$ parts to hierarchy positions in an optimal way. This allows us to study a fundamental question: what happens if we have a good partitioning algorithm, but we disregard the hierarchical nature of modern computing architectures in the partitioning step?

On the one hand, it is easy to show that the optimal solution with this two-step method is at most a factor $g_{1}$ worse than the true optimum for hierarchical cost. Intuitively,
this is because an optimal algorithm for standard partitioning can only misjudge the real (hierarchical) cost of each hyperedge by a $\frac{g_{1}}{g_d}=g_{1}$ factor.

\begin{figure*}
	\centering
    \vspace{10pt}
	\begin{tikzpicture}

    \draw[white] (-30pt,0pt) (-30pt,5pt);

    \begin{scope}[lightgray, ultra thick]
    
    \draw (0pt,0pt) rectangle (50pt,50pt);
    \draw (0pt,75pt) rectangle (50pt,125pt);
    \draw (75pt,0pt) rectangle (125pt,50pt);
    \draw (75pt,75pt) rectangle (125pt,125pt);

    \draw (230pt,0pt) rectangle (280pt,50pt);
    \draw (230pt,75pt) rectangle (280pt,125pt);
    \draw (305pt,0pt) rectangle (355pt,50pt);
    \draw (305pt,75pt) rectangle (355pt,125pt);
    
    \end{scope}

    \begin{scope}[very thick]

    \draw (25pt,46pt) -- (25pt,79pt);
    \draw (46pt,34pt) -- (100pt,79pt);
    \draw (46pt,10pt) -- (79pt,10pt);

    \draw (276pt,35pt) -- (309pt,40pt);
    \draw (276pt,25pt) -- (309pt,25pt);
    \draw (276pt,15pt) -- (309pt,10pt);
    
    \end{scope}

    \begin{scope}[thick]

    \draw (4pt,4pt) rectangle (46pt,46pt);
    \node[anchor=center] at (25pt,25pt) {\large $A$};

    \draw (4pt,79pt) rectangle (46pt,91pt);
    \draw (79pt,4pt) rectangle (121pt,16pt);
    \draw (79pt,79pt) rectangle (121pt,91pt);
    \node[anchor=center] at (25pt,85pt) {\large $B_1$};
    \node[anchor=center] at (100pt,85pt) {\large $B_2$};
    \node[anchor=center] at (100pt,10pt) {\large $B_3$};

    \draw (234pt,4pt) rectangle (276pt,46pt);
    \node[anchor=center] at (255pt,25pt) {\large $A$};

    \draw (309pt,4pt) rectangle (351pt,16pt);
    \draw (309pt,19pt) rectangle (351pt,31pt);
    \draw (309pt,34pt) rectangle (351pt,46pt);
    \node[anchor=center] at (330pt,40pt) {\large $B_1$};
    \node[anchor=center] at (330pt,25pt) {\large $B_2$};
    \node[anchor=center] at (330pt,10pt) {\large $B_3$};

    \end{scope}

    \begin{scope}[gray, arrows=-Stealth, dashed]

    \draw (154pt,135pt) -- (25pt,109pt);
    \draw (160pt,135pt) -- (100pt,109pt);
    \draw (164pt,135pt) -- (100pt,34pt);

    \draw (368pt,135pt) -- (255pt,100pt);
    \draw (372pt,135pt) -- (330pt,100pt);
    
    \end{scope}

    \node[anchor=center] at (160pt,151pt) {\textit{remaining}};
    \node[anchor=center] at (160pt,142pt) {\textit{blocks}};

    \node[anchor=center] at (370pt,151pt) {\textit{remaining}};
    \node[anchor=center] at (370pt,142pt) {\textit{blocks}};

    \node[anchor=center] at (62.5pt,-24pt) {\textit{optimum for regular partitioning}};
    \node[anchor=center] at (62.5pt,-35pt) {\textit{(but high hierarchical cost)}};

    \node[anchor=center] at (292.5pt,-24pt) {\textit{suboptimal for regular partitioning}};
    \node[anchor=center] at (292.5pt,-35pt) {\textit{(but low hierarchical cost)}};

\end{tikzpicture}
	\caption{Overview of the construction of Theorem \ref{th:two_step_bad} for $k=4$. Block $A$ is connected by a large number of hyperedges to blocks $B_1$, $B_2$ and $B_3$ (we will refer to these as $A \leftrightarrow B_i$ hyperedges). In the regular optimum, the blocks $B_i$ are each placed in a separate part (since the $A \leftrightarrow B_i$ hyperedges are cut anyway, and the $B_i$ have connections to other blocks not shown in the figure). This results in a large hierarchical cost: many of the $A \leftrightarrow B_i$ hyperedges induce a cost of $g_1$. On the other hand, if we place all $B_i$ in the same part, then the $A \leftrightarrow B_i$ hyperedges all induce a cost of $g_2$ only.}
	\label{fig:twostep}
\end{figure*}
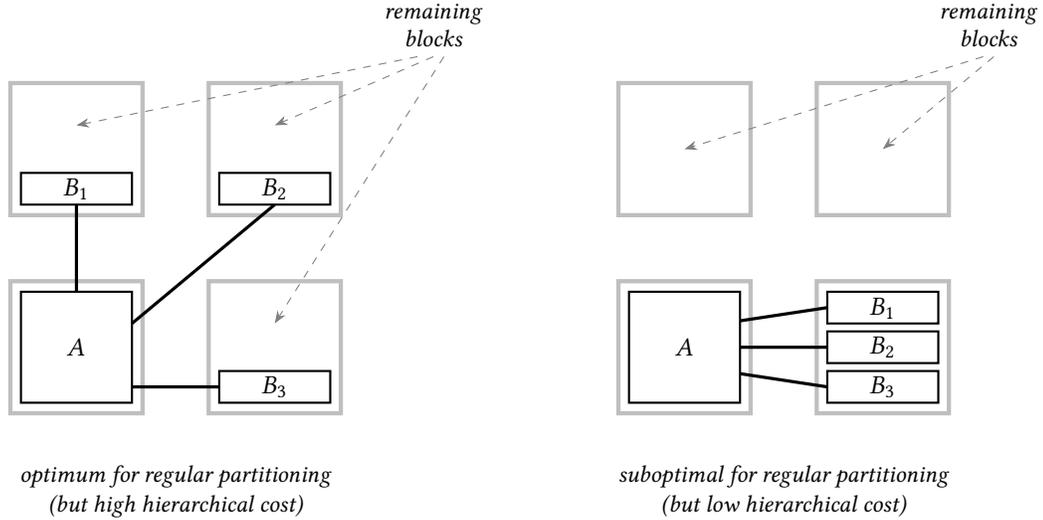

\begin{lemma} \label{lem:twostep_approx}
The two-step method is a $g_{1}$-approximation. 
\end{lemma}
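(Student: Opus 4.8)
The plan is to relate the hierarchical cost of a partitioning to the connectivity cost of its underlying $k$-way partition by a hyperedge-wise two-sided bound, and then chain the resulting inequalities. For a hyperedge $e$, write its connectivity contribution as $\lambda_e^{(d)}-1$ and recall its hierarchical contribution $\sum_{i=1}^{d} g_i\,(\lambda_e^{(i)}-\lambda_e^{(i-1)})$. The first thing I would observe is that the counts are monotone along the hierarchy, $1=\lambda_e^{(0)}\le \lambda_e^{(1)}\le \dots \le \lambda_e^{(d)}$, since an $i$-th level part met by $e$ forces its parent $(i-1)$-th level part to be met as well; hence every difference $\lambda_e^{(i)}-\lambda_e^{(i-1)}$ is nonnegative and the sum telescopes.

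Next I would exploit the monotonicity of the coefficients, $g_1\ge \dots \ge g_d=1$. Bounding each $g_i$ from above by $g_1$ and from below by $g_d=1$ in this nonnegative telescoping sum gives, for every hyperedge,
\[ (\lambda_e^{(d)}-\lambda_e^{(0)}) \;\le\; \sum_{i=1}^{d} g_i\,(\lambda_e^{(i)}-\lambda_e^{(i-1)}) \;\le\; g_1\,(\lambda_e^{(d)}-\lambda_e^{(0)}). \]
Summing over all $e\in E$ shows that for any partition $\mathcal{P}$ equipped with any hierarchy, its connectivity cost and its hierarchical cost satisfy $\mathrm{conn}(\mathcal{P})\le \mathrm{hier}(\mathcal{P})\le g_1\cdot \mathrm{conn}(\mathcal{P})$. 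The important feature is that both bounds hold independently of the particular arrangement of the parts into the tree, as they only use the telescoping and the ordering of the $g_i$.

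Finally I would chain these estimates. Let $\mathrm{OPT_{hier}}$ be the optimal hierarchical cost, realized by some $\mathcal{P}^\ast$ with its hierarchy, and let $\mathrm{OPT_{reg}}$ be the minimum connectivity cost of a regular $k$-way partition (which is exactly the $d=1$ specialization of the hierarchical cost). The leaf partition of $\mathcal{P}^\ast$ is a feasible regular partition, since the balance constraint $|P_j|\le(1+\epsilon)\frac nk$ coincides in both problems, so the lower bound yields $\mathrm{OPT_{hier}}=\mathrm{hier}(\mathcal{P}^\ast)\ge \mathrm{conn}(\mathcal{P}^\ast)\ge \mathrm{OPT_{reg}}$. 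For the algorithm, step (i) returns a partition $\mathcal{P}_{\mathrm{reg}}$ of connectivity cost $\mathrm{OPT_{reg}}$ and step (ii) picks the best hierarchy on it; since an optimal assignment is no worse than an arbitrary one, the upper bound gives $\mathrm{ALG}\le \mathrm{hier}(\mathcal{P}_{\mathrm{reg}})\le g_1\cdot \mathrm{conn}(\mathcal{P}_{\mathrm{reg}})=g_1\cdot \mathrm{OPT_{reg}}$. Together these give $\mathrm{ALG}\le g_1\cdot \mathrm{OPT_{reg}}\le g_1\cdot \mathrm{OPT_{hier}}$, the claimed factor.

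There is no genuinely hard step here; the whole proof is the per-hyperedge sandwich. The only point I would be careful about is the asymmetric use of the two bounds: the upper bound is applied to $\mathcal{P}_{\mathrm{reg}}$ under an arbitrary (hence in particular the optimal) hierarchy, while the lower bound is applied to $\mathcal{P}^\ast$ under its own hierarchy, and it is precisely the fact that both bounds are hierarchy-independent that lets me avoid any matching of hierarchy assignments between the two partitions.
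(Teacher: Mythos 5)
Your proposal is correct and follows essentially the same route as the paper's proof: the per-hyperedge sandwich $\mathrm{conn}(\mathcal{P})\le \mathrm{hier}(\mathcal{P})\le g_1\cdot \mathrm{conn}(\mathcal{P})$ (the paper phrases this as ``every intersection with a new part costs at most $g_1$'' and at least $g_d=1$), combined with $\mathrm{OPT_{hier}}\ge \mathrm{OPT_{cut}}$, which the paper derives via a short contradiction rather than your direct chaining. Your version merely makes the telescoping and hierarchy-independence of the two bounds explicit, which the paper leaves implicit.
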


On the other hand, it turns out that in unfortunate cases, the difference can indeed be in the magnitude of a factor $g_{1}$. This carries an important conceptual message: if we ignore the fact that the true nature of the cost function is hierarchical in practice, then even by finding the optimal partitioning, we might still be a large constant factor away from the actual optimum cost. 
\begin{theorem} \label{th:two_step_bad}
The two-step method can be a factor $\frac{b_1-1}{b_{1}} \cdot g_{1}$ worse than the optimum.
\end{theorem}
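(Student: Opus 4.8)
The plan is to construct, for the given (nontrivial) hierarchy, a family of instances whose regular-optimal partition is structurally forced to spread a set of heavily connected blocks across all $b_1$ top-level groups, while a regular-\emph{sub}optimal partition keeps them together and thereby admits a far cheaper hierarchical layout. Together with the matching upper bound of Lemma~\ref{lem:twostep_approx}, this shows the two-step gap is $\Theta(g_1)$. Following Figure~\ref{fig:twostep}, I reuse the block gadget from the proof of Theorem~\ref{th:main} (node sets kept monochromatic by every low-cost solution) and add \emph{filler} blocks carrying no hyperedges. Concretely, take a central block $A$ filling an entire part, $m:=k-b_1$ small blocks $B_1,\dots,B_m$, and $m$ ``decoy'' blocks $R_1,\dots,R_m$; connect $A$ to each $B_i$ by $N$ parallel hyperedges (the $A\!\leftrightarrow\!B_i$ hyperedges, with $N\to\infty$) and each $B_i$ to its $R_i$ by a single hyperedge, choosing the sizes of $B_i$ and $R_i$ so that one $B_iR_i$ pair exactly fills a part.

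The first and most important step is to show that \emph{every} regular optimum puts $A$ and the $B_i$ into $m+1$ distinct parts. Since $A$ fills a part, all $mN$ of the $A\!\leftrightarrow\!B_i$ hyperedges are cut in any feasible solution, so the regular optimum costs at least $mN$; and this is attained by the partition $\{A\},\{B_1R_1\},\dots,\{B_mR_m\}$ (padded with filler), which cuts nothing else. Any partition grouping two $B_i$ together must separate at least one $B_i$ from its $R_i$ — and pairing a $B_i$ with a foreign $R_j$ is no better — so it cuts an extra hyperedge and is strictly worse. Hence all regular optima isolate the $B_i$, which is exactly the property the layout step cannot undo.

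Next I compare the two hierarchical costs. Whatever regular optimum is handed to the layout step, the top-level group of $A$ has only $k/b_1$ leaves, so at most $k/b_1-1$ of the pair-parts can join it; the remaining $m-(k/b_1-1)=(b_1-1)(k/b_1-1)$ of them land in other top-level groups, and each of their $N$ hyperedges to $A$ then crosses the top level at cost $g_1$. Thus even the best layout costs at least $(b_1-1)(k/b_1-1)\,Ng_1$. For the true optimum I use instead the partition that bundles all (tiny) $B_i$ into leaves adjacent to $A$ and scatters the $R_i$ elsewhere: now every $A\!\leftrightarrow\!B_i$ hyperedge stays next to $A$ at cost $g_d=1$, and the only top-level crossings are the $m$ single $B_i\!\leftrightarrow\!R_i$ hyperedges, for a total of at most $(k-b_1)N+(k-b_1)g_1$. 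Sending $N\to\infty$ makes the $A\!\leftrightarrow\!B_i$ hyperedges dominate, and the ratio of the two costs approaches
\[ \frac{(b_1-1)\!\left(\tfrac{k}{b_1}-1\right) g_1}{k-b_1} \;=\; \frac{(b_1-1)\!\left(\tfrac{k}{b_1}-1\right)}{b_1\!\left(\tfrac{k}{b_1}-1\right)}\, g_1 \;=\; \frac{b_1-1}{b_1}\, g_1 , \]
as claimed.

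The main obstacle I anticipate is the balance bookkeeping that makes both partitions simultaneously feasible. Isolating $A$ and every $B_i$ in the regular optimum forces $A$ (and each pair) to be nearly full, so that neither $A$ nor two $B_i$'s can ever share a part; but the competitor partition leaves the $B$-bundle leaf almost empty, so the scattered $R_i$ together with the filler must still fit into the remaining leaves. Reconciling these two requirements pins down the admissible block sizes and forces a large enough constant $\epsilon$ (on the order of $\tfrac{1}{k-1}$), and one must also check that the hyperedge-free filler can be placed so as not to perturb the regular optimum identified above. Verifying these size constraints is the technical heart of the argument; the cost computation itself is then immediate.
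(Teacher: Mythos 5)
Your construction is essentially the paper's own proof of Theorem~\ref{th:two_step_bad} (the construction of Figure~\ref{fig:twostep}): a hub block $A$ that fills a part, satellite blocks $B_i$ tied to $A$ by many parallel size-$2$ hyperedges, and decoy blocks whose single edges tie-break the flat-metric optimum so that every regular optimum scatters the $B_i$ across distinct parts, after which no layout can keep more than $\frac{k}{b_1}-1$ of them inside $A$'s top-level subtree. Your bookkeeping differs only cosmetically from the paper's: you use $m=k-b_1$ uniform decoy pairs $B_iR_i$ plus hyperedge-free filler, where the paper uses $k-1$ satellites with asymmetric padding blocks $C_i$, $D$, $E_i$; your ratio $\frac{(b_1-1)(k/b_1-1)}{k-b_1}\,g_1=\frac{b_1-1}{b_1}\,g_1$ is computed correctly and matches the paper's bound.

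One claim in your closing paragraph is backwards, and it is worth correcting because it touches the forcing step. You write that reconciling the two feasibility requirements ``forces a large enough constant $\epsilon$ (on the order of $\frac{1}{k-1}$)''; in fact the theorem holds only for \emph{small} $\epsilon$ (the paper proves it at $\epsilon=0$ and remarks explicitly that the statement fails once parts may be left empty). The balance constraint is only an upper bound on part sizes, so with ``tiny'' $B_i$ and a constant $\epsilon>0$, any block with $|B_i|\leq \epsilon T$ fits into $A$'s part of size $T=\frac{n}{k}$; the regular optimum would then pack satellites with $A$, saving $N$ cut edges per satellite, and your first (and, as you say, most important) step collapses. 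The consistent instantiation of your own gadget is $\epsilon=0$ with $|B_i|=T/(k-b_1)$: then the $B$-bundle part is exactly full (not ``almost empty''), the scattered $R_i$ of size $T\bigl(1-\frac{1}{k-b_1}\bigr)$ tile the remaining parts together with the freely splittable filler, and both competing partitions are exactly balanced, so no slack is needed anywhere. For $0<\epsilon<\frac{1}{k-1}$ the correct fix is instead to inflate $|A|$ to $(1+\epsilon)\frac{n}{k}$ so that no satellite can join $A$'s part, which is exactly the adjustment the paper describes after its proof.
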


\begin{proof}
On a high level, the proof requires a star-shaped construction where a large block $A$ is densely connected to $(k-1)$ smaller blocks $B_i$, which can either all fit into a single part, or into separate parts (sketched in Figure \ref{fig:twostep}). The construction is carefully designed such that placing each $B_i$ in a separate part has slightly smaller standard cost, so this solution is preferred by the two-step method. However, the hierarchical cost of this solution is much higher, since the parts containing the $B_i$ are scattered across the hierarchy, so many of the connections between blocks incur a cost of $g_{1}$. On the other hand, if the $B_i$ are all placed into the same part, and this is a sibling of the part containing $A$ on the lowest hierarchy level, then the same connections only incur a cost of $g_{d}$, so there is indeed a solution with a significantly lower hierarchical cost.
\end{proof}

\renewcommand*{\proofname}{Proof}

Note that $\frac{b_1-1}{b_{1}} \geq \frac{1}{2}$ even for the simplest case of $b_1=2$, and as $b_{1}$ grows larger, it approaches $1$, matching the upper bound of Lemma \ref{lem:twostep_approx}. We also note that this theorem only holds for small $\epsilon$ values, i.e.\ when we are not allowed to leave any of the parts empty.

\subsection{Complexity}

Finally, it is also natural to wonder how this hierarchical cost function affects the hardness of the partitioning problem. On the one hand, the problem does become significantly more technical in this hierarchical setting in practice; on the other hand, it does not change much from a complexity-theoretic perspective. In particular, we can make the following simple observations:
\begin{itemize}[topsep=3pt, itemsep=2pt, parsep=2pt]
 \item The hardness results for standard partitioning, e.g.\ Theorem \ref{th:main}, also carry over in a straightforward way to this more complex hierarchical setting.
 \item Due to Lemma \ref{lem:twostep_approx}, any $\alpha$-approximation algorithm for standard partitioning also provides an $O(\alpha)$-approximation for the hierarchical setting (as long as we have $k, g_1 \in O(1)$).
 \item Even with this more complex cost function, the partitioning problem still remains in the parameterized complexity class XP with respect to $L$.
\end{itemize}

The two-step method, on the other hand, raises a more interesting question if we consider its second step as a separate \textit{hierarchy assignment problem}. That is, given an already fixed $k$-way partitioning of the hypergraph, our goal is to assign the $k$ fixed parts to the $k$ available positions in the hierarchy optimally, i.e.\ such that the total hierarchical cost is minimized.

Note that with our assumption so far that $k \in O(1)$, the number of possible solutions to this problem is only a function of $k$, and hence also a constant; as such, the problem is trivial from a complexity-theoretic perspective. However, the problem becomes more interesting if we briefly explore the case when $k$ is a variable part of the input. This setting might be relevant in applications where the partitioning task is severely time-critical, and hence instead of using a fixed architecture, we e.g.\ decide to increase $k$ proportionally to the hypergraph size.

We conclude the paper by briefly analyzing the complexity of this hierarchy assignment problem in the simplest case of only $d=2$ levels. In this case, one can essentially contract each of the $k$ partitions into a single node, and express the hierarchy assignment problem as a specific kind of partitioning task on the resulting contracted hypergraph (which might contain multiple copies of some hyperedges). Our results (discussed in Appendix \ref{app:hierass}) show that this two-level hierarchy assignment problem is polynomially solvable for $b_2=2$, but already NP-hard if $b_2 = 3$.

\begin{theorem} \label{th:step2}
Consider the hierarchy assignment problem with only $d=2$ levels.
\vspace{3pt}
\begin{itemize}
    \setlength{\itemsep}{2pt}
    \setlength{\parskip}{2pt}
    \item for $b_2=2$, the problem is solvable in polynomial time,
    \item for $b_2 = 3$, the problem is already NP-hard.
\end{itemize}
\end{theorem}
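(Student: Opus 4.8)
The plan is to first reduce both cases to a clean combinatorial grouping problem on a contracted hypergraph, and then treat the two branching factors separately. Since the $k$-way partition of $G$ is fixed, I would contract each part $P_i$ into a single node, obtaining a hypergraph $G^c$ on the node set $[k]$, where each original hyperedge $e$ becomes the set $e^c := \{\, i : P_i \cap e \neq \emptyset \,\}$ of parts it touches (possibly with multiplicities). A hierarchy assignment is then exactly a partition of $[k]$ into $b_1$ groups of size $b_2$, which form the level-$1$ parts. With $d=2$ and the normalization $g_2=1$, the cost of a hyperedge $e$ is $g_1(\lambda_e^{(1)}-1) + (\lambda_e^{(2)} - \lambda_e^{(1)})$; here $\lambda_e^{(2)}$ is already fixed by the partition and $g_1$ is a constant, so minimizing the total cost is equivalent to minimizing $\sum_e \lambda_e^{(1)}$ (assuming $g_1>1$; if $g_1=1$ the hierarchy is flat and the problem is trivial).

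Next I would rewrite this objective in a \emph{savings} form. For a contracted hyperedge with node set $S$ and a grouping $G_1,\dots,G_{b_1}$, the number of groups meeting $S$ is $\lambda^{(1)} = |S| - \sum_j \max(0,\, |G_j \cap S| - 1)$. Since $\sum_{e} |e^c|$ is a constant, minimizing $\sum_e \lambda_e^{(1)}$ is the same as maximizing the total savings $\sum_j f(G_j)$, where $f(G) := \sum_{S} \max(0,\, |G\cap S|-1)$ sums over the contracted hyperedges. Writing $w_2(u,v)$ for the number of contracted hyperedges containing both $u$ and $v$, and $w_3(a,b,c)$ for those containing all three, a short case analysis yields $f(\{u,v\}) = w_2(u,v)$ for $b_2=2$, and $f(\{a,b,c\}) = w_2(a,b)+w_2(a,c)+w_2(b,c) - w_3(a,b,c)$ for $b_2=3$.

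For $b_2=2$ this immediately yields a polynomial algorithm: maximizing $\sum_j f(G_j) = \sum_{\{u,v\}\in M} w_2(u,v)$ over all partitions of $[k]$ into $b_1$ pairs is exactly a \emph{maximum-weight perfect matching} on the complete graph on $[k]$ with edge weights $w_2(u,v)$, solvable in polynomial time (e.g.\ by Edmonds' algorithm); the weights $w_2$ are computable in polynomial time by scanning the hyperedges. For $b_2=3$ I would reduce from the classical NP-complete \emph{Partition into Triangles} problem: given a graph $H$ on $3q$ vertices, build a hierarchy-assignment instance with $k=|V(H)|$ whose contracted hypergraph has one node per vertex of $H$ and one size-$2$ hyperedge per edge of $H$. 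Then $w_3 \equiv 0$, so the savings of a triple equal the number of edges of $H$ inside it; since a triple contributes at most $3$ and the triples are disjoint, the maximum total savings equals $|V(H)|$ if and only if $V(H)$ admits a partition into $q$ triangles. Hence deciding the optimal hierarchy-assignment cost is NP-hard.

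The main obstacle is the derivation and exploitation of the savings decomposition: once one observes that $\sum_e \lambda_e^{(1)}$ is, up to an additive constant, a sum of per-group gains, the $b_2=2$ matching reduction is essentially immediate. The most delicate point is the $b_2=3$ reduction, where one must verify that the higher-order term $w_3$ can be forced to vanish (by using only size-$2$ hyperedges), so that the per-triple gain coincides exactly with the internal triangle edge count and the gap between a genuine triangle partition and any other grouping becomes tight.
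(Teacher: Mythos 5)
Your proof is correct, and it splits cleanly into one half that matches the paper and one that does not. For $b_2=2$ your argument is essentially identical to the paper's: both reduce, via the same savings reformulation (cost minus the fixed term $\sum_e w_e(\lambda_e^{(2)}-1)$, normalized by $g_1-1$), to maximum-weight perfect matching on the complete graph with weights $w_2(u,v)$, solved by Edmonds' algorithm. For $b_2=3$ you take a genuinely different and substantially simpler route. The paper reduces from $3$-dimensional matching on $3$-regular tripartite instances, which forces two complications: it must enforce tripartiteness of the chosen triples via heavy auxiliary hyperedges of weight $w_0$ on every triple in $X\times Y\times Z$, and it must correct the mismatch between the natural per-triple gain $(1,2)$ (one unit for two nodes of a hyperedge in a triple, two units for all three) and the 3DM objective $(0,1)$, which it does by replacing each size-$3$ hyperedge with three size-$2$ edges and adding a size-$3$ hyperedge for every \emph{non}-edge triple. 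Your reduction from Partition into Triangles sidesteps both issues at once: since all hyperedges have size $2$, your inclusion--exclusion identity $f(\{a,b,c\})=w_2(a,b)+w_2(a,c)+w_2(b,c)-w_3(a,b,c)$ has $w_3\equiv 0$, the per-triple gain is exactly the internal edge count, and the threshold $3q=|V(H)|$ is attained precisely by triangle partitions --- no gadgets, no weights. What each approach buys: yours yields hardness already for unweighted instances whose contracted hypergraph is a simple graph (a more restricted, hence in that sense stronger, instance class), at the price of invoking the NP-completeness of Partition into Triangles as a black box; the paper's argument is self-contained from the more standard 3DM but produces a weighted construction with $\Theta(k^3)$ auxiliary hyperedges. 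Your savings identity $\lambda^{(1)}=|S|-\sum_j \max(0,|G_j\cap S|-1)$ and the equivalence of the objective with maximizing $\sum_j f(G_j)$ are both verified correctly, and the realizability of your instance is immediate (take the hypergraph to be $H$ itself with each vertex its own part, $k=3q$ variable, as in the paper's setting).
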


\newpage

\bibliographystyle{ACM-Reference-Format}
\bibliography{references}

\newpage

\appendix

\section{Fundamental properties of the partitioning problem} \label{app:basics}

\renewcommand*{\proofname}{Proof.}

We begin with some straightforward observations on the partitioning problem that will often prove useful during our proofs.

\subparagraph*{From $\boldsymbol{\epsilon=0}$ to $\boldsymbol{\epsilon>0}$.} Firstly, we establish a close relation between the partitioning problem for general $\epsilon \geq 0$ and for $\epsilon=0$, showing that the latter is the hardest case of the problem in some sense. For convenience, in the special case of $\epsilon=0$, we will refer to a balanced partitioning as a $k$-section, and to the partitioning problem as the $k$-section problem (similarly to the notion of bisection for $k=2$).

\begin{lemma} \label{lem:bisection_reduction}
Given a polynomial algorithm for the $k$-section problem, this also provides a polynomial algorithm for the balanced partitioning problem with any $\epsilon>0$.
\end{lemma}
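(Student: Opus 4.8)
The plan is to reduce the general $\epsilon > 0$ balanced partitioning problem to the $\epsilon = 0$ case ($k$-section) by padding the input hypergraph with isolated nodes, so that the balance slack of $(1+\epsilon)\cdot\frac{n}{k}$ on the original instance is exactly realized as the perfectly balanced part size in the padded instance. Concretely, given an instance $G(V,E)$ on $n$ nodes, I would add a suitable number $N$ of isolated nodes (nodes belonging to no hyperedge) to obtain a hypergraph $G'$ on $n+N$ nodes, chosen so that $\frac{n+N}{k} = \lceil (1+\epsilon)\cdot\frac{n}{k}\rceil$, i.e.\ $N = k\cdot\lceil (1+\epsilon)\cdot\frac{n}{k}\rceil - n$. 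Since $\epsilon$ and $k$ are fixed constants, $N = O(n)$, so $G'$ has polynomial size and the reduction runs in polynomial time. We then run the assumed $k$-section algorithm on $G'$.

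The key steps, in order, are as follows. First I would verify that $N \geq 0$ and that $N$ is chosen so a perfect $k$-section of $G'$ exists; this follows because $n + N = k\cdot\lceil(1+\epsilon)\cdot\frac{n}{k}\rceil$ is divisible by $k$ by construction. Second, I would establish the \emph{cost-preserving correspondence} between feasible solutions: because the added nodes are isolated, they belong to no hyperedge and hence never contribute to $\lambda_e$ for any $e \in E$, so the cut-net (resp.\ connectivity) cost of any partitioning of $G'$ equals the cost of its restriction to $V$. Third, I would show the two directions of the reduction. For the forward direction, any $\epsilon$-balanced partitioning of $G$ with cost $L$ can be extended to a perfectly balanced $k$-section of $G'$ of the same cost $L$: each part $P_i$ has $|P_i| \leq \lceil(1+\epsilon)\cdot\frac{n}{k}\rceil = \frac{n+N}{k}$, so we can distribute the $N$ isolated nodes to fill every part up to exactly $\frac{n+N}{k}$ (this is always possible since the total deficit $\sum_i (\frac{n+N}{k} - |P_i|) = N$). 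For the reverse direction, given a $k$-section of $G'$, restricting it to $V$ yields a partitioning where each part has at most $\frac{n+N}{k} = \lceil(1+\epsilon)\cdot\frac{n}{k}\rceil$ nodes of $V$, hence an $\epsilon$-balanced partitioning of $G$ of the same cost. Combining the two directions, the optimal $k$-section cost of $G'$ equals the optimal $\epsilon$-balanced cost of $G$, so solving the former solves the latter.

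I do not expect a single dramatic obstacle here, since the construction is elementary; the main technical care is in the \textbf{bookkeeping of the balance arithmetic}. Specifically, one must be careful that the relaxed balance bound uses the ceiling $\lceil(1+\epsilon)\cdot\frac{n}{k}\rceil$ (as the excerpt's definition of $\epsilon$-balanced permits) so that the padded size $\frac{n+N}{k}$ is an integer and every original feasible solution admits a padding that exactly saturates all parts. If instead one insisted on the strict bound $(1+\epsilon)\cdot\frac{n}{k}$ without the ceiling relaxation, the divisibility could fail and one would need a slightly more delicate choice of $N$ together with an argument that the rounding loses nothing; it is cleanest to invoke the relaxed convention stated in the Preliminaries. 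A secondary point to check is that the reverse direction's restriction genuinely lands in the $\epsilon$-balanced feasible region, which is immediate from $\frac{n+N}{k} = \lceil(1+\epsilon)\cdot\frac{n}{k}\rceil \leq (1+\epsilon)\cdot\frac{n}{k} + 1$; since each part of the $k$-section has exactly $\frac{n+N}{k}$ total nodes, it has at most that many original nodes, giving feasibility for $G$.

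I would conclude by noting that this reduction preserves cost \emph{exactly} (not merely approximately), so it simultaneously transfers exact-optimization guarantees and approximation guarantees in both directions; this is what makes it useful later for reducing $\epsilon$-balanced partitioning to bisection, as referenced in the related-work discussion.
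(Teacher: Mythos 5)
Your proposal is correct and follows essentially the same route as the paper's proof: pad the instance with isolated nodes so that the perfect part size $\frac{n+N}{k}$ of the padded hypergraph coincides with the $\epsilon$-relaxed bound of the original, then transfer solutions in both directions at identical cost. The only difference is cosmetic: the paper adds $\epsilon \cdot n$ isolated nodes and defers integrality issues to its ``non-integer thresholds'' remark, whereas you handle the rounding explicitly via the ceiling convention, which is a perfectly valid (slightly more careful) instantiation of the same idea.
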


\begin{proof}
Consider an instance of the partitioning problem for some $\epsilon>0$, and let us add $\epsilon \cdot n$ new isolated nodes to the graph. With this the new number of nodes is $n'=(1+\epsilon) \cdot n$, so it fulfills $\frac{n'}{k}=(1+\epsilon) \cdot \frac{n}{k}$. We claim that there is a $k$-section of cost $L$ in this new hypergraph if and only if there is a $\epsilon$-balanced partitioning of cost $L$ in the original hypergraph. Indeed, any $k$-section has parts of size $\frac{n'}{k}$ in the new graph, so restricting it to the original graph implies that the balance constraints are satisfied. On the other hand, any $\epsilon$-balanced partitioning in the original graph can be extended to a $k$-section if we color the isolated nodes appropriately, i.e.\ such that each color occurs $\frac{n'}{k}$ times altogether.
\end{proof}

This reduction shows that any $\alpha$-approximation for $k$-section also provides an $O(\alpha)$-approximation for the problem with any $\epsilon>0$. This e.g.\ implies that the $\widetilde{O}(\sqrt{n})$-approximation of R\"{a}cke, Schwartz and Stotz \cite{BisectionApprox} for the bisection problem can also be extended to any $\epsilon>0$ value.

\begin{corollary}
For $k=2$ and any $\epsilon>0$, there is a polynomial-time $\widetilde{O}(\sqrt{n})$-approximation algorithm to the partitioning problem.
\end{corollary}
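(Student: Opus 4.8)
The plan is to obtain the result as a direct consequence of two ingredients that are already available: the $\widetilde{O}(\sqrt{n})$-approximation for bisection due to R\"acke, Schwartz and Stotz \cite{BisectionApprox}, and the isolated-node reduction of Lemma \ref{lem:bisection_reduction}. The key observation is that bisection is precisely the $k=2$, $\epsilon=0$ instance of our problem (i.e.\ $2$-section), so \cite{BisectionApprox} already supplies an approximation algorithm for that case. The only remaining task is to lift this guarantee from $\epsilon=0$ to an arbitrary fixed $\epsilon>0$, which is exactly what the reduction accomplishes.

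Concretely, given an $\epsilon$-balanced partitioning instance on $n$ nodes, I would form the enlarged instance of Lemma \ref{lem:bisection_reduction} by adding $\epsilon\cdot n$ isolated nodes (rounding up to $\lceil \epsilon \cdot n \rceil$ if necessary, which is absorbed by the relaxed balance convention of Section \ref{sec:prelim}), run the bisection algorithm of \cite{BisectionApprox} on it, and restrict the returned $2$-section to the original node set. The point to verify is that this transformation is \emph{cost-preserving}: since the new nodes carry no hyperedges, every partitioning of the enlarged graph has exactly the same cost (under either the cut-net or the connectivity metric) as its restriction to the original hyperedges. Combined with the cost-preserving bijection between $2$-sections of the enlarged graph and $\epsilon$-balanced partitionings of the original graph established in Lemma \ref{lem:bisection_reduction}, this shows that the two optima coincide, so an $\alpha$-approximate $2$-section yields an $\alpha$-approximate $\epsilon$-balanced partitioning.

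The one subtlety to track — and the place where a small amount of care is genuinely required — is that the approximation factor of \cite{BisectionApprox} is stated as a function of the instance size, whereas the enlarged instance has $n' = (1+\epsilon)\cdot n$ nodes rather than $n$. Thus the guarantee we directly inherit is $\widetilde{O}(\sqrt{n'})$. Here I would simply note that $\epsilon$ is a fixed constant, so $\sqrt{n'} = \sqrt{1+\epsilon}\cdot\sqrt{n} = \Theta(\sqrt{n})$, and the polylogarithmic factors hidden by $\widetilde{O}(\cdot)$ likewise change only by constants; hence the inherited factor is still $\widetilde{O}(\sqrt{n})$ as claimed. This constant-factor growth in the node count is precisely the source of the $O(\alpha)$ (rather than $\alpha$) loss remarked on in the discussion preceding the corollary. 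Everything else is a direct invocation of the two cited results, so I do not expect any further obstacle.
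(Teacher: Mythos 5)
Your proposal is correct and matches the paper's own argument: the paper derives this corollary directly from Lemma \ref{lem:bisection_reduction} (adding $\epsilon \cdot n$ isolated nodes, invoking the bisection algorithm of \cite{BisectionApprox}, and restricting the resulting $2$-section to the original node set), exactly as you do. Your extra bookkeeping — that isolated nodes preserve cost under both metrics and that the factor $\widetilde{O}(\sqrt{n'})$ with $n'=(1+\epsilon)n$ collapses to $\widetilde{O}(\sqrt{n})$ since $\epsilon$ is a fixed constant — is precisely the constant-factor loss the paper summarizes as ``any $\alpha$-approximation for $k$-section also provides an $O(\alpha)$-approximation.''
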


\subparagraph*{Choice of $\boldsymbol{\epsilon}$.} Furthermore, recall that the only strict restriction for $\epsilon$ we introduced was $\epsilon < (k-1)$, which ensures $(1+\epsilon) \cdot \frac{n}{k} < n$; otherwise, the optimum is always the trivial partitioning that places the entire graph into the same part. However, note that if we have a high number of processors $k$, then large $\epsilon$ values are still often unrealistic, even when $\epsilon<(k-1)$; in particular, it means that many of the processors can remain unused in an optimal solution.

\begin{lemma}
There exists an optimal partitioning where less than $\frac{2k}{1+\epsilon}$ of the parts are non-empty.
\end{lemma}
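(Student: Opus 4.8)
The plan is to start from any optimal partitioning and repeatedly \emph{merge} parts until fewer than $\frac{2k}{1+\epsilon}$ of them are non-empty, arguing that each merge preserves both optimality and feasibility. The first ingredient I would establish is that merging two parts can never increase the cost: if we replace $P_i$ and $P_j$ by the single part $P_i \cup P_j$, then for every hyperedge $e$ the value $\lambda_e$ either stays the same or drops by one (it drops precisely when $e$ met both $P_i$ and $P_j$). Since $\lambda_e$ is non-increasing for every $e$, both the cut-net cost $|\{e : \lambda_e > 1\}|$ and the connectivity cost $\sum_{e} (\lambda_e - 1)$ can only decrease under a merge.

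The second ingredient is a counting argument guaranteeing that a feasible merge exists whenever there are too many non-empty parts. Suppose the current partitioning has $m$ non-empty parts; since their sizes sum to $n$, the two smallest of them have combined size at most $\frac{2n}{m}$ (equivalently, the average size of the two smallest is at most the overall average $\frac{n}{m}$). Hence if $m \geq \frac{2k}{1+\epsilon}$, these two parts together contain at most $\frac{2n}{m} \leq (1+\epsilon)\cdot\frac{n}{k}$ nodes, so their union still satisfies the balance constraint. Merging them therefore yields a feasible partitioning with one fewer non-empty part and cost no larger than before.

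Combining the two ingredients, I would iterate the merge step: as long as at least $\frac{2k}{1+\epsilon}$ parts are non-empty, the counting argument supplies a legal merge and the monotonicity argument preserves the optimal cost. The process halts with strictly fewer than $\frac{2k}{1+\epsilon}$ non-empty parts, and the resulting partitioning is still optimal, since its cost never rose above that of the partitioning we started with. Two minor points deserve care but present no real obstacle: the assumption $\epsilon < k-1$ gives $\frac{2k}{1+\epsilon} > 2$, so whenever a merge is invoked there really are at least two non-empty parts to combine; and the relaxed balance bound $\lceil (1+\epsilon)\cdot\frac{n}{k}\rceil$ only makes the size inequality easier to satisfy, including at the tight threshold $m = \frac{2k}{1+\epsilon}$ where $\frac{2n}{m}$ equals $(1+\epsilon)\cdot\frac{n}{k}$ exactly.
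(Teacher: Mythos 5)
Your proposal is correct and follows essentially the same argument as the paper: the two smallest of $m$ non-empty parts have combined size at most $\frac{2n}{m} \leq (1+\epsilon)\cdot\frac{n}{k}$ whenever $m \geq \frac{2k}{1+\epsilon}$, so they can be merged without violating the balance constraint, and merging never increases either cost metric. Your version merely makes explicit what the paper leaves implicit (the verification that $\lambda_e$ is non-increasing under a merge, and the iteration of the merge step), which is a fair elaboration rather than a different route.
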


\begin{proof}
Assume we have at least $p_0 \geq \frac{2k}{1+\epsilon}$ non-empty parts, and consider the two smallest ones $P_1$, $P_2$; then we must have $|P_1|+|P_2| \leq 2 \cdot \frac{n}{p_0} \leq (1+\epsilon) \cdot \frac{n}{k}$. This means that we can merge $P_1$ and $P_2$ into the same part and still satisfy the balance constraint. Merging can only reduce the cost or leave it unchanged, so the solution remains optimal.
\end{proof}

This means that already for $\epsilon=1$, one of our processors will remain idle, and the usage of the computational resources further decreases for even larger $\epsilon$. On the other hand, with a small enough $\epsilon$, the balance constraint is already strict enough to ensure that every part is non-empty.

\begin{lemma}
Having $\epsilon< \frac{1}{k-1}$ ensures that every part is non-empty.
\end{lemma}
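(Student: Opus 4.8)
The plan is to argue by contradiction via a simple averaging (pigeonhole) bound. First I would suppose, toward a contradiction, that some $\epsilon$-balanced partitioning $\mathcal{P} = \{P_1, \dots, P_k\}$ has an empty part, relabeling so that $P_k = \emptyset$. Then all $n$ nodes lie in the remaining $k-1$ parts $P_1, \dots, P_{k-1}$, so by pigeonhole at least one of them, say $P_i$, satisfies $|P_i| \ge \frac{n}{k-1}$.

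Next I would observe that this lower bound already exceeds the balance threshold under the hypothesis. A one-line manipulation shows that $\epsilon < \frac{1}{k-1}$ is equivalent to $1+\epsilon < \frac{k}{k-1}$, and multiplying through by $\frac{n}{k}$ gives $(1+\epsilon)\cdot\frac{n}{k} < \frac{n}{k-1}$. Combining this with the pigeonhole bound yields $|P_i| \ge \frac{n}{k-1} > (1+\epsilon)\cdot\frac{n}{k}$, contradicting the requirement $|P_i| \le (1+\epsilon)\cdot\frac{n}{k}$ that every part of an $\epsilon$-balanced partitioning must satisfy. Hence no part can be empty, which is exactly the claim.

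The argument is short, and the one point I expect to require care is the rounding convention. Under the strict constraint $|P_i| \le (1+\epsilon)\cdot\frac{n}{k}$ the chain of inequalities above is clean and the contradiction is immediate. If instead one adopts the relaxed constraint $|P_i| \le \lceil (1+\epsilon)\cdot\frac{n}{k}\rceil$ mentioned in Section~\ref{sec:prelim}, then I would need to check that the gap $\frac{n}{k-1} - (1+\epsilon)\cdot\frac{n}{k} = \frac{n}{k}\bigl(\frac{1}{k-1} - \epsilon\bigr)$ survives the ceiling, i.e.\ exceeds $1$; this holds for all sufficiently large $n$, so the main obstacle is merely verifying this threshold rather than anything structural in the argument.
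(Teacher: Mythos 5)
Your proof is correct and is essentially the paper's argument in contrapositive form: the paper sums the maximal sizes of $k-1$ parts to get $(k-1)\cdot(1+\epsilon)\cdot\frac{n}{k} < n$, while you apply pigeonhole to conclude some part would need at least $\frac{n}{k-1} > (1+\epsilon)\cdot\frac{n}{k}$ nodes — the same inequality read in the other direction. Your side remark on the relaxed ceiling convention is a reasonable extra caution but not part of the paper's (strict-constraint) proof.
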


\begin{proof}
Such a small $\epsilon$ implies that $(k-1) \cdot (1+\epsilon) \cdot \frac{n}{k} < n$, i.e.\ $(k-1)$ parts of maximal size are still not enough to cover the entire graph.
\end{proof}

\subparagraph*{Blocks.} One of the fundamental tools in our construction is a \textit{block} of a specific size $b \geq 2$. A block $B$ consists of $b$ nodes $v_1, \ldots, v_b$, and $b$ distinct hyperedges of size $(b-1)$ each, such that the $i$-th hyperedge contains every node from $B$ except for $v_i$. The main role of blocks in our construction is to behave as basic building blocks that are essentially unsplittable.

\begin{lemma}
If a block $B$ of size $b$ intersects with more than one partition, then the given partitioning has cost at least $(b-1)$.
\end{lemma}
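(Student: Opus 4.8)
The plan is to show that whenever a block $B$ of size $b$ is not entirely contained in a single partition, at least $(b-1)$ of its $b$ constituent hyperedges must be cut, which immediately yields a cost of at least $(b-1)$ under both the cut-net and connectivity metrics (recall that for any cut hyperedge these metrics contribute at least $1$). The key structural fact I would exploit is that the $i$-th hyperedge $e_i$ of the block contains \emph{every} node of $B$ except $v_i$; consequently $e_i$ fails to be cut only if all the nodes in $B \setminus \{v_i\}$ receive the same color.

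\smallskip
\noindent First I would assume that $B$ intersects more than one partition, i.e.\ not all nodes of $B$ share a color. The heart of the argument is to bound the number of \emph{uncut} hyperedges from above by $1$. Suppose for contradiction that two distinct hyperedges $e_i$ and $e_j$ (with $i \neq j$) are both uncut. Since $e_i$ is uncut, all nodes in $B \setminus \{v_i\}$ are monochromatic; in particular, since $j \neq i$, the node $v_j \in B \setminus \{v_i\}$ shares this common color, call it $c$. By the same reasoning applied to $e_j$, all nodes in $B \setminus \{v_j\}$ are monochromatic, and since $v_i \in B \setminus \{v_j\}$, this forces $v_i$ to also take the common color of $B \setminus \{v_j\}$. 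I would then argue that these two monochromatic sets overlap (e.g.\ any node $v_m$ with $m \neq i, j$ lies in both, and such a node exists because $b \geq 2$ and in fact $b \geq 3$ is needed for a third node; the boundary case $b = 2$ must be checked separately), so the two common colors coincide, forcing \emph{all} of $B$ to be monochromatic. This contradicts the assumption that $B$ intersects more than one partition.

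\smallskip
\noindent Therefore at most one of the $b$ hyperedges of $B$ can be uncut, so at least $(b-1)$ of them are cut, giving cost at least $(b-1)$. The main obstacle I anticipate is the small-$b$ edge case: for $b = 2$ the block has only two nodes and two singleton hyperedges, and there is no third node $v_m$ to witness the overlap, so the ``two uncut hyperedges force monochromaticity'' step needs a direct check. For $b = 2$, the two hyperedges are $\{v_2\}$ and $\{v_1\}$, neither of which can ever be cut (they are singletons), which would seem to break the claim; I would need to verify the intended reading of the block definition for $b=2$ (so that the required hyperedge sizes are at least meaningful) or simply note that the statement is trivially satisfied since $(b-1) = 1$ and $B$ intersecting two partitions already means one hyperedge structure forces a cut. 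For all $b \geq 3$ the overlap argument is clean, so the bulk of the work is confirming the construction behaves as intended at the boundary.
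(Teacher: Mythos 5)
Your proof is correct for all $b \geq 3$, but it takes a different route from the paper's. The paper argues directly: since $B$ meets two parts, either some color appears at least twice in $B$ --- in which case one picks a node $v$ of a different color and observes that each of the $(b-1)$ hyperedges containing $v$ omits only one node, hence still contains a node of the repeated color alongside $v$, so all $(b-1)$ of them are cut --- or else all colors in $B$ are distinct, in which case every hyperedge is cut. Your argument instead bounds the number of \emph{uncut} hyperedges: two uncut hyperedges $e_i, e_j$ would make $B\setminus\{v_i\}$ and $B\setminus\{v_j\}$ each monochromatic, and since these sets share a node (for $b\geq 3$) and their union is $B$, the whole block would be monochromatic, a contradiction. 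Your approach buys a unified argument with no case split and pinpoints exactly where the block size enters (the need for a third witness node $v_m$); the paper's approach buys a structural localization --- all the cut hyperedges are incident to a single witness node $v$ --- though its second case quietly has the same $b\geq 3$ requirement, since for $b=2$ the hyperedges are singletons and are never cut.

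One correction on your boundary discussion: your fallback claim that for $b=2$ the statement is ``trivially satisfied'' because ``$B$ intersecting two partitions already means one hyperedge structure forces a cut'' is wrong. For $b=2$ the two hyperedges are the singletons $\{v_1\}$ and $\{v_2\}$, which have $\lambda_e = 1$ under any partitioning, so splitting the block costs nothing and the lemma as literally stated fails at $b=2$ --- as your first instinct suggested. This degeneracy is shared by the paper's own proof and is harmless in context: the constructions only ever instantiate blocks with large $b$ (e.g.\ $b \geq n+1$), where both arguments are sound.
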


\begin{proof}
Consider a color (say, red) that appears at least two times in $B$, and a node $v \in B$ that has a different color from this (say, blue). All the $(b-1)$ hyperedges containing $v$ have both a blue and a red node in them, so they each induce a cost of $1$ at least. On the other hand, if no color appears twice in $B$ at all, then each hyperedge induces a cost of at least $1$.
\end{proof}

In our constructions, we usually select $b$ such that $(b-1)$ is larger than the cost of some trivial partitionings that are straightforward to find in our construction. This will imply that any reasonable algorithm must color each of the blocks in the hypergraph monochromatically (if this is possible at all): otherwise, we can trivially improve the solution, by simply replacing it with an arbitrary partitioning that does not split any of the blocks.

With a slight abuse of terminology, we will sometimes also use the word block to refer to the slightly different gadgets that serve the same purpose (they are suboptimal to split) in special classes of hypergraphs, e.g.\ hyperDAGs.

\subparagraph*{Number of hyperedges.} Note that when we express our positive results as a function of $n$, this is a slight abuse of notation. That is, we implicitly assume that the hypergraph has reasonable size, i.e.\ the number of hyperedges is polynomial in $n$; otherwise, it might not even be possible to read the input in time that is polynomial in $n$. For larger hypergraphs, we can reinterpret these results such that the corresponding running time is a function of the input size (in bits, in a classical complexity-theoretic sense) instead of the number of nodes to extend our claims to this case.

\subparagraph*{Non-integer thresholds.} When discussing our constructions, it is often convenient to assume that the threshold $(1+\epsilon) \cdot \frac{n}{k}$ of the balance constraint is an integer value. As such, we sometimes omit the corresponding floor function when providing a high-level overview in our proofs.

We mentioned in Section \ref{sec:prelim} that the constraints are sometimes relaxed to a looser threshold of $\lceil (1+\epsilon) \cdot \frac{n}{k} \rceil$ to ensure that a feasible partitioning always exists. Note that such a relaxation is rarely necessary: if we have $\epsilon >0$, then $\lfloor (1+\epsilon) \cdot \frac{n}{k} \rfloor \geq \frac{n}{k}$ is already ensured for $n \geq \frac{k}{\epsilon}=O(1)$, so this mostly plays a role in the $k$-section problem when $\epsilon=0$. Most of our proofs are straightforward to adapt to this slightly different problem definition with the relaxed constraint. This question is most relevant in the context of the layer-wise constraints for hyperDAGs: in computational DAGs, we can easily have some very small layers where this relaxation is indeed necessary. Alternatively, we could also modify the problem formulation in the layer-wise case, and decide to ignore these degenerate layers, only imposing a balance constraint on layers above a specific size.

\section{Detailed discussion of hyperDAGs} \label{app:hyperDAGs}
This section discusses the hyperDAG representation of computational DAGs in more detail. Recall that given a DAG $G$, our hyperDAG has a hyperedge for every node $u$, containing both $u$ and all the immediate successors of $u$. For simplicity, we disregard hyperedges of size $1$, i.e.\ we do not add a hyperedge for $u$ if $u$ does not have any outgoing edges; such degenerate hyperedges do not have any effect on the partitioning problem anyway. This means that our resulting hyperDAG has exactly $n-|V_{sink}|$ hyperedges, where $V_{sink} \subseteq V$ denotes the sink nodes of the computational DAG.

We point out that a similar approach to convert computational DAGs to hypergraphs has already been suggested in the work of Hendrickson and Kolda \cite{HK00}; in this approach, the hyperedge corresponding to a node $u$ contains both the immediate predecessors and the immediate successors of $u$. However, the hyperedges in this model do not corresponds to specific units of data, and as such, the cut size can significantly overestimate the actual communication costs, similarly to models with simple graphs. In particular, consider a DAG with $(k-1)$ source nodes and $m$ sink nodes, where every source has a directed edge to every sink. Assume the sinks are all red, and each source takes a different one of the other $(k-1)$ colors. In this case, each hyperedge corresponding to a sink node will induce a connectivity cost of $(k-1)$, and hence the total cost is at least $m \cdot (k-1)$. In contrast to this, the actual cost of data transfer (accurately modeled by our hyperDAGs) is only $(k-1)$: the value of every source node needs to be transferred to the red processor. 

\subsection{Characterization of hyperDAGs}

Recall that if the generating node of each hyperedge is specifically marked (i.e.\ we know which hyperedge corresponds to which node), then we can easily check if a hyperDAG corresponds to a valid computational DAG: we simply convert each hyperedge back into directed edges, and verify whether the resulting directed graph is a DAG.

On the other hand, if the generating nodes are unknown, then it is not so straightforward to decide whether a hypergraph can be obtained as a hyperDAG. We have seen a simple example in Figure \ref{fig:non-hyperDAG} for a hypergraph that cannot be obtained as a hyperDAG from any computational DAG. In fact, one can observe that Figure \ref{fig:non-hyperDAG} does not satisfy several necessary conditions for hyperDAGs. For instance, any hyperDAG must have at least one node of degree $1$, since any source node of the original DAG becomes a degree-$1$ node after the transformation. Also, any hyperDAG must satisfy $|E| \leq n-1$: each hyperedge must have a distinct generator node, and the DAG must also have a sink node that does not generate a hyperedge.

Furthermore, even if a hypergraph can be obtained as a hyperDAG, it might be obtained in multiple different ways, i.e.\ the same hyperDAG might correspond to several non-isomorphic computational DAGs. E.g.\ if we consider a hypergraph on $3$ nodes with two distinct hyperedges of size $2$, then this can be obtained from two different computational DAGs: a directed path of length $2$, or a DAG with two source nodes and a single sink.

For a complete characterization of hyperDAGs (i.e.\ a necessary and sufficient condition), the key concept is that of potential source nodes, i.e.\ nodes of degree $1$: we need to have such a node in every induced subgraph. By induced subgraph in a hypergraph, we mean a subset of nodes $V_0 \subseteq V$ and the hyperedges $e \in E$ such that $e \subseteq V_0$.

\begin{lemma} \label{lem:hyperDAG1}
A hypergraph $G$ is a hyperDAG if and only if every induced subgraph of $G$ has a node of degree at most $1$.
\end{lemma}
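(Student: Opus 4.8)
The plan is to prove both directions of the biconditional. For the forward direction, I would show that every hyperDAG satisfies the subgraph condition. Suppose $G'$ is a hyperDAG arising from a computational DAG $G$. Take any induced subgraph on a node set $V_0 \subseteq V$, and consider the restriction of the DAG to $V_0$: since $G$ is acyclic, the induced subDAG on $V_0$ is also acyclic and hence has a source node $u$ (a node with no predecessors \emph{within} $V_0$). I claim $u$ has degree at most $1$ in the induced subhypergraph. The only hyperedges of $G'$ that could contain $u$ are (i) the hyperedge $\{u\}\cup S_u$ generated by $u$ itself, and (ii) hyperedges $\{w\}\cup S_w$ generated by some predecessor $w$ of $u$. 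For a hyperedge of type (ii) to be contained in $V_0$, we would need $w \in V_0$ with $(w,u) \in E$, contradicting that $u$ is a source within $V_0$. Hence only the hyperedge generated by $u$ can survive in the induced subgraph, giving $u$ degree at most $1$.

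For the reverse direction, I would argue by constructing a valid DAG, essentially reversing the characterization into an algorithm. Assume every induced subgraph has a node of degree at most $1$. The idea is to assign generator nodes and a topological order simultaneously, peeling off low-degree nodes one at a time. Take the whole hypergraph; by hypothesis it has a node $u$ of degree at most $1$. If $u$ has degree $1$, assign the unique hyperedge containing $u$ as the one generated by $u$; if degree $0$, then $u$ generates no hyperedge (a sink). Place $u$ \emph{last} in the topological order (it becomes a sink of the DAG restricted to the remaining nodes, or we process in reverse). Then remove $u$ and all hyperedges assigned to it, and recurse on $V \setminus \{u\}$; the condition is hereditary, since any induced subgraph of $G[V\setminus\{u\}]$ is also an induced subgraph of $G$. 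This yields an injective assignment of hyperedges to generator nodes. Finally I would verify that converting each hyperedge $\{u\}\cup S_u$ back into directed edges $u \to v$ for $v \in S_u$ produces an acyclic graph: orienting edges according to the peeling order (the node removed earlier is the head/successor) guarantees acyclicity, and one checks that the resulting hyperDAG of this DAG recovers $G$ exactly.

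The main obstacle I expect is the bookkeeping in the reverse direction: I must ensure that when I assign a hyperedge to its generator $u$ and delete $u$, the hyperedges \emph{not} generated by $u$ but still containing $u$ are handled correctly. Concretely, if $u$ had degree $1$, the single hyperedge containing it is exactly the one it generates, so deletion is clean; but I need to confirm that every other node's generated hyperedge, once $u$ is deleted, correctly becomes $\{w\}\cup (S_w \setminus \{u\})$, matching the induced-subgraph operation. The delicate point is proving that the orientation obtained this way is globally consistent and acyclic, and that no hyperedge gets left without a generator — this is precisely where injectivity and the count $|E| \le n-1$ come from. I would formalize this via induction on $|V|$, with the inductive hypothesis being that any hypergraph satisfying the degree condition is a hyperDAG, so that after removing $u$ the remaining hypergraph is a hyperDAG whose underlying DAG I extend by re-attaching $u$ with edges pointing to the successors recorded in its generated hyperedge.
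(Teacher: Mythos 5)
Your proof is correct and takes essentially the same approach as the paper: the forward direction via a source node of the induced subDAG having at most one incident hyperedge (the paper phrases this contrapositively), and the reverse direction via iteratively peeling degree-$\leq 1$ nodes, making each the generator of its unique incident hyperedge, with the removal process yielding a valid topological ordering. One directional slip to fix: since the peeled node $u$ \emph{generates} its hyperedge (edges $u \to v$ for $v \in S_u$), it must come \emph{first} among its hyperedge's nodes, not last --- the removal order itself is the topological order, with earlier-removed nodes as tails/predecessors; your closing induction paragraph (re-attaching $u$ with only outgoing edges, which trivially preserves acyclicity) states this correctly, so the argument stands.
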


\renewcommand*{\proofname}{Proof.}
\begin{proof}
One direction of this lemma is straightforward: every subgraph of a hyperDAG corresponds to a subgraph in the original computational DAG. However, if a node $u$ has degree $\geq 2$ in a hyperDAG, then one of the two adjacent hyperedges provides an incoming edge to $u$ in the original DAG, and hence $u$ cannot be a source node of the subDAG. As such, if we have a subgraph in our hypergraph where all nodes have degree $\geq 2$, then the corresponding subDAG can have no source nodes, which is a contradiction.

On the other hand, if the property is satisfied, then we can always create a corresponding computational DAG iteratively. In each step, let us select a node of degree $1$ in our hypergraph, and make it the generating node of its (only) incident hyperedge. We can then remove both the node and the hyperedge from our hypergraph (and additionally any other nodes that have degree $0$ after this step). The property in the lemma ensures that we can always find a node of degree $1$ in this process, since otherwise the remaining set of nodes would form an induced subgraph with all degrees $\geq 2$. The directed graph obtained from this process is indeed a DAG, since the order of removing the nodes is a valid topological ordering.
\end{proof}

In fact, this also provides an efficient algorithm to recognize whether $G$ is a hyperDAG; intuitively, we can greedily remove degree-$1$ nodes and their incident hyperedges, and $G$ is a hyperDAG if and only if we can do this (i.e.\ always find a degree-$1$ node) until all hyperedges are removed.

\begin{lemma} \label{lem:hyperDAG2}
It can be decided in linear time whether a hypergraph $G$ is a hyperDAG.
\end{lemma}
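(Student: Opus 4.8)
The plan is to convert the characterization of Lemma~\ref{lem:hyperDAG1} into a greedy ``peeling'' procedure and then to show that this procedure admits an implementation running in time linear in the input size $n + \rho$. Recall that by Lemma~\ref{lem:hyperDAG1}, $G$ is a hyperDAG if and only if every induced subgraph contains a node of degree at most $1$. The peeling procedure repeatedly removes such low-degree nodes (together with their incident hyperedge) and declares $G$ a hyperDAG exactly if it succeeds in removing every hyperedge this way.

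Concretely, I would maintain for each node its current degree and, for each hyperedge, the set of nodes it still contains, together with a queue $Q$ holding every node whose current degree is at most $1$. In each step I pop a node $v$ from $Q$. If $v$ has degree $0$ (it may have dropped to $0$ since being enqueued), I simply discard it. If $v$ has degree $1$, let $e$ be its unique remaining incident hyperedge: I fix $v$ as the generator of $e$, delete $e$, and for every other node $u \in e$ decrement the degree of $u$, enqueuing $u$ whenever its degree drops to $1$ or $0$. The loop repeats until $Q$ is empty, and the algorithm answers ``hyperDAG'' precisely when no hyperedge remains at termination.

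Correctness follows directly from Lemma~\ref{lem:hyperDAG1}. The sequence of degree-$1$ nodes chosen by the procedure is exactly the iterative construction used in that lemma's proof, so whenever the procedure removes all hyperedges it produces a valid generating assignment, witnessing that $G$ is a hyperDAG. Conversely, if the procedure halts with some hyperedges still present, then $Q$ is empty, which means every surviving node has degree at least $2$; the surviving nodes and hyperedges then form an induced subgraph in which all degrees are at least $2$, so by Lemma~\ref{lem:hyperDAG1} the hypergraph is not a hyperDAG. Thus the output is correct in both cases.

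The step requiring care --- and the main obstacle to a clean linear bound --- is the bookkeeping needed to locate the unique incident hyperedge of a degree-$1$ node and to carry out the degree updates without ever reprocessing a node--hyperedge incidence more than a constant number of times. I would resolve this with doubly linked incidence lists: each hyperedge keeps a doubly linked list of its member nodes, each node keeps a doubly linked list of its incident hyperedges, and the two stored copies of each incidence carry cross-pointers to one another. Deleting a hyperedge $e$ then costs $O(|e|)$, since $e$ can be spliced out of the list of each $u \in e$ in $O(1)$ time per incidence, and the unique hyperedge of a degree-$1$ node is read off in $O(1)$ as the sole element of its list. As every hyperedge is deleted at most once, the total deletion work is $\sum_{e \in E} O(|e|) = O(\rho)$; building the lists and initial degrees is likewise $O(n + \rho)$; and since a node is enqueued only when its degree strictly decreases, the queue processes $O(n + \rho)$ events overall. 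Combining these bounds yields the claimed linear running time.
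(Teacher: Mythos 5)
Your proposal is correct and follows essentially the same approach as the paper: greedily peel degree-$\leq 1$ nodes with their incident hyperedges using incidence lists with cross-pointers so each hyperedge deletion costs $O(|e|)$, and invoke Lemma~\ref{lem:hyperDAG1} for correctness in both directions. The only cosmetic difference is your lazy queue of low-degree nodes in place of the paper's degree-bucketed array, which changes nothing substantive.
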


\begin{proof}
This iterative node removal also gives a simple polynomial time algorithm to recognize if a hypergraph is a hyperDAG (or find a violating subset). If the iterative process fails to finish, then the hypergraph induced by the remaining nodes has degrees $\geq 2$, and hence our original hypergraph is not a hyperDAG.

The above algorithm can indeed be executed in time that is linear in the input size (the number of pins $\rho$) with the appropriate data structure. Consider a linked list representation of the incident edges for each node $v \in V$. For each hyperedge $e \in E$, we maintain a list of pointers to the list items that represents $e$ in the incidence list of each $v \in e$; this allows us to delete a hyperedge $e$ in $O(|e|)$ time. Furthermore, consider a vector with indices from $0$ to $\Delta$ where the $i$-th entry maintains all the nodes of degree $i$ as a linked list; this allows us to find a node of degree $1$ in $O(1)$ time in each step. Finally, for each $v \in V$, we have a distinct pointer to the list item in the corresponding degree list that represents $v$; this allows us to update the degrees in $O(1)$ time.

Using this structure, we can find a potential source in $O(1)$ time in each step, identify the incident hyperedge $e$, and remove $e$ from the incidence list of each other node (also updating the degrees). Whenever no nodes of degree $1$ remain, we can check whether any hyperedges have remained in the hypergraph to conclude whether we have a valid hyperDAG. The process removes each hyperedge at most once, so the running time is linear in the number of pins.
\end{proof}

Note that the iterative process also shows that the $i$-th smallest degree in a hyperDAG is at most $i$. As such, for the densest possible hyperDAG on $n$ nodes $v_1, \ldots, v_n$, we need to add a hyperedge for all $i \in \{ 1, \ldots, n-1\}$ which contains the nodes $\{ v_i, \ldots, v_n\}$. This results in a degree sequence of $(1, 2, \ldots, n-2, n-1, n-1)$.

\subsection{HyperDAG partitioning is NP-hard}

Finally, we also prove for completeness that the partitioning problem remains NP-hard if the inputs are restricted to hyperDAGs. Note that this claim seems like a weaker version of Theorem \ref{th:main}. However, strictly speaking, it is still an independent statement that does not follow directly from Theorem \ref{th:main}, since in contrast to the theorem, this claim also holds without assuming ETH.

\begin{lemma} \label{lem:hyperDAG_NPhard}
The partitioning problem is still NP-complete if we restrict the input to hyperDAGs.
\end{lemma}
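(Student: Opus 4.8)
The plan is to prove both membership and hardness, with the emphasis on the hardness being \emph{unconditional} (no ETH). Membership in NP is immediate: a partitioning $P_1,\dots,P_k$ is a certificate of polynomial size, and both the balance conditions $|P_i|\le (1+\epsilon)\cdot\frac nk$ and the cost (under either the cut-net or connectivity metric) are checkable in polynomial time. The real content is the hardness, and the point of stating this as a separate lemma is that it must be derived from a \emph{construction} rather than from the statement of Theorem~\ref{th:main}, which only asserts hardness of approximation under ETH.

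My first route would be to revisit the reduction underlying Theorem~\ref{th:main}. That reduction maps an instance of Smallest $p$-Edge Subgraph (S$p$ES) to a hyperDAG whose minimum partitioning cost equals the smallest number of vertices altogether incident to some $p$ edges of the input graph. The ETH assumption enters \emph{only} through the inapproximability of S$p$ES cited from \cite{ETHhardness}; the plain decision version of S$p$ES — does some vertex set of size at most $q$ induce at least $p$ edges? — is already NP-hard unconditionally, since it contains the clique problem as the special case $p=\binom{q}{2}$. As the reduction is polynomial and always outputs a hyperDAG, composing it with this unconditional hardness immediately yields NP-hardness of hyperDAG partitioning with no complexity assumption.

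Alternatively — and this is cleaner for a self-contained argument, since here we need neither an approximation gap nor the degree bound $\Delta=2$ — I would give a direct reduction reusing only the skeleton of Figure~\ref{fig:mainth} but replacing the ordinary blocks by \emph{hyperDAG-compatible} rigid gadgets. This replacement is forced: by Lemma~\ref{lem:hyperDAG1}, a hypergraph containing an ordinary block of size $b\ge 3$ as an induced subgraph is never a hyperDAG, because inside the block every node has degree $b-1\ge 2$. A convenient substitute is the complete bipartite DAG gadget: sources $s_1,\dots,s_a$ and sinks $t_1,\dots,t_a$ with every edge $s_i\to t_j$, so that each $s_i$ generates the hyperedge $\{s_i,t_1,\dots,t_a\}$. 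One verifies via Lemma~\ref{lem:hyperDAG1} (the sources always retain degree $1$) that this is a valid hyperDAG, and that any split placing $s\le a$ of its nodes in the minority part cuts at least $s$ hyperedges, so it behaves as an essentially unsplittable block. With these gadgets in the roles of $A$, $A'$, and the per-edge blocks, the balance constraint again forces at least $p$ edge-gadgets into the red part, each node-hyperedge is cut exactly when one of its incident edge-gadgets is red, and the optimum cost once more equals the minimum number of vertices incident to some $p$ edges. NP-hardness then follows from that of S$p$ES decision as above.

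The main obstacle in the second route is bookkeeping rather than conceptual: one has to calibrate the gadget sizes so that the threshold $(1+\epsilon)\cdot\frac nk$ forces exactly the intended number of red edge-gadgets while leaving the auxiliary and forced-blue nodes room to take their required colors, and one has to confirm that the assembled hypergraph is a hyperDAG. For the latter I would exhibit an explicit underlying DAG — equivalently, an injective assignment of hyperedges to generating degree-$1$ nodes obtained by peeling off sources exactly as in the proof of Lemma~\ref{lem:hyperDAG2} — which simultaneously certifies hyperDAG-ness and keeps the construction polynomial in the size of the S$p$ES instance.
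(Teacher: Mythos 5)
Your proposal is correct, but it takes a genuinely different route from the paper. The paper does not go through S$p$ES at all: it reduces from the general hypergraph partitioning problem (unconditionally NP-hard by \cite{GJ79}), replacing every node of an arbitrary hypergraph instance by a maximally dense hyperDAG block (degree sequence $(1,2,\ldots,m-1,m-1)$), inserting one extra ``light node'' per original hyperedge to serve as its generator, and recalibrating the balance parameter $\epsilon'$ so that feasible partitionings correspond exactly; $\epsilon=0$ is then handled via the isolated-nodes reduction of Lemma~\ref{lem:bisection_reduction}. Your first route --- composing the paper's own S$p$ES construction with the unconditional NP-hardness of the S$p$ES \emph{decision} problem (clique as the case $p=\binom{q}{2}$) --- is sound and correctly identifies that the paper's caveat (``does not follow directly from Theorem~\ref{th:main}'') concerns the theorem's \emph{statement}, not its construction, since the reduction preserves the optimum exactly and ETH enters only through the inapproximability of S$p$ES; this route is shorter if one already has the appendix machinery, whereas the paper's reduction is more modular, transferring \emph{any} hardness of general hypergraph partitioning to hyperDAGs without depending on the considerably more technical main-theorem gadgetry. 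Your second route is also workable, and your complete bipartite DAG gadget is a reasonable substitute for blocks; note, however, that it is not strictly unsplittable --- shaving a single source off costs only one cut hyperedge --- so calling it ``essentially unsplittable'' understates the work needed: as with the paper's grid gadgets (Lemmas~\ref{lem:gridcost}--\ref{lem:recolor}), you would need a majority-recoloring argument (cost within the gadget is at least the minority size $s$, while recoloring the minority newly cuts at most $s$ external hyperedges, provided each gadget node lies in at most one external hyperedge) together with a counting argument forcing at least $p$ red-majority edge gadgets, rather than the plain ``no reasonable solution splits a block'' argument that ordinary blocks afford. With that bookkeeping done as you sketch, both of your routes yield the lemma.
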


\renewcommand*{\proofname}{Proof.}
\begin{proof}
Consider an instance $G(V,E)$ of the partitioning problem for general hypergraphs, and assume we want to decide if there exists a cut of cost $L$ in this hypergraph. In case of hyperDAGs, the role of our block gadgets will be fulfilled by the densest possible hyperDAGs (i.e.\ a hyperDAG of degree sequence $(1, 2, \ldots, m-1, m-1)$) discussed before. We replace each original node $v \in V$ by such a ``hyperDAG block'' on $m$ nodes for some large parameter $m$. In case of each hyperedge $e \in E$, for all nodes $v \in e$, we only include the last node of the hyperDAG block corresponding to $v$. Finally, in each hyperedge $e$, we insert an extra node (let us call it a \textit{light node}). Note that this is indeed a hyperDAG: the light nodes can be chosen as the generator nodes for each original hyperedge $e \in E$, and the hyperDAG blocks have designated generator nodes for each hyperedge. The new number of nodes is $n'=m \cdot |V| + |E|$.

Assume that our original balance constraint is $\epsilon>0$. Then we will define a new balance constraint parameter such that $(1+\epsilon') \cdot \frac{n'}{k} = m \cdot \lfloor (1+\epsilon) \cdot \frac{|V|}{k} \rfloor +|E|$, i.e.\ we select
\[ \epsilon'=\frac{(1+\epsilon) \cdot m \cdot |V|+k \cdot |E|}{m \cdot |V|+|E|}-1 \, ; \]
this indeed provides an $\epsilon'>0$ if we have $\epsilon>0$, assuming that $m$ is chosen large enough such that $m > (k-1) \cdot \frac{|E|}{\epsilon \cdot |V|}$. Intuitively, this ensures that (i) we can only put $(1+\epsilon) \cdot \frac{|V|}{k}$ hyperDAG blocks into any partition, and (ii) we can put the light nodes in any of the parts.

If we have a solution to the original partitioning problem, then a solution of the same cost also exists in our derived problem: we can place place the entire hyperDAG block of each node $v$ into the original part containing $v$, and we can then place the light node of a hyperedge $e$ into any part that intersects $e$. This satisfies the balance constraint, and each hyperedge induces the same cost as in our original partitioning.

On the other hand, assume there is a partitioning of size at most $L$ in our derived hyperDAG. Note that $L \leq (k-1) \cdot |E|$, otherwise the problem is trivial. Let us define an $m_0$ large enough such that $m_0 > L \cdot |V|+|E|$. We then select $m=m_0+L$ for our parameter $m$. We claim that the last $m_0$ nodes of every block must be in the same partition: these $m_0$ nodes induce $(m_0-1)$ edges already, so splitting them to multiple parts would induce a cost at least $(m_0-1)>L$.

Now consider the partitioning in the original hypergraph where each $v \in V$ is placed in the same part as the last $m_0$ nodes of the corresponding block in our hyperDAG; the cost of this solution is at most as much as the cost of the original partitioning in our hyperDAG. Furthermore, the solution also satisfies the balance constraint in our original hypergraph. Assume for contradiction that the balance constraint is violated, i.e.\ there are at least $\lfloor (1+\epsilon) \cdot \frac{|V|}{k} \rfloor +1$ nodes in a partition. This implies that our hyperDAG also had at least $(\lfloor (1+\epsilon) \cdot \frac{|V|}{k} \rfloor+1)\cdot m_0$ nodes in a partition. However, recall that $m_0=m-L$ and $m > L \cdot (|V|+1)+|E|$, so we have
\begin{gather*}
(\lfloor (1+\epsilon) \cdot \frac{|V|}{k} \rfloor + 1) \cdot m_0 \, > \, \lfloor (1+\epsilon) \cdot \frac{|V|}{k} \rfloor \cdot m + m -(|V|+1) \cdot L \, > \\
> \, \lfloor (1+\epsilon) \cdot \frac{|V|}{k} \rfloor \cdot m + |E| \, = \, (1+\epsilon') \cdot \frac{n'}{k} \, ,
\end{gather*}
which contradicts the fact that our hyperDAG partitioning was balanced.

Since hypergraph partitioning is NP-hard for any $\epsilon'>0$, this completes the reduction for any $\epsilon>0$. Note that $m=O(|E| \cdot |V|)$, so the number of nodes in our hyperDAG constructions is only $n'=m \cdot |V|+|E| =O(|E| \cdot |V|^2)$.

To extend the proof to $\epsilon=0$, we can apply the approach discussed in Lemma \ref{lem:bisection_reduction}.
\end{proof}

\section{Proof of the main theorem} \label{app:inapprox}

We present the proof of Theorem \ref{th:main} in three separate parts. We first discuss the proof for general hypergraphs. We then show how to convert the construction first into a hypergraph with $\Delta=2$, and then into a hyperDAG. Also, we first assume for convenience that $k=2$, and we discuss the generalization to $k \geq 3$ after the proof.


\subsection{Theorem \ref{th:main}: general case}

We prove Theorem \ref{th:main} through the following reduction.

\begin{lemma} \label{lem:spes_reduct}
If there exist a polynomial-time approximation algorithm for the $\epsilon$-balanced hypergraph partitioning problem to a factor $\alpha(n)$ (some function of $n$), then there also exists a polynomial-time approximation algorithm for the S$p$ES problem to a factor $\alpha(c_0 \cdot n^3)$, for some constant $c_0$.
\end{lemma}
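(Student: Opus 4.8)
The plan is to prove the lemma by a polynomial reduction from S$p$ES to the $\epsilon$-balanced partitioning problem that preserves the approximation factor, so that an $\alpha(n)$-approximation for partitioning yields an $\alpha(c_0 n^3)$-approximation for S$p$ES. I would carry this out for $k=2$, where the standing assumption $\epsilon<k-1$ forces $\epsilon<1$ (the case $k\ge 3$ is treated afterwards). Given an S$p$ES instance $G(V,E)$ with $n=|V|$ and $|E|=O(n^2)$, I build a hypergraph from \emph{blocks}: two anchor blocks $A,A'$, one edge block $B_e$ for each $e\in E$, and one vertex hyperedge $h_v$ for each $v\in V$, where $h_v$ contains a single designated node of $A$ together with one node of $B_e$ for every edge $e$ incident to $v$. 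By the block lemma every block is monochromatic in any low-cost solution; taking $A$ blue and $A'$ red, the hyperedge $h_v$ is then cut precisely when at least one incident edge block is red. Consequently the cut cost of a block-monochromatic partitioning equals the number of vertices covered by the red edge blocks, so reddening edge blocks plays exactly the role of selecting edges in S$p$ES.

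The crux is to calibrate the block sizes so that the balance constraint enforces the intended behaviour. I would fix the edge-block size to $\beta=n+2$, so that splitting any block costs at least $\beta-1=n+1$, strictly more than the cost of every block-monochromatic assignment (which is at most $n$, since only the $n$ vertex hyperedges can be cut); this guarantees that the optimum keeps all blocks monochromatic. Writing $M=(1+\epsilon)\tfrac{n'}{2}$ for the capacity of a part, I then choose the anchor sizes $a=|A|$ and $a'=|A'|$ (depending on $\epsilon$) to satisfy three conditions at once: (i) $a+a'>M$, which forces $A$ and $A'$ into different parts; (ii) $a+(|E|-p)\beta\le M<a+(|E|-p+1)\beta$, so the blue part can hold at most $|E|-p$ edge blocks and hence at least $p$ edge blocks must be red; and (iii) $a'+p\beta\le M$, so the intended solution with exactly $p$ red blocks is feasible. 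I expect the main obstacle to be checking that (i)--(iii) are jointly satisfiable: this reduces to the inequality $M>|E|\beta$, whose solvability hinges on $\epsilon<1$ — indeed, $\epsilon<k-1$ is exactly what prevents two huge blocks from sharing a part in a $2$-partition. A valid choice gives $a,a'=\Theta(|E|\beta)$, so the instance has $n'=a+a'+|E|\beta=\Theta(n^3)$ nodes (and polynomially many hyperedges), yielding $n'\le c_0 n^3$ for a constant $c_0=c_0(\epsilon)$.

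With these parameters in place I would first verify the exact optimum correspondence $\mathrm{OPT}_{\text{part}}=\mathrm{OPT}_{\text{SpES}}=:L^*$. Reddening the blocks of $p$ edges drawn from an optimal S$p$ES vertex set produces a feasible block-monochromatic partitioning whose cost (the number of covered vertices) is at most $L^*$; conversely, any feasible block-monochromatic partitioning reddens at least $p$ edge blocks, so its covered-vertex cost is the size of a vertex set inducing at least $p$ edges and is therefore at least $L^*$. Finally, to transfer the guarantee, I convert an $\alpha(n')$-approximate partitioning of cost $C\le\alpha(n')\,L^*$ into an S$p$ES solution: if it is block-monochromatic, its red edge blocks cover a vertex set of size $C$ inducing at least $p$ edges, giving factor $\alpha(n')$; if it splits some block then $C\ge\beta-1=n+1>n$, so the trivial solution $V_0=V$ of size $n$ satisfies $n<C\le\alpha(n')\,L^*$ and already achieves a factor below $\alpha(n')$. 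In both cases we obtain an S$p$ES solution within factor $\alpha(n')\le\alpha(c_0 n^3)$ (using that $\alpha$ is non-decreasing), which is the claimed bound.
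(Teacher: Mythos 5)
Your proposal is correct and follows essentially the same route as the paper's proof: a reduction from S$p$ES with monochromatic edge blocks, two large anchor blocks whose combined size exceeds a single part's capacity, a balance calibration (your conditions (i)--(iii), using $\epsilon<k-1$) that forces at least $p$ edge blocks to the red side, the exact correspondence $\mathrm{OPT}_{\text{part}}=\mathrm{OPT}_{\text{SpES}}$, and the same fallback to the trivial solution $V_0=V$ when the approximate partitioning splits a block, with $n'=\Theta(n^3)$. The only deviation is immaterial: you let each vertex hyperedge intersect $A$ directly instead of introducing a separate node $b_v$ forced blue by $m$ parallel hyperedges to $A$ --- a variant the paper itself notes is valid for the general case and only complicates the later $\Delta=2$ adaptation, which lies outside this lemma.
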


\renewcommand*{\proofname}{Proof}
\begin{proof}
The main idea of the reduction has been outlined in Section \ref{sec:main}. Assume we have an instance of S$p$ES, i.e.\ a graph $G(V,E)$ (with $n=|V|$) and an integer $p$. We then create a block $B_e$ for each $e \in E$, and a node $b_v$ for each $v \in V$. We set the size each block $B_e$ to a parameter $m \geq n+1$; as such, splitting any of the blocks has a cost of at least $n$.

To model the structure of the original graph, for each $v \in V$, we add a hyperedge that contains (i) the node $b_v$, and (ii) for all $e \in E$ such that $e$ is incident to $v$, it contains an arbitrary node from the group $B_e$. We will refer to these kind of hyperedges as the \textit{main hyperedges}.

Finally, we create two more even larger blocks $A$, $A'$ to establish the role of the two colors. For each original node $v \in V$, we add $m$ distinct hyperedges that contain (i) an arbitrary node from $A$, and (ii) the node $b_v$. As such, in order to have a cut cost below $m$, we will need to ensure that all the nodes $b_v$ have the same color as $A$. 

We choose the sizes of $A$ and $A'$ carefully, such that the following two conditions hold. Firstly, we will ensure that $|A|+|A'|>(1+\epsilon) \cdot \frac{n'}{2}$ for our final number of nodes $n'=|A|+|A'|+|E| \cdot m + n$. This implies that $A$ and $A'$ will receive different colors in any reasonable solution; let us define the color of $A$ as blue, and the color of $A'$ as red. Furthermore, we also ensure that $|A'|+p \cdot m = (1-\epsilon) \cdot \frac{n'}{2}$; intuitively speaking, this means that we need to color at least $p$ of the groups $B_e$ red in order to satisfy the balance constraint.

Altogether, the hypergraph ensures that in any reasonable solution (with cost below $m$), the blocks are all uncut, the nodes $b_v$ are all blue, and at least $p$ of the groups $B_e$ are red; hence only the main hyperedges can be cut. Since $b_v$ is blue, each main hyperedge is cut exactly if at least one of the groups $B_e$ of the edges incident to $v$ is colored red. As such, the problem corresponds to choosing at least $p$ edges of the original graph (to color red) such that the number of nodes covered by these edges is minimal, which completes the reduction. Note that $n' \leq O(1) \cdot n^2 \cdot m = O(n^3)$, so an $\alpha(n')$-approximation for partitioning would also provide an $\alpha(c_0 \cdot n^3)$-approximation for S$p$ES.

For a more formal description of the reduction, let $\textsc{OPT}_{SpES}$ and $\textsc{OPT}_{part}$ denote the optimum of the original S$p$ES and the derived hypergraph partitioning problem, respectively. One can observe that $\textsc{OPT}_{SpES} = \textsc{OPT}_{part}$. Indeed, if we take any S$p$ES solution and color the corresponding $B_e$ red, we get a balanced partitioning with the same cost. On the other hand, in any partitioning that colors $A'$ and $p$ edge gadgets red (and everything else blue), we have a cost of at most $n < m$. Hence the optimal partitioning must be reasonable: it cannot split any of the blocks, it must color all the $v_b$ blue, and at least $p$ of the $B_e$ red. Given such an optimum, taking $p$ of the red edges (corresponding to the red $B_e$) provides an S$p$ES solution with at most the same cost.

Assume we have an algorithm that returns a partitioning of cost at most $\alpha(n') \cdot \textsc{OPT}_{part}$. If this solution has cost larger than $n$, we can replace it by any reasonable solution (selecting $p$ arbitrary edges); if the number of red $B_e$ is larger than $p$, we can recolor some of them to blue. These steps do not increase the cost or violate the balance constraint. We can then convert the partitioning into an S$p$ES solution of the same cost, i.e.\ cost at most $\alpha(n') \cdot \textsc{OPT}_{part} \leq \alpha(c_0 \cdot n^3) \cdot \textsc{OPT}_{SpES}$. 

It remains to discuss the size of $A$ and $A'$ in more detail. Let $s=|E| \cdot m + n$, and let us select $n'$ large enough such that $s < (1-\epsilon) \cdot \frac{n'}{2}$. This already implies $|A|+|A'|> (1+\epsilon) \cdot \frac{n'}{2}$, and also, note that $n'=O(s)=O(n^3)$. Furthermore, we need to select $|A'|$ such that $p \cdot m \leq (1-\epsilon) \cdot \frac{n'}{2} - |A'| < (p+1) \cdot m$ holds, e.g.\ by setting $|A'|=\lfloor (1-\epsilon) \cdot \frac{n'}{2} \rfloor - p \cdot m$. Finally, let $|A|=n'-s-|A'|$.
\end{proof}

\noindent Theorem \ref{th:main} follows easily from this reduction.

\renewcommand*{\proofname}{Proof of Theorem \ref{th:main}}
\begin{proof}
Given a $n^{1/(\log\log{n})^{\delta'}}$-approximation algorithm for partitioning for some constant $\delta'$, Lemma \ref{lem:spes_reduct} gives a $c_0 \cdot n^{3/(\log\log{c_0 \cdot n^3})^{\delta'}}$-approximation for S$p$ES. However, we know that there is no $n^{1/(\log\log{n})^{\delta}}$-approximation for S$p$ES for a specific $\delta>0$ if ETH holds \cite{ETHhardness}. If we select $\delta'>\delta$, then we have
\[ c_0 \cdot n^{3/(\log\log{c_0 \cdot n^3})^{\delta'}} \, < \, n^{4/(\log\log{n})^{\delta'}} < n^{1/(\log\log{n})^{\delta}} \, \]
for $n$ large enough. This contradicts to the inapproximability result on S$p$ES.
\end{proof}

\subsection{Conversion to $\Delta=2$}

When adjusting the construction to $\Delta=2$, the key challenge is to achieve the functionality of the blocks in this case. That is, we need to develop densely interconnected subgraphs that have degree $\leq 2$, but still induce a large cost in case they are split.

\begin{definition}
A \emph{grid gadget} is an $\ell \times \ell$ grid of nodes, where nodes in each row and column form a hyperedge (of size $\ell$).
\end{definition}

\begin{lemma} \label{lem:gridcost}
Consider a $2$-way (not necessarily balanced) partitioning of a grid gadget. If there are at least $t_0$ occurrences of the less frequent color in the grid, then the cut cost is at least $\sqrt{t_0}$.
\end{lemma}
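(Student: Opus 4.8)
The plan is to track, for the fixed partitioning, how the minority color (call it \emph{red}) is distributed among the $\ell$ row-hyperedges and $\ell$ column-hyperedges. Since $k=2$, the cut-net and connectivity costs coincide, so the cost is exactly the number of cut row-hyperedges plus the number of cut column-hyperedges; I write these as $a$ and $b$, so the cost is $a+b$ and the goal is $a+b \ge \sqrt{t_0}$. By assumption red is the less frequent color, which gives two facts I would invoke repeatedly: the number of red cells is at least $t_0$, and it is at most $\tfrac{\ell^2}{2}$.

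First I would classify each row and each column as \emph{all-red}, \emph{all-blue}, or \emph{cut} (containing both colors). Let $r$ and $c$ be the number of rows and columns that contain at least one red cell. The elementary geometric observation is that every red cell lies in one of these $r$ rows and one of these $c$ columns, so $t_0 \le r\cdot c$. In the clean case where no row and no column is entirely red, every red-containing row and column is necessarily cut, so $a=r$ and $b=c$; then $a\cdot b = r\cdot c \ge t_0$, and AM--GM yields $a+b \ge 2\sqrt{a b} \ge 2\sqrt{t_0}$, which is more than enough.

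The main obstacle is the presence of monochromatic all-red lines, where the identity $a=r$ can fail. Here I would use that a single all-red row already forces every column to contain a red cell, so each column is either all-red or cut. If no column is all-red, then all $\ell$ columns are cut, giving $b=\ell \ge \sqrt{t_0}$ (using $t_0 \le \tfrac{\ell^2}{2} \le \ell^2$); the symmetric argument handles a lone all-red column. The genuinely delicate subcase is when all-red rows \emph{and} all-red columns both occur: then every row and every column contains a red cell, so no line is all-blue, and consequently every blue cell must lie simultaneously in a cut row and a cut column. Hence the number of blue cells is at most $a\cdot b$, so the number of red cells is at least $\ell^2 - a b$. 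Combining this with the minority bound (red cells $\le \tfrac{\ell^2}{2}$) forces $a b \ge \tfrac{\ell^2}{2}$, and then $a+b \ge 2\sqrt{ab} \ge \sqrt{2}\,\ell \ge \sqrt{t_0}$.

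Assembling the cases, the cost $a+b$ is at least $\sqrt{t_0}$ in every situation, which is the claim. I expect the only real subtlety to be this last subcase: it is the one place where the hypothesis that red is the \emph{minority} color is indispensable (without it a fully red grid would have zero cost yet arbitrarily large $t_0$), and confining the blue cells to the $a\times b$ grid of cut lines is the crux of the bound.
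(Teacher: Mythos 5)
Your proof is correct and follows essentially the same route as the paper's: a case analysis on whether entirely-red rows and columns exist, with the minority hypothesis ruling out the degenerate all-red situation and an AM--GM bound (the paper's ``$\sqrt{t_0}\times\sqrt{t_0}$ square'' placement in disguise) finishing the clean case via $t_0 \le rc$. The only cosmetic difference is in the both-all-red subcase, where you confine the blue cells to the $a\times b$ grid of cut lines to get $ab \ge \ell^2/2$, while the paper directly counts $2\ell-(h_c+h_r)\ge\ell$ cut lines from $h_c+h_r<\ell$; both deductions rest on the same minority constraint, and yours is if anything slightly more explicit.
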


\renewcommand*{\proofname}{Proof.}
\begin{proof}
Assume the minority color is red, and let $h_c$ and $h_r$ be the number of columns and rows that are entirely red, respectively. Note that $h_c+h_r<\ell$, since red is the minority color.

If $h_c \geq 1$ and $h_r \geq 1$, then all other (i.e.\ not entirely red) rows and columns contain a blue node, so they are cut; the number of cut hyperedges is then $2 \ell - (h_c+h_r) \geq \ell \geq \sqrt{t_0}$. If $h_c \geq 1$ and $h_r =0$, then all rows have blue node, so they are cut; the number of cut hyperedges is again at least $\ell \geq \sqrt{t_0}$. The same holds if $h_c = 0$ and $h_r \geq 1$.

Finally, if $h_c = 0$ and $h_r = 0$, then each row or column containing a red node is cut. To minimize the number of rows/columns with a red node, we can place them in a $\sqrt{t_0} \times \sqrt{t_0}$ square shape; this way the number of cut hyperedges is $2 \cdot \sqrt{t_0}$.
\end{proof}

In our construction, we will replace all the blocks by a grid gadget. Consider the number of nodes of the minority color in each of the grid gadgets, and let $t$ denote the sum of these (i.e.\ the total number of minority nodes) over all the grid gadgets. It follows easily that the cut cost in our construction is at least $\sqrt{t}$.

\begin{lemma} \label{lem:totalminority}
Given several grid gadgets with a total of $t$ minority nodes, the cut cost is at least $\sqrt{t}$.
\end{lemma}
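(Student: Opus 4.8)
The plan is to reduce the statement about several grid gadgets to the single-gadget bound already established in Lemma~\ref{lem:gridcost}. Let the grid gadgets be $\Gamma_1, \ldots, \Gamma_r$, and for each gadget $\Gamma_j$ let $t_j$ denote the number of minority-color nodes \emph{within that gadget}, so that $t = \sum_{j=1}^{r} t_j$. Since each gadget is a self-contained collection of hyperedges (its rows and columns), its cut cost can be analyzed in isolation, and by Lemma~\ref{lem:gridcost} the contribution of $\Gamma_j$ to the total cut cost is at least $\sqrt{t_j}$. Summing over all gadgets, the total cut cost is at least $\sum_{j=1}^{r} \sqrt{t_j}$.

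It then remains to show $\sum_{j=1}^{r} \sqrt{t_j} \geq \sqrt{\sum_{j=1}^{r} t_j} = \sqrt{t}$. This is the only genuine content beyond invoking the previous lemma, and it is the standard superadditivity of the square root: for nonnegative reals $x_1, \ldots, x_r$ one has $\sum_j \sqrt{x_j} \geq \sqrt{\sum_j x_j}$. The cleanest way to see this is to square both sides; since $\left(\sum_j \sqrt{t_j}\right)^2 = \sum_j t_j + \sum_{j \neq j'} \sqrt{t_j}\sqrt{t_{j'}} \geq \sum_j t_j$, the inequality follows because all cross terms are nonnegative. I would state this as a short one-line justification rather than a formal induction.

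The main (very mild) subtlety I would want to be careful about is the phrase ``minority node'' at the global level versus the gadget level. In Lemma~\ref{lem:gridcost} the threshold $t_0$ refers to occurrences of the \emph{less frequent color within that single grid}; so when applying it gadget-by-gadget, each $t_j$ must be the count of the locally less frequent color in $\Gamma_j$, and the global $t$ is simply the sum of these local counts. As long as $t$ is defined this way (which matches the framing in the paragraph preceding the lemma), no reconciliation between a global and local notion of ``minority'' is needed, and the per-gadget application of Lemma~\ref{lem:gridcost} is immediate. I do not anticipate any real obstacle here — the proof is essentially a one-paragraph combination of the prior lemma with square-root superadditivity.

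\begin{proof}
Let $\Gamma_1, \ldots, \Gamma_r$ be the grid gadgets, and for each $j$ let $t_j$ be the number of minority-color nodes in $\Gamma_j$, so $t = \sum_{j=1}^{r} t_j$. The hyperedges of distinct gadgets are disjoint, so the total cut cost is at least the sum of the cut costs incurred within each gadget separately. By Lemma~\ref{lem:gridcost}, the cut cost within $\Gamma_j$ is at least $\sqrt{t_j}$. Hence the total cut cost is at least $\sum_{j=1}^{r} \sqrt{t_j}$. Finally, since all $t_j \geq 0$, squaring gives $\left(\sum_{j=1}^{r} \sqrt{t_j}\right)^2 = \sum_{j=1}^{r} t_j + \sum_{j \neq j'} \sqrt{t_j t_{j'}} \geq \sum_{j=1}^{r} t_j = t$, so $\sum_{j=1}^{r} \sqrt{t_j} \geq \sqrt{t}$, and the cut cost is at least $\sqrt{t}$.
\end{proof}
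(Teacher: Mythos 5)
Your proof is correct and follows essentially the same route as the paper: apply Lemma~\ref{lem:gridcost} gadget-by-gadget to get a cost of at least $\sum_j \sqrt{t_j}$, then conclude via $\sum_j \sqrt{t_j} \geq \sqrt{t}$. The only cosmetic difference is that you justify this last inequality by squaring (superadditivity), while the paper invokes concavity of the square root; both are valid one-line arguments.
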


\begin{proof}
Let the number of minority nodes be $t_1, t_2, \ldots$ in the different grids. According to Lemma \ref{lem:gridcost}, the total cut cost is at least $\sqrt{t_1}+\sqrt{t_2}+\ldots$. With $t_1+t_2+\ldots=t$, this is minimized if we have all the minority nodes in a single grid ($t_1=t$, $t_2=0$, \ldots) since the square root function is concave. This induces a cost of at least $\sqrt{t_1}=\sqrt{t}$.
\end{proof}

This implies that similarly to before, we can assume that any reasonable partitioning of our construction has $t \leq n^2$; otherwise it induces a cost of more than $n$, and hence we can replace it with any reasonable solution that colors each grid monochromatically.

As a next step, we define \textit{extended grids}. Given an $\ell \times \ell$ grid gadget, we add $\ell_0$ so-called \textit{outsider} nodes to the grid (for some $\ell_0 \leq \ell$), and include the $i$-th outsider node in the hyperedge corresponding to the $i$-th row. Note that each node still has degree at most $2$ after this, and our observations about minority-colored nodes in the original $\ell \times \ell$ part of the grid also remain true.

\begin{lemma} \label{lem:recolor}
Consider an extended grid where red is the minority color in the original $\ell \times \ell$ part of the grid, and every outsider node has degree at most $2$. If we recolor the entire extended grid to blue, then the total cost does not increase.
\end{lemma}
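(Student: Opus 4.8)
The plan is to track exactly which hyperedges change their cut status under the recoloring and to argue that the savings dominate the losses. First I would observe that recoloring only affects two kinds of hyperedges: the $2\ell$ internal grid edges (the row and column hyperedges), and the external hyperedges containing an outsider node. Since every grid node has degree exactly $2$ (one row edge, one column edge) and every outsider node has degree at most $2$ with one of its edges being its row edge, each outsider lies in at most one external hyperedge; no other hyperedge is touched. After recoloring everything to blue, all internal edges become monochromatic, so the internal cost drops to $0$. Writing $S$ for the number of internal edges that were cut beforehand, this $S$ is exactly the saving obtained on the internal side, and the whole lemma reduces to showing that the external loss is at most $S$.

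Next I would bound the external loss. An external edge can only move from uncut to cut, and since its outsider nodes are being turned blue, this happens only if the edge was entirely red before. Each such all-red external edge contains at least one (necessarily red) outsider node, and distinct edges use distinct outsiders, so the number of newly cut external edges is at most the number $r_o$ of red outsider nodes. (External edges that become uncut only help.) Hence the total change in cost is at most $-S + r_o$, and it remains to prove $S \ge r_o$.

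The crux is this last inequality, and it is where the minority hypothesis enters. I would classify the red outsiders by their row: since the $i$-th outsider sits in the $i$-th row, distinct outsiders occupy distinct rows. If a red outsider lies in a row whose $\ell$ grid cells are not all red, that row contains a blue grid node and is therefore cut, contributing to $S$. The only problematic red outsiders are those in all-red grid rows, and their number is at most $h_r$, the number of all-red rows. To pay for these I would turn to the columns: if $h_r \ge 1$, every column meets an all-red row and so contains a red cell, hence every column that is not itself all-red is cut, giving at least $\ell - h_c$ cut columns (with $h_c$ the number of all-red columns). The minority assumption on the $\ell\times\ell$ part gives $h_r + h_c < \ell$ (exactly as noted in the proof of Lemma \ref{lem:gridcost}), so $\ell - h_c > h_r$, i.e. the column savings strictly exceed the problematic outsiders. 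Since cut rows and cut columns are disjoint families of hyperedges, combining the row contribution $r_o - h_r$ (from non-problematic outsiders) with the column contribution $\ell - h_c > h_r$ yields $S \ge r_o$; the case $h_r = 0$ is immediate since then every red outsider sits in a cut row. Therefore the total cost change is at most $-S + r_o \le 0$.

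The main obstacle is precisely this balancing argument: a red outsider in an all-red row produces no saving from its own row, so the compensating saving must be found among the columns, and the minority condition is exactly what guarantees enough cut columns to cover these cases. The remaining steps — isolating the affected hyperedges and bounding the external loss by the number of red outsiders — are routine once the degree-$\le 2$ structure of the outsider nodes is exploited.
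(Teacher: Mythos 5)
Your proof is correct, and it in fact establishes a slightly sharper statement than the paper's own argument. The setup is shared: both proofs observe that after recoloring, every internal row/column hyperedge becomes uncut, and that the degree-$\leq 2$ condition localizes all possible losses to at most one external hyperedge per red outsider node, an edge which becomes newly cut only if it was monochromatic red before. The case analysis then diverges. The paper splits on whether $h_r \geq 1$ or $h_c \geq 1$: in that case it reuses the counting from Lemma \ref{lem:gridcost} to lower-bound the internal savings by $\ell$ and crudely upper-bounds the losses by the total number of outsiders $\ell_0 \leq \ell$; only in the remaining case $h_r = h_c = 0$ does it charge each red outsider to its own cut row, as you do. You instead prove the uniform inequality $S \geq r_o$ by a finer charging scheme: every red outsider in a non-all-red row is matched injectively to its own cut row (one outsider per row), and the at most $h_r$ outsiders sitting in all-red rows are paid for by the at least $\ell - h_c > h_r$ cut columns that the minority inequality $h_r + h_c < \ell$ guarantees once $h_r \geq 1$, with rows and columns being disjoint hyperedge families. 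What this buys is a tighter bound that never invokes $\ell_0 \leq \ell$ (replacing it with the equally definitional fact that each row hosts at most one outsider); what the paper's coarser split buys is brevity, since its first case needs no injection argument at all. Both routes ultimately rest on the same two pillars — the degree constraint on outsiders and the consequence $h_r + h_c < \ell$ of red being the minority color — so yours is best viewed as a refined accounting of the same underlying proof.
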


\begin{proof}
After recoloring, all the hyperedges within the grid will be uncut. The only (originally uncut) hyperedges that can become cut in the process are the hyperedges incident to outsider nodes that were originally red. Since the outsider nodes also have degree at most $2$, they are incident to at most one hyperedge besides the grid row that they are contained in.

Let $h_c$ and $h_r$ again be the number of columns and rows in the original part of the grid that are entirely red. Similarly to the proof of Lemma \ref{lem:gridcost}, if we have either $h_c \geq 1$ or $h_r \geq 1$, then the number of conflicts is at least $\ell$. This implies that recoloring the entire grid to blue decreases the total cost by at least $\ell$, while it introduces at most $\ell_0 \leq \ell$ new conflicts.

On the other hand, let $h_c = h_r = 0$, and consider a red outsider node $v$. Since no row of the grid is entirely red, the row containing $v$ is cut. Hence recoloring each such row to blue decreases the cost by $1$, and increases the cost (through the other hyperedge incident to $v$) by at most $1$.
\end{proof}

\begin{lemma} \label{lem:spes_degree2}
If stated with $\alpha(c_0 \cdot n^4)$ instead of $\alpha(c_0 \cdot n^3)$, Lemma \ref{lem:spes_reduct} also holds when restricted to hypergraphs of degree at most $2$.
\end{lemma}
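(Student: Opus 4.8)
The plan is to reproduce the construction of Lemma~\ref{lem:spes_reduct} but to replace every block by a grid gadget and to route all inter-block connections through outsider nodes, so that the whole hypergraph has degree at most~$2$. Concretely, each edge block $B_e$ and the large block $A$ become \emph{extended} grids of side length $\ell$ with $\ell^2=m$, while $A'$ (which needs no external connections) becomes a plain grid gadget. The representative of $B_e$ inside a main hyperedge is taken over by an outsider node of the grid $B_e$ (two outsiders per block suffice, one per endpoint of $e$), and each anchor node $b_v$ becomes an outsider node of the extended grid $A$, which eliminates the $m$ high-degree hyperedges that previously tied $b_v$ to $A$. Then every interior grid node lies in one row- and one column-hyperedge, and every outsider node lies in one row-hyperedge and one main hyperedge, so no node has degree above~$2$.

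The key quantitative change is the size of the grids. Since the trivial solution (colour $A'$ and $p$ arbitrary edge grids red, everything else blue) still cuts at most the $n$ main hyperedges, we have $\textsc{OPT}_{part}\leq n$, and by Lemma~\ref{lem:totalminority} any solution of cost at most $n$ has total minority at most $n^2$ summed over all grids. I would therefore choose $m=\Theta(n^2)$, i.e.\ $m>2n^2$, so that each grid has strictly more than twice as many nodes as the total minority budget; this guarantees that in any reasonable solution every grid has a well-defined majority colour. This is exactly where the exponent increases: because the cut cost of a split grid is only $\sqrt{t}$ rather than $t$ (Lemma~\ref{lem:gridcost}), a grid needs $\Theta(n^2)$ nodes to be effectively unsplittable, whereas the original blocks needed only $\Theta(n)$; consequently the node count grows to $n'=\Theta(|E|\cdot m)=\Theta(n^2\cdot n^2)=\Theta(n^4)$, which is what forces the factor $\alpha(c_0\cdot n^4)$.

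With these sizes fixed, I would repeat the equivalence argument of Lemma~\ref{lem:spes_reduct} up to one extra cleanup step. Given a partitioning of cost at most $n$, Lemma~\ref{lem:recolor} lets me recolour each extended grid to its majority colour without increasing the cost; applied to $A$ this also pushes all anchor nodes $b_v$ to the (blue) majority colour, recovering the property that a main hyperedge is cut precisely when one of its incident edge grids is red. The balance parameters $|A|,|A'|$ are then chosen by the same arithmetic as before (now with grid sizes in place of block sizes), so that $A$ and $A'$ receive opposite colours and at least $p$ of the edge grids must be red. After monochromatization the cost equals the number of vertices covered by the chosen $\geq p$ red edges, giving $\textsc{OPT}_{part}=\textsc{OPT}_{SpES}$ and letting an $\alpha(n')$-approximate partitioning be read off as an $\alpha(c_0 n^4)$-approximate S$p$ES solution after the usual replacement of any high-cost output by the trivial reasonable solution.

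The main obstacle is reconciling the one-directional recolouring of Lemma~\ref{lem:recolor} (which only moves an extended grid toward blue) with the grids that are \emph{meant} to be red, namely $A'$ and the chosen $B_e$, while keeping the balance constraint satisfied. I expect the delicate part to be showing that each red-majority edge grid can likewise be made monochromatically red for free, and that monochromatizing all grids does not perturb the part sizes enough to break balance, so that the forced count of at least $p$ red edge grids genuinely survives the cleanup. Verifying the degree bound at every outsider node and pinning down the exact values of $|A|$ and $|A'|$ under the new $\Theta(n^2)$ grid sizes are the remaining routine checks.
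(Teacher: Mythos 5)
Your proposal matches the paper's proof essentially step for step: the same extended-grid gadgets with outsider nodes taking over the roles of the $b_v$ and the edge-block representatives, the same $\Theta(n^2)$ grid sizes (the paper sets $\ell = 2n$, consistent with your $m > 2n^2$) justified by the minority budget $t \leq n^2$ from Lemma~\ref{lem:totalminority}, and the same counting-plus-recoloring cleanup. The one obstacle you flag is not a real one: Lemma~\ref{lem:recolor} is symmetric under exchanging the two color names (``red'' there just denotes the minority color), so it recolors red-majority grids monochromatically red without increasing cost, which is exactly how the paper applies it; and balance survives because after the counting argument forces at least $p$ red-majority edge grids, having $A'$ and these grids red already meets the threshold $\ell_{A'}\,\!^2 + p\cdot(\ell^2+2) = \lceil (1-\epsilon)\cdot\frac{n'}{2} \rceil$.
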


\begin{proof}
We replace each block in our construction with an extended grid. For each $B_e$, we use an extended grid of size $\ell \times \ell$, where we set $\ell=2 \cdot n$, and we add $2$ outsider nodes. Both outsider nodes will represent one endpoint of the original edge $e$.

We turn $A$ into an extended grid of size $\ell_A$ with $n$ outsider nodes, each representing one of the nodes $v \in V$, and taking the role of the original $b_v$. For each $v \in V$, we add a hyperedge containing $b_v$ and all the outsider nodes in the incident $B_e$ that represent $v$ as an endpoint. Note that each node has degree exactly $2$ this way.

Finally, we turn $A'$ into a grid gadget of size $\ell_{A'}$. Note that altogether we have $n'=\ell_A\,\!^2+\ell_{A'}\,\!^2+n+|E| \cdot (\ell^2 + 2)$. We select $\ell_A$ and $\ell_{A'}$ such that $\ell_A\,\!^2+\ell_{A'}\,\!^2-t > (1+\epsilon) \cdot \frac{n'}{2}$. This implies that $A$ and $A'$ will again have a different majority color, otherwise either the balance constraint is violated, or we have at least $t$ minority-colored nodes altogether, so the solution has unreasonably high cost. Furthermore, we need to ensure that $\ell_{A'}\,\!^2+p \cdot (\ell^2 + 2) = \lceil (1-\epsilon) \cdot \frac{n'}{2} \rceil$. For this, let again $s=|E| \cdot (\ell^2+2) + n$ and select $n'$ large enough to ensure $s < (1-\epsilon) \cdot \frac{n'}{2}$; this already provides the appropriate values for $\ell_A\,\!^2$ and $\ell_{A'}\,\!^2$. Assume for convenience that the resulting $\ell_A\,\!^2$ and $\ell_{A'}\,\!^2$ are square numbers (we discuss this technicality after the proof). Note that $s\leq n^2 \cdot O(n^2) + n = O(n^4)$ and $|A|, |A'| \in O(s)$, so the total number of nodes is $n'=O(n^4)$ in this case.

Given a solution to S$p$ES, we can easily convert it into a partitioning of the same cost: we color $A'$ and the extended grids of the $p$ chosen edges red, and everything else blue. This only cuts the main hyperedges that contain a node $b_v$ with at least one incident red edge.

On the other hand, given a valid partitioning, we can also turn it into an S$p$ES solution with at most the same cost. Recall that we have $t \leq n^2$, otherwise we can switch to an arbitrary reasonable solution (with $p$ arbitrarily chosen red edges). One can observe that there must be at least $p$ grids $B_e$ where the majority color is red; otherwise, we have at most $\ell_{A'}\,\!^2$ red nodes in $A'$, at most $\ell^2+2$ red nodes in the red-majority $B_e$, at most $n+2\cdot |E|$ red outsider nodes, and at most $t$ red nodes as minority-colored nodes in the remaining grids. As such, the total number of red nodes is at most
\[ \ell_{A'}\,\!^2 + (p-1) \cdot (\ell^2+2) + n+2 \cdot |E| + t \leq \ell_{A'}\,\!^2 + (p-1) \cdot (\ell^2+2) + 4 \cdot n^2 \, . \]
This is less than the required number of red nodes $(1-\epsilon) \cdot \frac{n'}{2}=\ell_{A'}\,\!^2+p \cdot (\ell^2 + 2)$, since we have $\ell^2+2 > 4n^2$ due to $\ell = 2n$.

Hence there are at least $p$ grids $B_e$ with red majority. This means that we can recolor each extended grid to the majority color, since having $A'$ and at least $p$ of the $B_e$ red already ensures that we satisfy the balance constraint. Furthermore, Lemma \ref{lem:recolor} shows that recoloring all extended grids to the majority color does not increase the cost. Hence the resulting solution corresponds to an S$p$ES solution of at most the original cost, completing the reduction.
\end{proof}

Theorem \ref{th:main} then naturally extends to this case of degree at most $2$.

We point out that this method also shows the bound for the special class of SpMV hypergraphs studied in \cite{KB20}. Besides having degree exactly $2$, these hypergraphs also have a ``bipartite property'' on their hyperedges: the hyperedges can be partitioned into two classes $E_1, E_2$ such that any two hyperedges in the same class are disjoint. Our construction also satisfies this property: the hyperedges describing grid rows are class $1$, and the remaining hyperedges (describing grid columns or main hyperedges) are class $2$. The grids themselves are then bipartite by design, and the main hyperedges always intersect with row hyperedges of the grids only.

Note that we assumed for convenience that the resulting values $\ell_A\,\!^2$ and $\ell_{A'}\,\!^2$ are squares of integers, which is rarely the case. To overcome this, one can first select $n'$ explicitly such that $\ell_A\,\!^2 = \lfloor (1+\epsilon) \cdot \frac{n'}{k} \rfloor - (|E|-p) \cdot (\ell^2+2)-n$ is a square number (we increment our original candidate for $n'$ until this is fulfilled; note that since $\frac{1+\epsilon}{k}<1$, the corresponding $\ell_A\,\!^2$ is also incremented by at most $1$ in each step, so we can find such a square number without changing the magnitude of $n'$). We then set $|A'|=n'-\ell_A\,\!^2-s$ even if this is not a square number by modifying the grid gadget of $A'$: we add an outsider node to a sufficient number of rows \textit{and} columns to increase the size of the $\ell_{A'} \times \ell_{A'}$ square grid to the required $|A'|$ (this is always possible since we can add up to $2 \cdot \ell_{A'}$ outsider nodes). These extra outsider nodes are colored red in any reasonable solution, otherwise it can be easily improved.

\subsection{Conversion to a hyperDAG}

It only remains to further adjust the construction in order to ensure that it is a hyperDAG. This is rather straightforward. We only need to add one more outsider node to the extended grids $A$ and $A'$; this does not affect the proof discussed above.

In order to show that the resulting construction is a hyperDAG, we only need to select a distinct generator node for every hyperedge without creating cycles; we discuss such an assignment. For the main hyperedges, we select the corresponding node $b_v$ as the generating node. In each of the extended grids, we first select an outsider node: the newly added node for $A$ and $A'$, and an arbitrary one of the two outsider nodes in case of all $B_e$. We make this chosen outsider node the generator of the row hyperedge that it is contained in. Then we select every other node in the same row as the generator node of the column hyperedge it is contained in. Finally, we consider the first column hyperedge of the grid, and in every row (except for the already processed row where the generator is the outsider node), we select the first node of the row as the generator node of this row hyperedge. One can check that this selection of generator nodes indeed produces a hyperDAG, i.e.\ there are no directed cycles in the resulting DAG structure.

\subsection{Generalization to $k \geq 3$}

The same reduction approach also generalizes to $k \geq 3$ colors. Following the same idea, we ensure that $\lfloor (1+\epsilon) \cdot \frac{n'}{k} \rfloor = |A|+(|E|-p) \cdot m + n$, i.e.\ besides $A$ and its outsider nodes we can only color $(E-p)$ edge blocks blue, so the remaining blocks need to attain a different color. Note that there is no motivation to color the remaining edge blocks with multiple colors: in any such case, we can recolor them all to red without increasing the cost.

If we have $2 \cdot \frac{1+\epsilon}{k} > 1$, then it is not even required to add further nodes to the construction: the entire construction can be colored by two colors, so we can recolor any solution such that all the remaining (non-blue) nodes are red. Otherwise, let $k_0$ be the smallest integer such that $k_0$ parts can cover the whole hypergraph, i.e.\ $k_0=\lceil \frac{k}{1+\epsilon} \rceil$, and let us develop $(k_0-1)$ components of equal size in the remaining graph. That is, let $T_0=\frac{1}{k_0-1} \cdot (1-\frac{1+\epsilon}{k}) \cdot n'$, and first select $|A'|$ such that $|A'|+p \cdot m =T_0$, and then add $(k_0-2)$ further blocks of size $T_0$ each. The same reduction proof still holds for this construction: for any partitioning solution, we can (i) recolor edge blocks blue until $|E|-p$ of them are blue, and then (ii) recolor the remaining edge gadgets and $A'$ red, and the $i$-th extra block to color $i$, without increasing the cost in either step. Note that these changes only increase the size of the construction by a factor $k \in O(1)$ at most.

The adjustment to $\Delta=2$ and hyperDAGs also carries over to this case. Lemma \ref{lem:gridcost} and \ref{lem:totalminority} also apply if we define minority color as any color that occurs at most $\frac{1}{2} \cdot \ell^2$ times in the grid; we always have such a color unless the grid is monochromatic. Lemma \ref{lem:recolor} holds for recoloring the grid to the most frequent color (with $h_r$ and $h_c$ denoting the number of rows/columns that are monochromatic with any other color). This allows us to use the same proof approach as in Lemma \ref{lem:spes_degree2}. In particular, in a partitioning of reasonable cost, blue can only be the most frequent color in at most $(|E|-p)$ edge grids, otherwise the number of blue nodes is at least $|A|+(|E|-p+1) \cdot (\ell^2+2)-t$, which is larger than the threshold since $(\ell^2+2) \geq n+2 \cdot |E| + t$. We can then again recolor $A$ and the blue-majority $B_e$ to blue, and $A'$ and all other $B_e$ to red without increasing cost, thus obtaining an S$p$ES solution of at most the original cost.

\subsection{Different complexity-theoretic assumptions}

Note that different inapproximability results are known for S$p$ES based on different complexity-theoretic assumptions. We now quickly outline the main idea of these assumptions, and discuss the slightly stronger versions of Theorem \ref{th:main} that they imply. The discussed form of Theorem \ref{th:main} is based on ETH, which intuitively states that if $s_3$ denotes the infimum of values $\delta$ such that $3$-SAT can be solved in time $O(2^{\delta\cdot n})$, then we have $s_3>0$.

Gap-ETH essentially states that it is not even possible in subexponential time to decide whether a $3$-SAT formula is satisfiable or whether only a $(1-\delta)$ fraction of its clauses can be satisfied for some $\delta>0$. This assumption implies that S$p$ES cannot be approximated to any $n^{f(n)}$ factor, where $f$ denotes a function such that $f(n)=o(1)$ \cite{ETHhardness}. This provides an inapproximability result of the same factor to our partitioning problem.

There are also several complexity assumptions for the hypergraph version of S$p$ES, the so-called Minimum $p$-Union (M$p$U) problem \cite{DkSH}: given a hypergraph, our goal is to select $p$ hyperedges such that their union is as small as possible. Note that while this is a more general problem, our reduction in Lemma \ref{lem:spes_reduct} is in fact straightforward to extend to this case: now each block $B_e$ will have not two, but up to $n$ incident main hyperedges. Furthermore, in the extension to $\Delta=2$, this implies that each grid $B_e$ will have up to $n$ outsider nodes; however, this does not affect the proof of Lemma \ref{lem:spes_degree2} significantly. In particular, if we add exactly $n$ outsider nodes to each $B_e$ for simplicity, then we need to replace $(\ell^2+2)$ by $(\ell^2+n)$ when adjusting the size of blocks in our construction. Furthermore, if we note that having at least $t_0$ minority-colored outsider nodes already induces a cost of at least $t_0$ (i.e.\ either the given row is cut, or it is monochromatic, and then similarly to Lemma \ref{lem:gridcost}, the cost is at least $\ell$ anyway), then the number of minority-colored outsider nodes over all grid gadgets can also be upper bounded by $t$.

The work of \cite{crypt} introduces a more complex cryptographic assumption that specific kinds of one-way functions (or alternatively, pseudo-random generators) exist. This implies the inapproximability of M$p$U to a factor $n^{\delta}$ for a given $\delta>0$. Due to our reduction from M$p$U, this again provides the same $n^{\delta}$-factor bound for the partitioning problem.

Finally, the Hypergraph Dense vs.\ Random Conjecture states that we are unable to distinguish a random and an adversarially created hypergraph (of specific densities) in polynomial time; this shows the inapproximability of M$p$U to a $n^{1/4-\delta}$ factor for any $\delta>0$ \cite{DenseVsSparse2}. In this case, it becomes significant that our reduction in Lemma \ref{lem:spes_reduct} increases the number of nodes to $O(n^3)$ (or $O(n^4)$ in case of $\Delta=2$): we only get a contradiction to this result if we have a partitioning algorithm of factor $n^{1/12-\delta}$ for general hypergraphs, or of factor $n^{1/16-\delta}$ when $\Delta=2$. As such, the inapproximability results in this case are somewhat weaker than the baseline M$p$U result.

\subsection{Parameterized complexity}

We now briefly discuss the partitioning problem from a parameterized complexity perspective, with respect to the allowed cost $L$ as a parameter. We briefly summarize the intuitive properties of the parameterized complexity classes that we will mention; for more details, we refer the reader to \cite{parameterized}.
\vspace{2pt}
\begin{itemize}
    \setlength{\itemsep}{2pt}
    \setlength{\parskip}{2pt}
    \item The class W[1] is essentially the set of problems that can be represented by a combinatorial circuit of weft at most $1$. The clique problem (deciding whether a clique of size $L$ exists in a graph) is a well-known example of a W[1]-complete problem.
    \item The class XP is essentially the set of problems that can be solved in time $n^{f(L)}$ for some computable function $f(L)$. This implies that a polynomial algorithm exists for each fixed $L$, but the exponent depends on $L$.
    \item The class para-NP is defined through non-deterministic algorithms. For our purposes, it suffices to know that a problem is para-NP-hard if there exists a fixed $L \in O(1)$ such that the problem is already NP-hard for $L$.
\end{itemize}

\renewcommand*{\proofname}{Proof of Lemma \ref{lem:parameterized} (W[1]-hardness)}
\begin{proof}
The W[1]-hardness of the problem follows directly from the reduction from S$p$ES, which is known to be W[1]-hard (it is a generalization of the clique problem). Note that the parameter remains unchanged in our reduction.
\end{proof}

\renewcommand*{\proofname}{Proof of Lemma \ref{lem:parameterized} (containment in XP)}
\begin{proof}
To show containment in XP, note that if the cost is at most $L$, then the number of cut hyperedges is at most $L$. For a simple approach, we can consider all the possible subsets $E_0$ of at most $L$ hyperedges that are cut in a solution; the number of such subsets is $\binom{|E|}{0} + \binom{|E|}{1} + \ldots + \binom{|E|}{L} \leq L \cdot |E|^L$. For each such subset of hyperedges $E_0$, and then each hyperedge $e \in E_0$, we can consider all the $2^k-1$ possible subsets of the $k$ colors, and assume that only (some of) these colors appear in $e$. We will call the combination of these assumptions a \textit{configuration}: a set $E_0$ of at most $L$ hyperedges, and for each $e \in E_0$ a subset of the $k$ colors that can appear in $e$. Note that the number of cases for each possible $E_0$ is still only $2^{k^L}$, so the total number of configurations we have to consider is at most $L \cdot |E|^L \cdot 2^{k \cdot L} \leq n^{f(L)}$. Hence if we can check in polynomial time for each configuration whether a solution exists, then the problem is indeed in XP.

Firstly, observe that the cost corresponding to each configuration can be easily computed: in case of cut-net cost, each hyperedge $e \in E_0$ induces a cost of $1$, and for connectivity cost, each $e$ induces a cost of $(\lambda_e-1)$ if $\lambda_e$ colors can appear in $e$. Note that his cost is somewhat pessimistic if not all the permitted colors appear in $e$; however, we find the solutions for such cases in the simpler configuration where the colors are not even allowed. As such, we can consider this cost for each configuration, and immediately exclude the configurations where this cost is larger than $L$. For the remaining configurations, we have to decide whether a corresponding partitioning exists.

To analyze a given configuration, we can simply remove the hyperedges $E_0$ and consider the remaining hypergraph. Note that according to our assumptions, none of the hyperedges in this remaining hypergraph is cut; as such, we can essentially contract each connected component into a single node, since these nodes need to be placed into the same part anyway. Each such contracted component has two crucial properties for our problem. Firstly, the number of nodes $a_i$ in the component, since the components will need to be separated into parts such that the corresponding sum of sizes satisfies the balance constraint. Furthermore, each component has a list of possible colors that are allowed in this component, obtained as the intersection of the allowed color set for each $e \in E_0$ that intersects this component. For example, if some nodes of the component are originally contained in two removed hyperedges $e, e' \in E_0$, and (in the current configuration) $e$ is allowed to contain the colors red and blue, while $e'$ is allowed to contain the colors red and green, then since the component has to be monochromatic, the only possible color for this component is red. If any of the components has no valid color at all, then we can immediately conclude that the current configuration is not feasible.

These steps reduce the analysis of the current configuration to the following problem. We are given a set of integers $a_1, \ldots, a_h$ ($h$ being the number of components), such that $\sum_{i=1}^h \, a_i = n$, and a nonempty subset of feasible colors for each integer. Can we partition the numbers into the $k$ (numbered) parts such that (i) each $a_i$ is in a part that is feasible for $a_i$, and (ii) the sum of numbers in each subset is at most $(1+\epsilon) \cdot \frac{n}{k}$?

Fortunately, this problem can be solved by a dynamic programming approach (somewhat similarly to the $k$-way number partitioning problem \cite{multiway}). We create a table $\tau$ with $k+1$ dimensions, with the first $k$ dimensions (with indices from $0$ to $n$) describing the number of nodes we have in the specific color, and the last dimension (indices from $0$ to $h$) used to iterate through the connected components. We initialize each cell of the table to \texttt{false}, except $\tau(0,0,\ldots,0)=$\texttt{true}. Cell $\tau(s_1, \ldots, s_k, i)$ of the table indicates whether it is possible to place the first $i$ items into the parts such that the number of nodes of color $j$ is exactly $s_j$ for all $j \in [k]$. We can then fill out this table with a simple recurrence relation: if $\tau(s_1, \ldots, s_k, i)=$\texttt{true}, then for each color $j \in [k]$ such that number $a_{i+1}$ is feasible for color $j$, we can also set cell $\tau(s_1, \ldots, s_{j-1}, s_j+a_{i+1}, s_{j+1}, \ldots,  s_k, i+1)$ to \texttt{true}. Note that the number of cells in the table is polynomial in $n$, and we only execute constantly many operations for each cell; as such, the table can indeed be filled in polynomial time. In the end, we can scan the filled table, and if we have $\tau(s_1, \ldots, s_k, h)$=\texttt{true} for any cell where $s_j \leq (1+\epsilon) \cdot \frac{n}{k}$ for all $j \in [k]$, then the desired partitioning indeed exists.
\end{proof}

\section{Multi-constraint partitioning} \label{app:multi}

We now present the proofs for our claims in Section \ref{sec:multi_sub}. Note that in contrast to the ordering in the main part of the paper, we discuss these proofs before the layer-wise claims in Section \ref{sec:layers}, since many of the techniques in this section can then be conveniently adapted to the layer-wise setting.

\subsection{With $O(1)$ constraints: approximation} \label{app:multi:approx}

We first consider the multi-constraint partitioning problem in the case when the number of constraints is only $c \in O(1)$. For this setting, we show that there is a natural reduction from the multi-constraint $k$-section problem to the original (single-constraint) version of $k$-section. Unfortunately, this reduction increases the size of the graph to $n' \approx n^{(c+1)}$; as such, it does not allow us to extend the strongest known upper bounds (e.g.\ the $\widetilde{O}(\sqrt{n})$-approximation algorithm of \cite{BisectionApprox}) to the multi-constraint case. However, we will show that we can still use this method to establish some non-trivial upper bounds for specific kinds of hypergraphs in the multi-constraint case.

Note that in contrast to most of our results, this reduction only considers the $k$-section problem: our proof approach is not straightforward to generalize to $\epsilon>0$ values, and the corresponding upper bounds from \cite{BisectionApprox} are only stated for the bisection problem anyway.

Our reduction from the multi-constraint case to the standard partitioning problem is as follows.

\begin{lemma} \label{lem:O1reduction}
If there is a polynomial-time approximation algorithm for the $k$-section problem to a factor $\alpha(n)$ (some function of $n$), then for any $c \in O(1)$, there also exists a polynomial-time approximation algorithm for the $c$-constraint case of $k$-section to a factor $O(\alpha(n^{(c+1)}))$.
\end{lemma}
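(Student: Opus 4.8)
The plan is to reduce the $c$-constraint $k$-section problem on a hypergraph $G$ to the ordinary (single-constraint) $k$-section problem on a larger hypergraph $G'$, encoding the $c$ independent balance constraints into the one global balance constraint of $G'$ by a positional (base-$N$) weighting. Concretely, I would fix a base $N$ larger than both $n$ and the maximum possible cut cost (so $N > |E| \ge \textsc{OPT}$), and replace every node $v \in V_j$ by a \emph{block} of size $N^{j-1}$ in the sense of Appendix~\ref{app:basics}; since blocks are monochromatic in any reasonable solution, such a block behaves exactly like a single node of weight $N^{j-1}$. Each original hyperedge would be rerouted so that, for every incident node $v$, it contains one fixed representative of the block $B_v$. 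Because the block is monochromatic, the representative always carries the block's colour, so an original hyperedge is cut in $G'$ exactly when it is cut in the corresponding colouring of $G$, while the block-internal hyperedges are never cut; hence the cut-net (and likewise connectivity) cost is preserved exactly on reasonable solutions.

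The heart of the argument is that the single balance constraint of $G'$ forces all $c$ constraints of $G$ at once. For $\epsilon=0$ a $k$-section of $G'$ requires $|P_i| = n'/k$ for every part $P_i$. Writing $x_{i,j} = |P_i \cap V_j|$, the size of $P_i$ equals $\sum_{j=1}^{c} x_{i,j}\, N^{j-1}$, and since $0 \le x_{i,j} \le |V_j| < N$ this is precisely the base-$N$ representation of $|P_i|$. By uniqueness of digits, $|P_i| = n'/k = \sum_j (|V_j|/k)\,N^{j-1}$ can hold only when $x_{i,j} = |V_j|/k$ for every $j$, i.e.\ exactly when the $c$-constraint $k$-section condition is met. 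I would pad each $V_j$ with a few isolated nodes so that $|V_j|$ is divisible by $k$, and handle any unconstrained remainder $V_0 = V \setminus \bigcup_j V_j$ by placing its nodes in the lowest digit together with $(k-1)\,|V_0|$ isolated dummy nodes, chosen so that this digit can always be balanced by the dummies without ever carrying into the higher digits; this keeps $V_0$ genuinely unconstrained.

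With the correspondence in place, reasonable $G'$-solutions (those with all blocks monochromatic) are in cost-preserving bijection with feasible $c$-constraint $k$-sections of $G$, so $\textsc{OPT}(G') = \textsc{OPT}_{\mathrm{multi}}(G)$. Running the assumed $\alpha$-approximation on $G'$ returns a solution of cost at most $\alpha(n')\cdot \textsc{OPT}(G')$; as in Lemma~\ref{lem:spes_reduct}, if this solution splits a block its cost already exceeds $N-1 > \textsc{OPT}$, so it can be replaced by any reasonable solution without increasing the cost, and a reasonable solution maps back to a feasible multi-constraint $k$-section of no larger cost. Since $N$ must be polynomial in $n$ (exceeding both $n$ and $|E|$) and there are $c$ positional digits, the construction has $n' = O(n^{c+1})$ nodes, yielding an $O(\alpha(n^{c+1}))$-approximation as claimed.

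I expect the main obstacle to be exactly this prevention of ``cheating'' via block splitting and the clean handling of the remainder $V_0$: the positional encoding is faithful only as long as no part can accumulate enough low-weight mass to produce a carry, which would otherwise let a solution satisfy the global constraint while violating one of the $c$ constraints at lower cut cost. Guaranteeing this forces $N$ to dominate both $n$ and the total cost, which is precisely the source of the polynomial blow-up and the reason the strong $\widetilde{O}(\sqrt{n})$ bound of \cite{BisectionApprox} does not transfer to the multi-constraint case; the remaining verifications (divisibility padding, that the dummies contribute no cost, and that the representative routing preserves each $\lambda_e$) are routine.
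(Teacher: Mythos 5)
Your core idea---geometric block sizes acting as a positional (base-$N$) encoding, with digit uniqueness forcing every constraint at once---is essentially the paper's own proof: the paper replaces each node of $V_i$ by a block of size $m_i = n_0\,\!^i$ and runs a top-down induction over the constraints, which is the same argument as your digit-uniqueness claim. However, there are two genuine gaps. First, your weights start at $N^{1-1}=1$, so the nodes of $V_1$, the nodes of $V_0$, and your $(k-1)\cdot|V_0|$ dummies all occupy the \emph{same} lowest digit. Digit uniqueness then only pins down the total number of weight-one nodes in each part (namely $|V_1|/k + |V_0|$), not $|P_i \cap V_1|$: a part can take too many $V_1$-nodes and compensate with fewer $V_0$/dummy nodes, so the balance constraint on $V_1$ is simply not enforced by the single global constraint. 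The paper avoids this by reserving weight $1$ exclusively for the unconstrained nodes and giving even $V_1$ blocks of size $n_0$ (note $m_1 = n_0$, which strictly dominates the total mass $<n_0$ of unconstrained nodes); in your scheme the fix is to use weight $N^{j}$ rather than $N^{j-1}$ for $V_j$.

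Second, your mechanism for preventing cheap block splitting---taking the base $N$ larger than the maximum cut cost, i.e.\ $N > |E|$ (for the connectivity metric you would in fact need $N > (k-1)\cdot|E|$)---breaks the claimed size bound. The paper's standing assumption is only $|E| \leq n^h$ for some constant $h$, and then $n' \approx n \cdot N^{c-1}$ (or $n \cdot N^{c}$ after the fix above) is $\omega(n^{c+1})$ whenever $|E|$ is superlinear: already $|E| = n^2$ with $c=2$ gives $n' = \Theta(n^5)$, so your argument yields a factor $O(\alpha(\poly(n)))$ but not the claimed $O(\alpha(n^{c+1}))$. This is not a pedantic regime: the paper notes right after the lemma that the resulting upper bounds are only meaningful precisely when $|E|$ is much larger than $n$. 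The paper decouples block robustness from block \emph{size} by making blocks denser instead of bigger: a block on $b$ nodes contains every subset of at least $b-h-2$ nodes as a hyperedge, so splitting any block costs at least $\binom{n_0-1}{h+1} = \Theta(n^{h+1}) \geq k \cdot n^h$, exceeding the cost of any reasonable solution, while block sizes remain $n_0\,\!^i$ and $n' = O(n^{c+1})$ is preserved. Your remaining steps (routing original hyperedges to block representatives, cost preservation in both directions, and replacing any block-splitting output by a reasonable solution) are correct and match the paper.
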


\renewcommand*{\proofname}{Proof.}
\begin{proof}
Assume we have an $\alpha(n)$-approximation for the standard $k$-section problem; we show how to solve $k$-section with $c$ constraints. Given an instance of the $c$-constraint problem on a hypergraph $G(V,E)$ with $n$ nodes and balance constraint classes $V_1, \ldots, V_c$, we first add $(k-1) \cdot |V \setminus \cup_{i=1}^c V_i|$ isolated nodes to the graph; these will ensure that the nodes that are not included in any balance constraint can be sorted into any part of our choice in the single-constraint case. Also, we assume for convenience that $|V_i|$ is divisible by $k$ for each $i \in [c]$ (otherwise either no solution exists, or in the relaxed version of the problem where parts of size $\lceil \frac{|V_i|}{k} \rceil$ are allowed, we can add at most $(k-1)$ more isolated nodes to each $V_i$). Let us denote the new number of nodes in our hypergraph after the addition of all these isolated nodes by $n_0$; note that $n_0 \leq k \cdot n = O(n)$. 

We consider each constraint class $V_i$ with $i \in [c]$, and we replace each node in $V_i$ by a block of size $m_i$. As in our previous proofs, we will ensure that splitting any of the blocks results in an unreasonably high cost (the design of our blocks is somewhat more technical in this case; we defer the discussion of this to the end of the proof). Apart from replacing our nodes with large blocks, our original hyperedges remain unchanged; as such, any solution that does not cut a block has the same cost as the same coloring of original nodes (instead of blocks) in the original hypergraph.

Let us select $m_i = n_0\,\!^i$ as the size of our blocks; note that this ensures $m_i=n_0 \cdot m_{i-1}$. Since $c \in O(1)$, this means that the size of our new hypergraph is altogether
\[ n' \leq n_0 \cdot m_c = n_0\,\!^{(c+1)} \leq k^{(c+1)} \cdot n^{(c+1)} =O(n^{(c+1)}) \, . \]

Now let us consider the balance constraints. First, note that any valid partitioning in the initial hypergraph also provides a valid partitioning after this transformation, since the nodes in a specific balance constraint are increased to the same size, and the extra isolated nodes can always be colored in the appropriate way. For the opposite direction, assume that a partitioning (with monochromatic blocks) fulfills the single balance constraint in the transformed hypergraph; we claim that the corresponding partitioning fulfills each balance constraint $V_i$ separately. Consider an induction in $i$, going from $c$ to $1$, and assume that balance constraints $V_c, \ldots V_{i+1}$ are already fulfilled; this means that each of these sets contain an identical number of nodes from each part, so by removing them all, the node set of the transformed hypergraph remains balanced. Recall that each block generated from $V_i$ has size $m_i$, whereas the total number of remaining nodes besides $V_i$ is at most $(n_0-1) \cdot m_{i-1} < m_i$ (assuming $|V_i| \geq 1$). Assume that the blocks of $V_i$ are not colored in a balanced way, and hence there is a color with at most $\frac{|V_i|}{k}-1$ blocks from $V_i$. However, then the total number of remaining nodes of this color is at most
\[ \left(\frac{|V_i|}{k}-1\right) \cdot m_i + (n_0-1) \cdot m_{i-1} < \frac{|V_i|}{k} \cdot m_i \, . \]
As we still have at least $|V_i| \cdot m_i$ nodes remaining in the graph, this contradicts the fact that the single balance constraint is satisfied in our transformed graph.

This shows that as long as no blocks are cut, the solutions in the original and transformed hypergraphs are in a $1$-to-$1$ correspondence with identical cost. As such, any $\alpha(n)$-approximation for the standard $k$-section problem also provides an approximation for the $c$-constraint $k$-section problem to an $\alpha(n')=\alpha(k^{(c+1)} \cdot n^{(c+1)})$ factor.

It remains to discuss a technical detail about our blocks. Note that our smallest blocks have size $n_0 \leq k \cdot n$; since the number of hyperedges can be larger than this, this does not necessarily ensure that all blocks are unsplit in an optimal solution. As such, we require a more complex block gadget for the case when we have $|E|=\omega(n)$ in our input hypergraph. Recall our assumption that the number of hyperedges is polynomial in $n$, i.e.\ $|E| \leq n^{h}$ for some small constant $h$; this implies that the cost of any solution (that does not split blocks) is also at most $(k-1) \cdot |E| \leq k \cdot n^h$. As such, in this case, we will define a block on $b$ nodes to contain every possible subset of at least $(b-h-2)$ nodes as a hyperedge. This implies that every node $v$ in any of our blocks has a degree of at least ${b-1 \choose h+1}$, since this is the number of subsets of size $(b-h-2)$ containing $v$. Since we have $m_i \geq n_0$, this means that regardless of how we split a block, we incur a cost of at least ${n_0-1 \choose h+1}$. As ${n_0-1 \choose h+1}=\Theta(n\,\!^{(h+1)}) \geq k \cdot n^h$ for $n$ large enough, this implies that any solution splitting a block is suboptimal. Note that even with these more complex blocks, the number of hyperedges in our constructions is at most $O(n'\cdot m_c\,\!^{(h+2)})$, i.e.\ still polynomial in $n$.
\end{proof}

If we combine this reduction with the $\widetilde{O}(\sqrt{n})$-approximation from \cite{BisectionApprox}, then we only get an approximation with a $\widetilde{O}(n^{(\frac{c+1}{2})})$ factor. This is larger than $n$ already for $c=2$, so this upper bound is only meaningful in hypergraphs where the number of hyperedges is significantly higher than $n$. Alternatively, the work of \cite{BisectionApprox} also provides a $\widetilde{O}(n^{\gamma})$-factor approximation algorithm for hypergraphs where either (i) every hyperedge has size at most $O(n^{\gamma})$, or (ii) every hyperedge has size at least $\Omega(n^{(1-\gamma)})$. In this case, our reduction provides an approximation of a $\widetilde{O}(n^{\gamma \cdot (c+1)})$ factor for the multi-constraint problem; if $\gamma$ is small, then this can be significantly lower than $n$ for several $c$ values. As such, the reduction indeed allows us to establish some upper bounds on restricted classes of hypergraphs.

\subsection{With $O(1)$ constraints: parameterized complexity}

We now show that with $c=O(1)$ constraints, the partitioning problem remains within XP in terms of the allowed cost $L$.

\renewcommand*{\proofname}{Proof of Lemma \ref{lem:multi_const}, second part}
\begin{proof}
For this, we require a more sophisticated version of the dynamic programming approach discussed in Lemma \ref{lem:parameterized}. We again check every configuration where at most $L$ hyperedges are cut, and contract the connected components of the remaining hypergraph into integers $a_i$.

For each of configuration, we need to check if there is a feasible partitioning of the numbers $a_i$ that satisfies all the $c$ balance constraints. For this, we modify our original dynamic programming approach to a table that has $c \cdot k + 1$ dimensions: the last dimension again iterates over the numbers (indices from $0$ to $h$), whereas the first $c \cdot k$ dimensions each correspond to a combination of a balance constraint and a color (with indices from $0$ to the size of the balance constraint). For $j \in [k]$, $j' \in [c]$, there is a dimension that describes the number of nodes of color $j$ in constraint $V_{j'}$; that is, we set $\tau(s_1\,\!^{(1)}, \ldots, s_1\,\!^{(c)}, s_2\,\!^{(1)}, \ldots, s_2\,\!^{(c)}, \ldots, s_k\,\!^{(1)}, \ldots, s_k\,\!^{(c)}, i)$ to \texttt{true} if there exists a feasible partitioning of the first $i$ integers such that for all $j \in [k]$, $j' \in [c]$, the number of nodes of color $j$ in balance constraint $V_{j'}$ is exactly $s_{j}\,\!^{(j')}$. We can fill out this table in the same way as before: if a given cell $\tau(s_1\,\!^{(1)}, \ldots, s_k\,\!^{(c)}, i)$=\texttt{true}, then we consider the $(i+1)$-th component and all the feasible colors $j \in [k]$ for this component, and we also set $\tau(s_1\,\!^{(1)}, \ldots, s_1\,\!^{(c)}, \ldots, s_j\,\!^{(1)}+I_1, \ldots, s_j\,\!^{(c)}+I_c, \ldots, s_k\,\!^{(1)}, \ldots, s_k\,\!^{(c)}, i+1)$ to \texttt{true}, where $I_{j'}$ denotes the intersection size of this $(i+1)$-th component with balance constraint $V_{j'}$. A given configuration is valid if there is a \texttt{true} cell $\tau(s_1\,\!^{(1)}, \ldots, s_k\,\!^{(c)}, h)$ in the end such that $s_{j}\,\!^{(j')} \leq (1+\epsilon) \cdot \frac{|V_{j'}|}{k}$ for all $j \in [k]$, $j' \in [c]$. Note that the size of the table is $O(n^{c \cdot k + 1})$, so each configuration can be checked in polynomial time.
\end{proof}

\subsection{General tools for our negative results}

We continue by discussing a lemma which will be useful in several proofs in the rest of the section. For simplicity, we first state the lemma for the simplest case of $k=2$. Assume that we have a specific number of \textit{fixed} red and blue nodes in our construction, i.e.\ nodes for which the rest of the construction ensures that they always take the colors red and blue, respectively. Our lemma states that one can easily fill up a balance constraint set (for any $\epsilon \geq 0$) with the appropriate number of fixed red and blue nodes in order to achieve a desired behavior on a specific set $S$.

\begin{lemma} \label{lem:enforce}
Assume we have a set of nodes $S$ and an integer $0 \leq h \leq |S|$. We can form a set $V_0$ consisting of $S$ and a specific amount of fixed red and blue nodes such that $|V_0|=O(|S|)$, and the balance constraint is satisfied in $V_0$ if and only if
\vspace{2pt}
\begin{itemize}
    \setlength{\itemsep}{2pt}
    \setlength{\parskip}{2pt}
    \item at most $h$ nodes in $S$ are colored red (in case of $\epsilon>0$),
    \item exactly $h$ nodes in $S$ are colored red (in case of $\epsilon=0$).
\end{itemize}
\end{lemma}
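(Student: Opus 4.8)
The plan is to pad $S$ with a controlled number of colour-fixed nodes and then read off the balance constraint as an interval condition on how $S$ is coloured. Write $s=|S|$ and let $x$ denote the number of red nodes of $S$ under a given colouring. I would add $r$ fixed-red and $b$ fixed-blue nodes to $S$ to form $V_0$, so that $|V_0|=s+r+b$ and the red and blue counts in $V_0$ are $r+x$ and $b+(s-x)$ respectively. Writing $T=\lfloor(1+\epsilon)\,|V_0|/2\rfloor$ for the per-part threshold, the balance constraint on $V_0$ holds exactly when $r+x\le T$ and $b+(s-x)\le T$, i.e.\ when $x$ lies in the interval $[\,s+b-T,\;T-r\,]$. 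The whole task therefore reduces to choosing $r$ and $b$ so that this interval of admissible values of $x$ becomes $\{0,1,\dots,h\}$ (for $\epsilon>0$) or the single value $\{h\}$ (for $\epsilon=0$), while keeping $r,b\ge 0$ and $r+b=O(s)$.

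For $\epsilon>0$ I would force the upper endpoint to be $h$ and make the lower endpoint vacuous. Setting $r=T-h$ gives the upper bound $x\le h$; the lower bound $x\ge s+b-T$ becomes vacuous (holds for all $x\ge 0$) exactly when $T\ge s+b$. Substituting $T=h+r$ and $T=(1+\epsilon)|V_0|/2$ and solving the resulting linear relation shows this slackness condition is equivalent to a lower bound of the form $b\ge\tfrac{(1+\epsilon)h}{2\epsilon}-s$, and a short computation also gives $r\le\tfrac{(1+\epsilon)(s+b)}{1-\epsilon}$. Since $\epsilon$ is a fixed constant and $h\le s$, choosing for instance $b=\lceil\tfrac{(1+\epsilon)h}{2\epsilon}\rceil$ makes $b=O(s)$ and then forces $r=O(s)$ with $r\ge 0$, so that $|V_0|=O(s)$ as required.

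For $\epsilon=0$ the argument is cleaner, because the threshold forces the two colours to occur equally often: with $|V_0|$ even, balance holds iff the red count is exactly $|V_0|/2$, which pins $x=|V_0|/2-r=(s+b-r)/2$ to a single value. I would therefore choose $r,b\ge 0$ with $b-r=2h-s$ (namely $r=0,\,b=2h-s$ if $2h\ge s$, and $b=0,\,r=s-2h$ otherwise); this keeps $|V_0|$ even, pins $x=h$, and gives $|V_0|\le 2s$. The balance constraint on $V_0$ is then satisfiable precisely when exactly $h$ nodes of $S$ are red.

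The one delicate point, and the main obstacle, is integrality in the $\epsilon>0$ case, i.e.\ arranging that the endpoint $T-r$ lands exactly on the integer $h$ rather than slightly above or below it. Here I would use that $(1+\epsilon)/2<1$ (valid since $\epsilon<k-1=1$): as $r$ increases by $1$ the quantity $T-r$ decreases by $0$ or $1$, so it is non-increasing in unit steps and therefore attains every integer value between its value at $r=0$ (which is $\ge h$ once $b$ is large enough) and $-\infty$; in particular it attains $h$, and at that $r$ the slackness bound $T\ge s+b$ still holds, since increasing $b$ only relaxes it. Any residual off-by-one can be absorbed by adjusting the count of fixed nodes by $O(1)$, or by invoking the relaxed threshold $\lceil(1+\epsilon)|V_0|/2\rceil$ from Appendix~\ref{app:basics}. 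The fixed-colour nodes themselves are supplied by the surrounding construction (e.g.\ attached to the global monochromatic gadgets), so their colours may indeed be treated as fixed.
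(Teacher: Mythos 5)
Your proposal is correct and follows essentially the same route as the paper's proof: pad $S$ with fixed red and blue nodes so that the red-count threshold is saturated exactly when $h$ nodes of $S$ are red, while the blue side can never violate the constraint (vacuous lower endpoint), with the $\epsilon=0$ case pinned by parity. The only difference is bookkeeping: the paper fixes the total size $m=|V_0|$ first (requiring $\frac{\epsilon}{2}m>h$ and $\frac{1-\epsilon}{2}m>|S|-h$) and then back-solves $|R_0|=\lfloor\frac{1+\epsilon}{2}m\rfloor-h$ and $|B_0|=m-|S|-|R_0|$, which sidesteps the fixed-point/integrality dance you resolve via the unit-step monotonicity of $T-r$.
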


\renewcommand*{\proofname}{Proof.}
\begin{proof}
Let $m=|V_0|$ be the size of the balance constraint we create. First consider $\epsilon>0$. In order to fit the set $S$ into $V_0$ appropriately, we need to ensure that $\frac{\epsilon}{2} \cdot m > h$, and also that $\left(1- \frac{1+\epsilon}{2} \right) \cdot m > |S| -h$; since $\epsilon$ is a constant, this is indeed possible with a choice of $m=O(|S|)$. We then add a set $R_0$ of $\lfloor \frac{1+\epsilon}{2} \cdot m \rfloor - h$ fixed red nodes to $V_0$; note that since $\frac{\epsilon}{2} \cdot m > h$, we have $|R_0| \geq \frac{m}{2}$, so the the balance constraint can never be violated by having too many blue nodes. Finally, we add a set $B_0$ of $m-|S|-|R_0|$ fixed blue nodes to $V_0$. This ensures that the balance constraint is fulfilled exactly if at most $h$ of the nodes in $S$ are red.

For the special case of $\epsilon=0$, we ensure that $\frac{m}{2} > h$ and $\frac{m}{2} > |S| - h$, and select an even $m$ value. We then set $|R_0|=\frac{m}{2} - h$ and $|B_0|=\frac{m}{2} - (|S| - h)$. The constraint is fulfilled exactly if $h$ nodes in $S$ are red.
\end{proof}

Naturally, the same argument holds for the color blue, and hence also the variant where the constraint holds if \textit{at least} $h$ nodes are colored red/blue.

Note that the lemma is slightly different for the special case of bisection, where we must require that exactly $h$ nodes are red. When discussing our constructions, we will often focus on the general case of $\epsilon>0$, but our proof techniques also carry over naturally for this special case. In particular, by adding $h$ further isolated nodes to $S$ in the beginning, we can also turn the ``exactly'' constraint of $\epsilon=0$ to an ``at most'' constraint: whenever the number of red nodes in the original part of $S$ is less than $h$, we can color some of the isolated nodes red instead to ensure that the number of red nodes in $S$ is exactly $h$.

In many of our constructions, we will explicitly add \textit{fixed blocks} to have the sufficient number of fixed nodes. That is, assume we want to decide if there is a multi-constraint partitioning with cost $0$ in a given construction. We can add two large blocks of $m_0$ nodes each, both contained in a single hyperedge of size $m_0$, and we combine the two blocks together in a separate balance constraint. This implies (for any $\epsilon$) that the union of the two blocks must have both a red and a blue node. Hence the only way to partition the blocks without incurring any cost is to color one of the blocks red, the other one blue; w.l.o.g.\ let us define the color red as the color of the first block. We can use the nodes in the two blocks as fixed red and blue nodes in order to apply Lemma \ref{lem:enforce} on specific subsets of nodes. Note that the total size of our input sets $S$ is at most $n$, so the number of required fixed nodes is at most $O(n)$; as such, adding the fixed blocks does not change the magnitude of the number of nodes in the graph.

In the proof of Theorem \ref{th:layers}, we will also require a slightly different variant of Lemma \ref{lem:enforce} where $V_0$ already contains initially a predetermined number $m_0$ of occurrences of both colors besides $S$, and we again add further fixed red and blue nodes to the set $V_0$ besides these $2 \cdot m_0 + |S|$ original nodes. The lemma easily extends to this case: we simply need to select $m$ large enough such that $|R_0| \geq m_0$ and $|B_0| \geq m_0$, and consider these original nodes as part of the fixed node set of the given color.

Finally, we point out that similarly to Lemma \ref{lem:enforce}, we present our constructions in the rest of this section specifically for the case of $k=2$ colors for simplicity. However, the same proof techniques can be generalized to an arbitrary $k \in O(1)$ number of colors; we discuss this briefly after the proofs in Appendix \ref{app:sec:morecolors}.

\subsection{With $n^{\delta}$ constraints}

Now let us consider multi-constraint partitioning when $c \geq n^{\delta}$ for some $\delta>0$.

\renewcommand*{\proofname}{Proof of Lemma \ref{lem:multi_lin}}
\begin{proof}

We provide a reduction from 3-coloring in this case; assume we have a graph $G(V,E)$ that we need to color with colors $i \in \{ 1, 2, 3 \}$. For each node $v \in V$ with degree $deg_v$, we create $3 \cdot deg_v$ nodes altogether: for each edge $e \in E$ incident to $v$ and each color $i \in [3]$, we create a separate node labeled $w_{v,e,i}$. Furthermore, for each node $v \in V$ and each color $i \in [3]$, we create two further nodes $\hat{w}_{v,i,1}$ and $\hat{w}_{v,i,2}$. For every node $v_0 \in V$ and fixed color $i_0$, we add a hyperedge of size $(deg_v+2)$ which contains all the $w_{v_0,e,i_0}$ for all $e \in E$ incident to $v$, plus the nodes $\hat{w}_{v_0,i_0,1}$ and $\hat{w}_{v_0,i_0,2}$.

For each node $v_0 \in V$, we introduce two balance constraints: one constraint to ensure that at most one of the $3$ nodes $\hat{w}_{v_0, i, 1}$ (for $i \in [3]$) are red, and another constraint to ensure that at least one of the $3$ nodes $\hat{w}_{v_0, i, 2}$ (for $i \in [3]$) are red. We can indeed do this according to Lemma \ref{lem:enforce}. Finally, for each edge $e_0=(u,v) \in E$ and each color $i_0$, we create a separate balance constraint for the nodes of $w_{u,e_0,i_0}$ and $w_{v,e_0,i_0}$, ensuring in a similar fashion that at most one of these two nodes can be red.

To obtain a valid partitioning of the hypergraph with cost $0$, each hyperedge needs to be monochromatic. This means that we need to select exactly one of the three colors for each node $v$, i.e. color the nodes $w_{v,e,i}$ and $\hat{w}_{v,i}$ red for exactly one $i \in [3]$ in case of every $v$. Furthermore, the balance constraints on the edges ensure that if $(u,v) \in E$, then $u$ and $v$ must have different colors. Hence a valid partitioning exists if and only if the graph is $3$-colorable. The number of balance constraints is $2 \cdot n+3 \cdot |E|$, which we can upper bound by $n^3$ for $n$ large enough; let us add further isolated nodes to the graph until the number of nodes is $\hat{n}=n^{3/\delta}$. This ensures that the number of constraints is indeed less than $\hat{n}^{\delta}$ for any desired constant $\delta \in (0,1)$, but the size $\hat{n}$ of the construction is still polynomial in $n$.

This shows that it is already NP-hard to decide whether the optimal partitioning has cost $0$ or larger. Hence we cannot have a finite-factor approximation to the problem in polynomial time (unless P=NP), and the problem is para-NP-hard in terms of the allowed cost. 
\end{proof}

\subsection{With $\omega(\log{n})$ constraints}

Finally, we consider the case when the number of balance constraints is at least slightly larger than logarithmic, i.e.\ in $\omega(\log{n})$. For this case, we show that in subquadratic time (i.e. $n^{2-\delta}$) no finite approximation ratio is achievable unless we falsify SETH. We apply SETH through the Orthogonal Vectors Problem (OVP): given a set of $m$ binary vectors $A$ of dimension $D=\omega(\log{m})$, we need to decide if there are $a_1, a_2 \in A$ such that $a_1 \circ a_2 = 0$, where $\circ$ denotes a vector dot product. It is known that this cannot be decided in $O(m^{2-\delta})$ time for any $\delta>0$, assuming that SETH holds \cite{OVPhard}.

\renewcommand*{\proofname}{Proof of Theorem \ref{th:OVP_quadratic}}
\begin{proof}
Given an instance of OVP, we create a gadget on $D+1$ nodes for each vector $a_i \in A$: $D$ distinct nodes $v_i\,\!^{(j)}$ that correspond to the dimensions $j \in [D]$, and an anchor node $u_i$. Furthermore, we add a hyperedge that contains $u_i$ and every $v_i\,\!^{(j)}$ that corresponds to a coordinate of value $1$, i.e.\ all $j \in [D]$ such that $a_i\,\!^{(j)}\!=\!1$.

As discussed before, we add two fixed blocks of $m_0=O(m)$ nodes each that are guaranteed to take the colors red and blue, respectively. We also add a balance constraint that contains every anchor node $u_i$, and ensures that at least two of the anchor nodes $u_i$ need to be red to fulfill the constraint; this is possible according to Lemma \ref{lem:enforce} with the appropriate number of fixed nodes.

Finally, for each of the dimensions $j \in [D]$, we add a balance constraint which contains the nodes $v_i\,\!^{(j)}$ for every $i \in [m]$. We again use Lemma \ref{lem:enforce} to ensure that each of these dimension-wise constraints is satisfied exactly if we have at most $1$ red node among the $v_i\,\!^{(j)}$.

In the resulting graph, we need to color at least two of the anchor nodes red (and in fact, coloring more of them red is pointless: we can then simply recolor arbitrary ones to blue without violating any constraints). If we want to avoid a cut hyperedge, a red anchor node $u_i$ also means that $v_i\,\!^{(j)}$ has to be red for every $j$ where $a_i\,\!^{(j)}\!=\!1$. This is only possible if the corresponding two vectors $a_{i_1}$ and $a_{i_2}$ are orthogonal, i.e.\ if for all $j \in [D]$ we have either $a_{i_1}\,\!^{(j)}\!=\!0$ or $a_{i_2}\,\!^{(j)}\!=\!0$; otherwise the corresponding dimension-wise constraint is violated.

As such, achieving a cut cost of $0$ is only possible if there exist two orthogonal vectors in $A$. On the other hand, if two orthogonal vectors $a_{i_1}, \, a_{i_2}$ exist, then the optimum is indeed $0$: we can color the nodes connected to $u_{i_1}$ and $u_{i_2}$ red, and all other nodes in the vector gadgets blue, satisfying all constraints. Hence if an algorithm can approximate the optimal cut to any finite factor, then it can also decide whether the optimum is $0$ or not, and thus solve OVP.

It remains to show that a runtime of $O(n^{2-\delta})$ is also subquadratic in $m$, i.e.\ it translates to a runtime of $O(m^{2-\delta'})$ for some $\delta'>0$. Note that the number of nodes in the graph is $n\! =\! \Theta(m \cdot D)$, and the number of balance constraints is $c=D+2$. Hence our runtime of $O(n^{2-\delta})$ translates to $O(m^{2-\delta} \cdot D^{2-\delta})$; to ensure that this is below $O(m^{2-\delta'})$ for some $\delta'<\delta$, we only need $D^{2-\delta} \leq m^{\delta-\delta'}$. This indeed holds for $m$ large enough if we have e.g.\ $D \leq \poly \log {m}$, which is already sufficient for the OVP hardness result. As such, a finite-factor approximation in $O(n^{2-\delta})$ time for the partitioning problem would provide an algorithm for OVP in $O(m^{2-\delta'})$ time, which contradicts SETH.

Note that for any dimension function $D=g(m) \in \omega({\log{m}})$ covered by the OVP hardness result, our construction only requires $c = D + 2 \approx g(m) \leq g(n)$ constraints, and hence the reduction works for any number of constraints $c \in \omega(\log{n})$.
\end{proof}

We note for a larger additive difference between the two cases, we can connect the same nodes in the vector gadgets by multiple different hyperedges (each containing a separate auxiliary node).

\subsection{Generalization for $k \geq 3$} \label{app:sec:morecolors}

We now outline how to generalize the negative-result constructions in this section to the case of arbitrary $k \in O(1)$.

In order to apply Lemma \ref{lem:enforce} in a setting with $k \geq 3$, we also need to add a specific fixed block of each extra color to our construction. In case of $\epsilon \geq 0$, we can apply the same idea as in Lemma \ref{lem:enforce} for an extension of this approach. We now ensure $|S|<\epsilon \cdot \frac{m}{k}$ and $|S| < (1-\frac{1+\epsilon}{k}) \cdot m$. We add $\lfloor (1+\epsilon) \cdot \frac{m}{k} \rfloor - h$ fixed red nodes to $V_0$ first. We then distribute the remaining fixed nodes equally among the other $(k-1)$ colors, i.e.\ if $m_0=\lceil (1-\frac{1+\epsilon}{k}) \cdot m \rceil - (|S|-h)$, then we add $\lfloor \frac{m_0}{k-1} \rfloor$ or $\lceil \frac{m_0}{k-1} \rceil$ fixed nodes for each of the remaining colors. This ensures that the constraint is only satisfied if we have at most $h$ red nodes in $S$, and we can also never have too many nodes of the remaining colors, since $\frac{m_0}{k-1} + |S| \leq \frac{m}{k} + |S| \leq (1+\epsilon) \cdot \frac{m}{k}$.

For $\epsilon=0$, we can again ensure that $m$ is divisible by $k$, and we add $\frac{m}{k} - h$ fixed red nodes, $\frac{m}{k} - (|S|-h)$ fixed blue nodes, and $\frac{m}{k}$ fixed nodes of the remaining colors. This ensures that the constraint is only satisfied if it has exactly $h$ red and $|S|-h$ blue nodes; this version of the lemma suffices for our purposes when $\epsilon=0$.

Generalizing the fixed-color block gadget to $k \geq 3$ is also not straightforward: e.g.\ for $3$ large blocks of $m_0$ nodes each with $k=3$ and $1 \leq \epsilon < 2$, it could happen that two of the blocks are red, one is blue, and the third color is not used at all. In order to force the presence of all colors, we use a slightly different gadget. We once again create $k$ large blocks of $m_0$ nodes each, each connected by a hyperedge, but now we apply more sophisticated balance constraints. That is, for each block $i \in [k]$, we create a balance constraint on $m$ nodes that contains $\lfloor \frac{1+\epsilon}{k} \cdot m \rfloor$ nodes of block $i$ and a single node of all other blocks (i.e.\ we select $m$ such that $\lceil m \cdot (1-\frac{1+\epsilon}{k}) \rceil = k-1$). Since all nodes of block $i$ must receive the same color to avoid a cut, none of the other blocks included in this balance constraint can have the same color as the nodes in block $i$. As such, these constraints indeed ensure that all the $k$ blocks attain a different color. For $\epsilon=0$, this modification is not required at all.

Now assume we have a construction designed for $k=2$ where we need to decide if the optimum cost is $0$. Given fixed-colored nodes of each color, we can easily modify the construction to ensure that all nodes in the original part of the construction are indeed red or blue, and hence it has the same behavior even for $k \geq 3$. We simply insert $(k-2)$ extra nodes $u_1, \ldots, u_{k-2}$ into each hyperedge, corresponding to the remaining $(k-2)$ colors. Then for each $i \in [k-2]$, we combine the nodes $u_i$ over all hyperedges into a single balance constraint, and ensure with Lemma \ref{lem:enforce} that there are at most $0$ nodes of extra color $i$ in this set of nodes. If we have a cut cost of $0$, then all the nodes in any hyperedge must have the same color; due to these balance constraints on the extra nodes, this color must be either red or blue for each hyperedge of our original construction. Note that if a node $v$ of the original construction is not contained in any hyperedge, we can create a similar hyperedge containing only $v$ and the extra nodes. The method only increases the size of the hypergraph by a factor of at most $k \in O(1)$, and it only adds $(k-2) \in O(1)$ new balance constraints.

Note that applying the same method for $\epsilon=0$ is again a bit of a special case; here we only create a single extra node $u$ for all the $|E|$ hyperedges, and combine this in a balance constraint with $|E|$ fixed nodes of each of the $(k-2)$ new colors, plus $|E|$ isolated nodes. This ensures that the extra nodes cannot take any of the new colors, but each of them can be red or blue independently (if the balance is then fixed through an appropriate coloring of the isolated nodes).

Finally, note that if we ensure that $S$ only contains red and blue nodes, then we can also again apply the ``at least'' variant of Lemma \ref{lem:enforce} by requiring that $S$ contains at most $(|S|-h)$ blue nodes.

\section{Layer-wise balance constraints} \label{app:sec:layerwise}

\subsection{Proof of Theorem \ref{th:layers}}

As for Theorem \ref{th:layers}, note that we have already shown a similar result for multi-constraint partitioning in Lemma \ref{lem:multi_lin}. We now show how to extend this construction to the case of hyperDAGs, ensuring that every balance constraint in this construction corresponds to a separate layer, and hence they translate to layer-wise constraints.

\renewcommand*{\proofname}{Proof of Theorem \ref{th:layers}}
\begin{proof}
Consider our hypergraph representation $G(V,E)$ of the graph coloring problem from the proof of Lemma \ref{lem:multi_lin}, and let $n_0=|V|+|E|$. We convert this into a computational DAG as follows. We begin with $n_0$ distinct directed paths of length $c+O(1)$; each of their first $c$ layers will correspond to a balance constraint, while the last $O(1)$ layers will be used to ensure that the paths fulfill some basic properties. The $n_0$ paths each correspond to either a node or a hyperedge in the original hypergraph.

Besides this, we also add $(k-1) \cdot n_0$ extra directed paths (we will call them \textit{filler paths}) of the same length $c+O(1)$. These extra paths will play no active role in our DAG; we will use them to ensure that the number of connected components colored with each color is identical. Finally, we add $k$ more directed paths of the same length (called \textit{control paths}), which will be used as a resource of fixed nodes for the $k$ distinct colors.

As the main idea of the construction, we will replace specific layers of the directed paths by adding blocks of a specific size $m$: that is, we replace the original node in layer $i$ of the path by $m$ distinct nodes which all have an incoming edge from the $(i-1)$-th node of the same directed path (if it exists), and all have an outgoing edge to the $(i+1)$-th node of the same directed path. As a technical detail, note that whenever we replace both the $i$-th and $(i+1)$-th layer node in the path by a block, then we add a separate edge from every node in the $i$-th layer to every node in the $(i+1)$-th layer. We will then use these blocks in the specific layers of the DAG to enforce the desired balance constraints over the different connected components.

We begin by ensuring that each of the $k$ control paths has a different color, following the ideas discussed for fixed nodes in Appendix \ref{app:sec:morecolors}. In particular, we dedicate a specific layer at the end of the paths to each color $j \in [k]$, and we create blocks in this layer as follows. We add a block of size $m_1$ on the control path of color $j$, and blocks of size $m_2$ on the remaining $(k-1)$ control paths, such that the total size of the layer is (the $k$ blocks and the original nodes in the $k \cdot n_0$ paths) is $m_0=m_1+(k-1) \cdot m_2 + k \cdot n_0$. For the threshold $T=\frac{1+\epsilon}{k} \cdot m_0$, we ensure that $m_1+k \cdot n_0 \leq T$, but $m_1+m_2 > T$. This means that none of the other control paths can have the same color as control path $j$, while placing no restriction on the remaining $k \cdot n_0$ components. For a concrete choice of values that fulfill these properties in case of $\epsilon>0$, let us select $\epsilon' \in (0, \epsilon)$ such that $\epsilon' < \frac{1}{k-1} \cdot ( 1- \frac{1+\epsilon}{k} )$, and select $m_0$ such that $\epsilon' \cdot m_0 = k \cdot n_0$. We can then choose $m_1=T-\epsilon' \cdot m_0$ and $m_2=\frac{1}{k-1} \cdot (m_0-T)$. Note that we have $m_2 \leq m_1$, so the constraints in these layers is indeed satisfied if all the control paths have a different color.

For the special case of $\epsilon=0$, it suffices to have a single layer with a block of size $m_1 = \frac{n_0}{k} + 1$ on each control path; then the size of the layer is $m_0=k \cdot m_1 + n_0 = 2 \cdot n_0+k$. In this case, two control paths of the same color would already add up to $2 \cdot m_1 > \frac{2 \cdot n_0}{k} +1 = \frac{m_0}{k}$ nodes.

Having these constraints ensures that all control paths have a different color. We can then use them to provide fixed nodes for the rest of our balance constraints by placing blocks of specific size in the control-path components in specific layers.

As a next step, we use further balance constraints to ensure that among the remaining $k \cdot n_0$ connected components, we have $n_0$ occurrences of each color. Using fixed nodes generated with the control paths, we can easily ensure this according to Lemma \ref{lem:enforce}. For $\epsilon >0$, we use $2k$ layers to ensure that the number of occurrences of a specific color $j \in [k]$ among these paths is both at least $n_0$ and at most $n_0$. For $\epsilon=0$, the situation is even simpler: a single layer without blocks already ensures that each color occurs exactly $n_0$ times. Note that this still allows us to color the path corresponding to each original node and hyperedge as desired, since we can always color the filler paths appropriately to obtain $n_0$ instances of each color.

Finally, to model the actual multi-constraint construction $G(V,E)$, we assign each of the first $c$ layers to one of our original balance constraints $V_i$. In each layer $i \in [c]$, we consider all the components that correspond to a node $v \in V_i$, and we add a block of size $m=2$ in layer $i$ of the component of $v$ (i.e.\ a single node besides the original node of the path in this layer). We then use our control paths to enforce the original balance constraints in these $|V_i|$ extra nodes in each layer: we add the appropriate block sizes to each control path to ensure (according to Lemma \ref{lem:enforce}) that the extra set $V_i$ has at least (or at most) the given number of nodes from the desired color. Note that besides the $|V_i|$ extra nodes, this layer already contains $k \cdot n_0 + k$ nodes by default in the different connected components, with $n_0+1$ occurrences of each color. We design the fixed sets enforcing the balance constraints such that they also contain these nodes (as per the discussion after Lemma \ref{lem:enforce}).

We then consider each hyperedge $e \in E$, and for every node $v \in e$, we draw an edge from an arbitrary node of the component of $v$ to an arbitrary node (in a subsequent layer) of the component of $e$. Hence in order to achieve a cost of $0$, we not only have to color every component in a monochromatic fashion, but we now also have to assign the same color to the components that are connected by a hyperedge, i.e.\ every connected component of the original hypergraph $G$.

The resulting construction enforces the original balance constraints on the different connected components: the constraint in a given layer of the DAG is satisfied if and only if the extra nodes in the corresponding layer satisfy the original balance constraint $V_i$ in $G$. As such, there exists a layer-wise partitioning with cost $0$ if and only if the input graph of the construction from Lemma \ref{lem:multi_lin} is $3$-colorable. This completes our reduction, showing that it is also NP-hard to decide whether the optimum is $0$ in this layer-wise setting.
\end{proof}

Note that this construction also ensures that every node of the DAG is contained in a directed path of maximal length; as such, there is only one valid layering of the DAG. This means that the hardness result holds even in the flexible layering case, for any fixed layer-selection strategy.

\subsection{Finding the best layering}

Now consider the problem of choosing the best layering separately. We discuss a different construction to show that in the flexible case, it is even hard to find the best assignment to layers; in other words, if the cost of each possible layering is defined as the optimal partitioning cost in the multi-constraint setting with this layering, then finding the best layering (even without the partitioning part) is already a hard problem in itself.

Note that even in this flexible layering setting, we only consider layerings where the number of layers is equal to the length of the longest path in the DAG. As such, the nodes that can be sorted into multiple layers are exactly those that are not contained in any path of maximal length. 

\begin{theorem} \label{th:layering}
The best layering of a DAG cannot be approximated to any finite factor in polynomial time.
\end{theorem}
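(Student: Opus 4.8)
The plan is to prove the statement through its decision version. Since the cost is a nonnegative integer, any finite-factor approximation of the best-layering cost would in particular distinguish whether this optimum equals $0$ or is at least $1$. It therefore suffices to give a polynomial-time reduction that builds, from an instance of an NP-hard problem, a DAG in which some admissible layering admits a layer-wise partitioning of cost $0$ if and only if the instance is a yes-instance. Crucially, the combinatorial search must live \emph{entirely} in the choice of layering, so that the result holds even relative to an oracle that optimally partitions any fixed layering. As in Theorem~\ref{th:layers} and Lemma~\ref{lem:multi_lin}, I would reduce from graph $3$-coloring and reuse the same infrastructure: control paths supplying fixed-colored nodes, filler paths equalizing the global color counts, and Lemma~\ref{lem:enforce} to realize ``at most / at least $h$ nodes of a given color in a layer'' as an actual balance constraint.

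The key new idea is to encode the color of each vertex not in the partition but in the layering. For each vertex $v$ I would introduce \emph{floating} selector nodes whose admissible layer range is pinned, by attaching auxiliary paths of carefully tuned lengths, to exactly three consecutive \emph{palette} layers; placing a selector into the $i$-th palette layer represents assigning color $i$ to $v$. Because layer membership never cuts hyperedges but does feed the per-layer balance constraints, I would give the selectors a fixed partition-color (recolorable only at a cut cost of at least $1$) and size the palette layers, via control- and filler-path blocks and Lemma~\ref{lem:enforce}, so that a layer's balance constraint is violated precisely when two conflicting selectors land in it. Concretely, for each edge $e=(u,v)$ and color $i$ I would route fixed-colored copies of $u$'s and $v$'s selectors through a per-edge conflict layer $\Lambda_{e,i}$ whose capacity admits at most one of them, so that any partition of this layering restoring feasibility must recolor a selector and thereby cut a hyperedge, unless $u$ and $v$ avoid a common color. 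A per-vertex consistency gadget then forces all copies of $v$'s selector to occupy the same palette position, so the layering encodes a single well-defined color per vertex. With all blocks kept monochromatic and the color counts balanced by filler paths, a layer-wise partitioning of cost $0$ exists for some admissible layering if and only if the chosen layering corresponds to a proper $3$-coloring.

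The main obstacle is the interplay of the two independent sources of freedom, the layering and the partition, against the rigidity of the layering model. I would need to pin each floating selector's admissible range to exactly its three palette layers while keeping the total number of layers equal to the longest-path length, so that the space of admissible layerings is precisely the intended choice space and nothing larger; this forces padding the DAG with maximal-length paths and placing the auxiliary paths so that the earliest- and latest-layer constraints coincide with the palette. Equally delicate is the consistency gadget: the per-edge copies of a vertex's selector must provably share a palette position, so that a non-proper coloring cannot be ``repaired'' by inconsistent per-edge choices, and so that no partition recoloring can recover cost $0$ once a conflict layer is over-full. Finally, as in the main theorem, I would verify that the construction is a genuine hyperDAG by exhibiting a valid assignment of generating nodes with no directed cycles. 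Together these show that deciding the existence of a cost-$0$ layering is NP-hard, whence the best layering admits no finite-factor polynomial-time approximation unless $\mathrm{P}=\mathrm{NP}$.
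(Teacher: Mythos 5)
Your high-level strategy matches the paper's: concentrate all combinatorial freedom in the choice of layering, pin the partition colors via control and filler paths, and tune per-layer balance constraints so that a cost-$0$ layer-wise partitioning exists for \emph{some} admissible layering if and only if the source instance is a yes-instance. (The paper instead reduces from the strongly NP-hard $3$-partition problem: the only flexible nodes are ``group gadgets'' off the maximum-length paths, and an alternating capacity schedule --- at most $b$ extra red nodes in each odd layer, at least $b\cdot m$ in each even layer --- forces first-level groups of integers summing to $b$ into an odd layer and their second-level groups into the next layer.) However, your proposal has a genuine gap at its crux, the per-vertex consistency gadget. In the layering model, the only structural coupling between the layer positions of two nodes is precedence, and a directed path of length $\ell$ from $x$ to $y$ yields only the one-sided inequality $\mathrm{layer}(y) \geq \mathrm{layer}(x) + \ell$; acyclicity forbids adding the reverse inequality by another path, so ``all copies of $v$'s selector occupy the same palette position'' cannot be enforced by DAG structure alone, and you supply no counting mechanism through balance constraints that achieves it. Without this gadget, a non-proper coloring can be ``repaired'' by placing different per-edge copies of the same vertex at different palette positions, and your reduction breaks. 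The paper's construction never needs equal-layer coupling of independent floating nodes: each gadget is internally coupled by direct edges (every first-level node points to every second-level node), and the large multiplier $m$ together with the capacity schedule forces entire gadgets to move in lockstep, so that only aggregate sums of the $a_i$ matter --- which is exactly why $3$-partition, rather than $3$-coloring, is the natural source problem here.

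A secondary issue: once selectors float, the layer sizes $|V_j|$ themselves depend on the chosen layering, so the thresholds $(1+\epsilon)\cdot\frac{|V_j|}{k}$ in your invocations of Lemma \ref{lem:enforce} are not fixed quantities. The paper introduces the variant Lemma \ref{lem:enforce2}, tailored to sets of variable size whose members are guaranteed red, precisely to handle this; your proposal would need the same device (or an argument that the fixed-size lemma suffices), and as written it does not address the dependence of the constraints on the layering being chosen.
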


To prove this claim, we first begin with a new variant of Lemma \ref{lem:enforce}, adapted for the case when we have a variable number of nodes in the set $S$, but they are guaranteed to all be red.

\begin{lemma} \label{lem:enforce2}
Consider a set $S$ (of variable size) that only contains red nodes, and an integer $h \geq 1$. Let $F_j$ be sets of fixed nodes of color $j \in [k]$, and let $m_j=|F_j|$. Then there exist positive integers $m_j$ such that the set $V_0:= \bigcup_{j=1}^k F_j \cup S$ satisfies the balance constraint
\vspace{2pt}
\begin{itemize}
    \setlength{\itemsep}{2pt}
    \setlength{\parskip}{2pt}
    \item if and only if $|S| \leq h$ (in case of $\epsilon>0$),
    \item alternatively, if and only if $|S| \geq h$ (in case of $\epsilon>0$, assuming that we have some upper bound $h_0$ on $|S|$),
    \item if and only if $|S| = h$ (in case of $\epsilon=0$).
\end{itemize}
\end{lemma}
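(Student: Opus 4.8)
The plan is to prove Lemma~\ref{lem:enforce2} by a direct adaptation of the counting argument in Lemma~\ref{lem:enforce}, with the key difference that here the set $S$ has \emph{variable} size and is guaranteed monochromatic (all red). The idea is that the balance constraint on the color red becomes the active constraint we exploit: since all of $S$ is red, the red part of $V_0$ has size $m_{\text{red}} + |S|$ (where red is, say, color $1$, so $m_{\text{red}} = m_1$), and we choose $m_1$ and the total set size $|V_0|$ so that this red count crosses the balance threshold $(1+\epsilon)\cdot\frac{|V_0|}{k}$ exactly when $|S|$ crosses the integer $h$. The remaining fixed nodes of the other colors are there only to pad $V_0$ to the desired size while never themselves violating balance.

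First I would handle the ``if and only if $|S|\le h$'' case for $\epsilon>0$. I set $m:=|V_0|=\sum_j m_j + |S|$ and require the red balance constraint $m_1 + |S| \le (1+\epsilon)\cdot\frac{m}{k}$ to hold precisely when $|S|\le h$; this means choosing $m_1 = \lfloor (1+\epsilon)\cdot\frac{m}{k}\rfloor - h$, so that $m_1 + |S| \le (1+\epsilon)\cdot\frac{m}{k}$ iff $|S|\le h$. I then need $m$ large enough (but still $m=O(|S|+h)$, using that $\epsilon$ and $k$ are constants, and that $|S|$ has an upper bound of either $h$ or $h_0$) that the remaining budget $m - m_1 - |S|$ can be split among the other $k-1$ colors each below its own threshold, exactly as in Appendix~\ref{app:sec:morecolors}: distribute $\lfloor\frac{m_0}{k-1}\rfloor$ or $\lceil\frac{m_0}{k-1}\rceil$ nodes to each remaining color with $m_0 = m - m_1 - |S|$, checking $\frac{m_0}{k-1}\le (1+\epsilon)\cdot\frac{m}{k}$. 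For the ``$|S|\ge h$'' variant I instead set $m_1$ so that red is balanced iff $|S|\ge h$; here I need the stated upper bound $h_0$ on $|S|$ to guarantee that the blue (or any other) color's fixed count can be chosen so that this \emph{other} constraint is the binding one, i.e.\ I use the complementary slack in the total size budget to flip the direction of the inequality. The $\epsilon=0$ case combines the two: choose $m$ divisible by $k$ and set $m_1=\frac{m}{k}-h$, forcing red to be exactly balanced iff $|S|=h$, with the other colors each padded to exactly $\frac{m}{k}$.

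The main obstacle I anticipate is the ``$|S|\ge h$'' direction under $\epsilon>0$. In the original Lemma~\ref{lem:enforce} the set $S$ could contain \emph{both} colors, so enforcing a lower bound on the red count was symmetric to enforcing an upper bound (just swap the roles of red and blue). Here $S$ is entirely red, so I cannot directly bound the blue count within $S$; instead I must engineer the constraint so that a deficiency of red (i.e.\ $|S|$ too small) forces the blue or padding colors to overflow their thresholds. Concretely, I would make the red fixed set $F_1$ small and the total $m$ tuned so that when $|S|<h$ the leftover budget forces too many nodes of some other color relative to $(1+\epsilon)\cdot\frac{m}{k}$ — this is exactly where the upper bound $h_0$ is needed, since without controlling the maximum of $|S|$ I cannot fix $m$ in advance. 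Once the sizes $m_j$ are pinned down, the verification in each case is a routine inequality check, and I would simply confirm that $|V_0|=O(|S|)$ (equivalently $O(h_0)$) so that the gadget does not blow up the size of the construction, mirroring the size bound asserted in Lemma~\ref{lem:enforce}.
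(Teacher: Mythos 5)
Your proposal is correct in substance and takes essentially the same route as the paper's proof: for the upper-bound case you place the red threshold crossing exactly at $|S|=h$ (relying, as the paper does, on the fact that each node of $S$ adds $1$ to the red count but only $\frac{1+\epsilon}{k}<1$ to the threshold); for the $|S|\geq h$ variant you arrive --- after your initial misstatement that the red constraint itself could be made to hold iff $|S|\geq h$, which is impossible by the same monotonicity --- at precisely the paper's key trick, namely an oversized fixed set of a non-red color (the paper's $m_2$) that overflows the shrinking threshold when $|S|<h$, with $h_0$ needed to cap red's growth and fix the parameters in advance; and your $\epsilon=0$ gadget with $m_1=\frac{m}{k}-h$ and the other colors at $\frac{m}{k}$ is the paper's choice $m_1=1$, $m_j=h+1$ in disguise. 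The only repair needed is your self-referential choice $m_1=\lfloor(1+\epsilon)\cdot\frac{m}{k}\rfloor-h$, since $m=|V_0|$ depends on $m_1$ and on the variable $|S|$; the paper resolves exactly this by first fixing $m_j=m$ for $j\neq 1$ and then taking $m_1$ maximal with $\frac{m_1+h}{(k-1)\cdot m+m_1+h}\leq\frac{1+\epsilon}{k}$, i.e.\ anchoring at $|S|=h$, after which your monotonicity observation yields both directions.
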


\renewcommand*{\proofname}{Proof.}
\begin{proof}
Assume w.l.o.g.\ that $j=1$ is the index of the color red, and consider the case of $\epsilon>0$ first with the $|S| \leq h$ condition. We first select a large enough integer $m$ such that $\frac{m+h}{k \cdot m + h} \leq \frac{1+\epsilon}{k}$ holds; this is indeed possible with $\epsilon>0$. We set $m_j:=m$ for all $j \in [k]$. We then gradually increment $m_1$ until the inequality $\frac{m_1+h}{(k-1) \cdot m + m_1 + h} \leq \frac{1+\epsilon}{k}$ still holds; since we have $\frac{1+\epsilon}{k} < 1$ and $\lim_{m_1 \rightarrow \infty} \, \frac{m_1+h}{(k-1) \cdot m + m_1 + h} = 1$, there exists a maximal integer $m_1$ for which this inequality still holds. This specific $m_1$ (and $m_j=m$ for all other $j$) indeed satisfies our requirements: the balance constraint is still satisfied if $|S|=h$, but for $|S|=h+1$, the constraint is violated since $\frac{m_1+h+1}{(k-1) \cdot m + m_1 + h +1} > \frac{1+\epsilon}{k}$.

For $\epsilon>0$, we can also have the alternative variant of this lemma where the balance constraint is satisfied if and only if $|S| \geq h$; however, in this case, we also need an external upper bound $h_0$ on the size of $S$, since a large enough $S$ will always mean that the number of red nodes is too high. This upper bound $h_0$ comes naturally when we apply the lemma in Theorem \ref{th:layering}: we can simply use the total number of nodes that allow a flexible layering at all.

Assuming that we always have $|S| \leq h_0$, we can also create a constraint that is satisfied only if $|S| \geq h$. For this, let us select a parameter $m$ that ensures $m>h$ and $m>h_0$. We then select an integer $m_2$ such that
\[ \frac{1+\epsilon}{k} \cdot \left( (k-2) \cdot m + h \right) < \left(1-\frac{1+\epsilon}{k} \right) \cdot m_2 \leq \frac{1+\epsilon}{k} \cdot (k-1) \cdot m \, ; \]
such an integer indeed exists. We also ensure with the choice of $m_2$ that we have $m_2 \geq m$; this is always possible, since $m_2=m$ is still allowed by the upper bound above. The above relations then imply that
\[ \frac{m_2}{(k-1) \cdot m + m_2} \leq \frac{1+\epsilon}{k} < \frac{m_2}{(k-2) \cdot m +m_2+h} \, . \]

We then select $m_j=m$ for all $j \in \{ 3, \ldots, k \}$. It remains to choose an appropriate value for $m_1$; we claim that there exists an $m_1 \in \{ 0, 1, \ldots, m-h \}$ that fulfills the desired properties. Indeed, there must exist a smallest integer $m'$ that satisfies $\frac{m_2}{(k-2) \cdot m + m_2 + m'} \leq \frac{1+\epsilon}{k}$, and due to our inequalities above, we have $h < m' \leq m$. We can then set $m_1=m'-h$ to satisfy the lemma: then having $m_1+h$ red nodes in the constraint is still acceptable, but having only $m_1+(h-1)$ red nodes violates it, since the number of nodes of color $2$ grows too large. Note that our choices ensure that $m_2 \geq m_j$ for all $j \neq 2$, so no other color can violate the constraint. 

Finally, for the case of $\epsilon=0$, we can simply select $m_1=1$ and $m_j=h+1$ for all other $j \in \{ 2, \ldots, k \}$.

Note that the sizes of the fixed sets is again in the magnitude of $O(h)$ (or $O(h_0)$).
\end{proof}

\noindent With this we can already move on to the DAG layering problem.

\renewcommand*{\proofname}{Proof of Theorem \ref{th:layering}.}
\begin{proof}
We show a reduction from the $3$-partition problem, which is known to be NP-hard: given a set of positive integers $a_1, \ldots, a_{3t}$ such that $b:=\frac{1}{t} \cdot (\sum_{i=1}^{3t}\, a_i )$ is an integer and we have $\frac{b}{4}<a_i<\frac{b}{2}$ for all $i \in [3t]$, can we partition the numbers into $t$ distinct triplets such that the sum of each triplet is $b$? The problem is known to be strongly NP-hard, i.e.\ NP-hard even if the input is in a unary encoding.

Given an instance of $3$-partition, we create a hyperDAG that can only be partitioned with cost $0$ if the 3-partition problem is solvable. Our DAG consists of $k$ independent connected components, with each of them being a single directed path of the same length $2t+O(1)$ initially. We use the same technique as in the proof of Theorem \ref{th:layers} to ensure (using the layers at the end of the path) that each of these components has to receive a different one of the $k$ colors.

The general idea is again to use each of the components to generate a desired number of fixed nodes in each layer; that is, as in Theorem \ref{th:layers}, the components will have a specific number of nodes in each layer $i$, and every node in the $i$-th layer will always have an edge to every node in the $(i+1)$-th layer. As such, there is only one possible layering for these parts of the components.

However, in one of the components (the ``red component''), we also add further nodes called group gadgets. For each integer $a_i$ of the $3$-partition problem, we create (i) a set of $a_i$ nodes called the \textit{first-level group} of this integer, and (ii) a set of $a_i \cdot m$ nodes (for a large parameter $m > t \cdot b$) called the \textit{second-level group} of this integer. The nodes in the first-level group of $a_i$ have no incoming edges, but each of them has an outgoing edge to every node in the second-level group of $a_i$. The nodes in the second-level group of $a_i$ all have an outgoing edge to one of the nodes in the $(2t+1)$-st layer of the red component, i.e.\ the $(2t+1)$-st node in the original directed path.

We select the sizes of the layers in each component to provide the appropriate number of fixed nodes (according to Lemma \ref{lem:enforce2}) to ensure the following: in every odd-numbered layer, the number of extra red nodes (sorted into this layer from the group gadgets) is \textit{at most} $b$, whereas in every even-numbered layer, the number of extra red nodes is \textit{at least} $b \cdot m$. E.g.\ for the simplest case of $\epsilon=0$, every other component has $b$ and $m \cdot b$ nodes in every odd and even layer, respectively (besides the single original node of the main path in each component).

To fulfill the balance constraints, we need to ensure that each odd layer of the red component has at most $b$ nodes, and every even layer has at least $b \cdot m$ nodes. We show that this is only possible by grouping the integers into groups of sum $b$ (which are then unavoidably triplets, since $\frac{b}{4}<a_i<\frac{b}{2}$). In particular, consider levels $j$ and $(j+1)$ for some odd $j$, and assume in an inductive fashion that for every $a_i$, either all the nodes of the first- and second-level groups of $a_i$ have already been sorted into previous layers, or none of them. Note that the total size of all first level groups is only $t \cdot b < m$, so we need second-level groups in order to place $m \cdot b$ nodes into layer $(j+1)$. However, in order to place nodes of the second-level group of $a_i$ into layer $(j+1)$, we need to ensure that all nodes in the first-level group of $a_i$ are already placed into layer $j$. In particular, in order to have second level groups of total size $b \cdot m$ in layer $(j+1)$, we need to place a set of first-level groups in layer $j$ that have total size $b$ at least. On the other hand, the balance constraint ensures that the maximal number of nodes we can place into layer $j$ is $b$. Hence the only way to place the appropriate number of nodes into layers $j$ and $(j+1)$ is to select a triplet of integers $a_i$ that sum up to exactly $b$, place all the nodes from the first-level groups of these integers into layer $j$, and place all the nodes from the second-level groups of these integers into layer $(j+1)$.

This shows that a layer-wise balanced partitioning with optimal cost $0$ only exists in this hyperDAG if the corresponding $3$-partition problem is solvable.
\end{proof}

\section{Brief discussion of DAG scheduling} \label{app:scheduling}

This section discusses the DAG scheduling problem and its application for developing a more sophisticated balance constraint in hyperDAGs.

The basic DAG scheduling problem (see Definition \ref{def:DAG_sched}) has been studied extensively for decades; however, for any constant $k \geq 3$, it is still a longstanding open problem whether it is solvable in polynomial time or NP-hard. On the other hand, it is known that the best makespan can be found in polynomial time for $k=2$ \cite{DAG2proc1, DAG2proc2, DAG2proc3}. Furthermore, the problem can also be solved in polynomial time for any constant $k \in O(1)$ in some special classes of DAGs; some examples are the following:
\vspace{2pt}
\begin{itemize}
    \setlength{\itemsep}{3pt}
    \setlength{\parskip}{2pt}
    \item \textit{Chain graphs:} DAGs conisisting of several dijoint directed paths, i.e.\ DAGs where every node has indegree at most $1$ and outdegree at most $1$ \cite{oppforest}.
    \item \textit{Out-trees:} DAGs where every node has indegree at most $1$, but arbitrary outdegree \cite{oppforest}. Note that chain graphs are a special class of out-trees.
    \item \textit{Level-order DAGs:} DAGs where in every connected component, the nodes are organized into layers in a way such that every node in layer $j$ has an edge to every node in layer $(j+1)$ \cite{levelorder}.
    \item \textit{Bounded-height DAGs:} DAGs where the longest directed path has length $O(1)$ \cite{boundedheight}.
\end{itemize}

As discussed in Section \ref{sec:layers}, the optimal makespan $\mu$ of the scheduling problem essentially measures how parallelizable a given DAG is. As such, this concept allows for the more sophisticated, schedule-based balance constraint of definition \ref{def:schedconstraint}.

However, if we apply this kind of balance constraint, then in order to decide whether a candidate solution (a specific partitioning of the hyperDAG) is feasible, we also need to be able to compute $\mu_p$, i.e.\ solve a different version of the scheduling problem where a fixed partitioning $p:V\rightarrow [k]$ is already part of the input, and we only want to find an assignment to time steps $t:V\rightarrow \mathbb{Z}^+$ for this partitioning that minimizes the makespan.

Unfortunately, it turns it that the fixed-partitioning version of the scheduling problem is even harder; in particular, it remains NP-hard even in the special cases where the basic DAG scheduling problem is polynomially solvable, e.g.\ for $k=2$, or the very restricted DAG classes listed above. This shows that the schedule-based constraint is not a viable approach in practice, even for $k=2$ or this special classes of DAGs, since we cannot even decide the feasibility of a solution in polynomial time (unless we change the partitioning problem significantly, requiring the output to also contain an explicit schedule).

\renewcommand*{\proofname}{Proof of Theorem \ref{th:scheduling} for out-trees and level-order DAGs}
\begin{proof}
We again show a reduction from the $3$-partition problem discussed in Theorem \ref{th:layering}.

Consider a main path of $2 \cdot t \cdot b$ nodes of alternating color blocks: it begins with $b$ blue nodes, then $b$ red nodes, then $b$ blue nodes again, and so on. Furthermore, for each $a_i$, we create a smaller path of length $2 a_i$, beginning with $a_i$ red nodes, and then having $a_i$ blue nodes. Consider an upper bound of $L=2 \cdot t \cdot b = \frac{n}{2}$ on the allowed makespan, i.e.\ we need to decide if the DAG can be parallelized flawlessly.

In order to achieve this, each time step needs to execute one node from the main path (since it has length $\frac{n}{2}$) and one node from one of the smaller paths, and these two nodes need to have different colors. This implies that we need to process the smaller paths by always computing $b$ red and then $b$ blue nodes alternatingly. In our DAG, it is only possible to process $b$ blue nodes between steps $b+1$ and $2b$ if in the first $b$ steps we have processed the small paths corresponding to a set $A_1$ of numbers that sum up to $b$. Due to $\frac{b}{4}<a_i<\frac{b}{2}$, it is guaranteed that $A_1$ consists of exactly $3$ numbers. Since the three small paths are entirely processed by step $2b$, we can continue the same argument inductively, and show that we need to have another triplet of such items between steps $2b+1$ and $4b$, $4b+1$ and $6b$, and so on. Conversely, given a partitioning into such triplets, this strategy indeed gives a scheduling with makespan $\frac{n}{2}$, so solving this scheduling problem would also provide a solution to 3-partition.

The construction outlined above is already a level-order DAG: in every connected component, each node of layer $j$ is predecessor to every node of layer $(j+1)$. It is also an example for a chain graph. If we add a common source node to this construction (of any color, and increase the upper limit on $L$ by $1$), we also obtain an out-tree. This is also an example for some more general classes of graphs (such as e.g.\ opposing forests).
\end{proof}

\noindent We provide a slightly different construction for the case of bounded-height DAGs.

\renewcommand*{\proofname}{Proof of Theorem \ref{th:scheduling} for bounded-height DAGs}
\begin{proof}
We provide a reduction from the clique problem. Given a graph $G(V,E)$ and a desired clique size $L$, let us create a blue node for each node $v \in V$, and also create a red node representing every edge $e \in E$. We draw an edge from the node representation of $v$ to the node representation of $e$ if and only if $e$ is incident to $v$. Finally, we add another four-layer component $C$ where each node is always a predecessor of every node in the next layer. The four layers have (i) $L$ red nodes, (ii) $\binom{L}{2}$ blue nodes, (iii) $|V|-L$ red nodes, and (iv) $|E|-\binom{L}{2}$ blue nodes, respectively. The resulting DAG has height $4=O(1)$, and altogether $|V|+|E|$ nodes of both colors.

Note that we cannot compute more than one node of $C$ in any step, since there are no two different-colored nodes in $C$ that are not connected by a directed path. Hence for flawless parallelization (a makespan of $|V|+|E|$), we need to compute exactly one node of $C$ in each step, since $C$ contains $|V|+|E|$ nodes. This implies that with the processor not working on $C$, we need to compute a blue node outside of $C$ in the first $L$ steps, and then a red node outside of $C$ in the next $\binom{L}{2}$ steps. This is possible if and only if there exists a clique of size $L$ in our original graph. After processing the clique, we can compute first the $|V|-L$ blue nodes representing the remaining nodes and then the $|E|-\binom{L}{2}$ red nodes representing the remaining edges in any order.
\end{proof}

\section{Hierarchical partitioning} \label{app:hier}

\subsection{Recursive partitioning}

The recursive approach has already been outlined in Section \ref{sec:hier}. In case of standard partitioning, this allows us to follow a recursive bipartitioning method instead of finding a $k$-way partitioning directly, whereas in the hierarchical case, it allows for a $d$-step approach, splitting each current part into $b_i$ further parts in balanced way in each step $i \in [d]$.

The counterexample of Lemma \ref{lem:recurse} shows, however, that this recursive approach can be a $\Theta(n)$ factor off the optimum in unfortunate cases. This even holds in the hierarchical setting, where finding a good cut in the upper levels of the tree is even more imperative: since the parameters $g_i$ are constants, the construction still shows a difference of a $\Theta(n)$ factor in this case. The construction itself has mostly been discussed in Section \ref{sec:hier} already, and its technical details are identical to our previous proofs: the squares represent blocks on $h$ nodes and $h$ hyperedges, so splitting any of them induces a cost of $(h-1)$ at least. In our current setting, we have $h=\Theta(n)$.

One can generalize this example to any hierarchy parameters $b_1$, \ldots, $b_d$; let $b'=b_2 \cdot \ldots \cdot b_{d}$. We create $(b'+1)$ distinct larger blocks of size $\frac{n}{b_1 \cdot (b'+1)}$, connected by a chain of single edges as in Figure \ref{fig:recursive}, and $(b_1-1)$ further chains that each consist of $b' \cdot (b'+1)$ smaller blocks of size $\frac{n}{b_1 \cdot b' \cdot (b'+1)}$. A recursive approach will split the construction into the $b_1$ different chains on the highest level; however, since the first chain of $(b'+1)$ large blocks will need to be split to $b'$ parts ultimately, one of the blocks will be cut in case of $\epsilon \approx 0$, resulting in a cost of $\Theta(n)$. On the other hand, we can combine each large block with a smaller block into a part of size $\frac{n}{b_1 \cdot (b'+1)}+\frac{n}{b_1 \cdot b' \cdot (b'+1)}=\frac{n (b'+1)}{b_1 \cdot b' \cdot (b'+1)}=\frac{n}{k}$, and combine $(b'+1)$-tuples of the remaining smaller blocks into groups of size $(b'+1) \cdot \frac{n}{b_1 \cdot b' \cdot (b'+1)}=\frac{n}{k}$; this is a balanced partitioning with cost $O(1)$.

\subsection{Two-step method}

Let us now discuss our claims regarding the two-step method. We first show that if both steps are executed in an optimal way, then the final cost is at most a factor $\frac{g_1}{g_d}=g_1$ away from the optimal hierarchical cost.

\renewcommand*{\proofname}{Proof of Lemma \ref{lem:twostep_approx}}
\begin{proof}
Let $OPT_{\textsc{cut}}$ denote the optimum cost of the standard partitioning problem (with the connectivity metric), and $OPT_{\textsc{hier}}$ denote the optimum for the hierarchical cost function. Since every intersection with a new part within a hyperedge comes at a cost of at most $g_1$, the solution of our two-step method has cost at most $g_1 \cdot OPT_{\textsc{cut}}$. Assume for contradiction that the cost of the solution returned by the two-step method is larger than $g_1 \cdot OPT_{\textsc{hier}}$; this implies $OPT_{\textsc{hier}} < OPT_{\textsc{cut}}$. However, this is not possible, since the solution $OPT_{\textsc{hier}}$ directly provides a solution of the same (or smaller) cost if we evaluate it according to the standard cost function. 
\end{proof}

\noindent However, there exists a construction where this difference is indeed in the magnitude of $g_1$. We show this for $\epsilon=0$ below, and then provide a discussion of extending the approach to other $\epsilon$ values.

\renewcommand*{\proofname}{Proof of Theorem \ref{th:two_step_bad}}
\begin{proof}
Let $\epsilon=0$, and let us use $T$ to denote the number of nodes that fit into a single part of our construction, i.e.\ $T=\frac{n}{k}$ for the final size $n$ of our hypergraph. We once again use blocks of $b$ nodes as a building block; for $b$ large enough, splitting any of these blocks results in a very high cost. Our construction will consist of the following blocks:
\vspace{2pt}
\begin{itemize}
    \setlength{\itemsep}{2pt}
    \setlength{\parskip}{2pt}
    \item a single block $A$ on $T$ nodes,
    \item blocks $B_1, \ldots, B_{k-1}$ on $\frac{T}{k-1}$ nodes each,
    \item blocks $C_1, \ldots, C_{k-2}$ on $T-\frac{T}{k-1}=\frac{k-2}{k-1} \cdot T$ nodes each,
    \item a single block $D$ on $\frac{T}{k-1}$ nodes,
    \item blocks $E_1, \ldots, E_{k-3}$ on $\frac{T}{k-1}$ nodes each.
\end{itemize}
\vspace{4pt}
We then add single edges between these blocks (i.e.\ hyperedges of size $2$ connecting two arbitrary nodes of two distinct blocks) in the following way:
\vspace{2pt}
\begin{itemize}
    \setlength{\itemsep}{2pt}
    \setlength{\parskip}{2pt}
    \item for $i \in [k-1]$, we draw $m$ distinct edges from $A$ to $B_i$ (for some large parameter $m$),
    \item for $i \in [k-2]$, we draw a single edge from $B_i$ to $C_i$,
    \item we draw a single edge from $B_{k-1}$ to $D$.
\end{itemize}

With this, we essentially obtain a ``star-like'' construction where almost all the edges go between the block $A$ and the blocks $B_i$. However, $A$ is already large enough to fill an entire part itself, so each of these edges will be cut in any solution (that does not split blocks).

Intuitively, the optimal solution in the hierarchical setting is to place all the $B_i$ in the same partition; this would mean that the two parts can be assigned to bottom-level siblings in the hierarchy tree, so all of these cut edges only induce a cost of $g_d=1$. After this, the remaining edges are cut anyway, so we simply need to place the remaining blocks into parts such that each of them has size $T$.

This is indeed the optimum for hierarchical cost: we have (i) one part with $A$, (ii) one part with $B_i$ for all $i \in [k-1]$, (iii) $k-3$ distinct parts, containing $C_i$ and $E_i$ for $i \in [k-3]$, and (iv) a single last part containing $C_{k-1}$ and $D$. All of these parts have size $T$ exactly. The $(k-1) \cdot m$ edges between the first two parts have cost $g_d=1$, and the remaining $O(k)$ edges have cost at most $g_{1}$, so in total the hierarchical cost is $(k-1) \cdot m + O(k)$. However, the number of cut edges altogether is $(k-1) \cdot m+(k-1)$.

On the other hand, in a standard $k$-way partitioning, the edges between $A$ and $B_i$ are cut anyway, so in order to minimize the cost, an algorithm will ensure that the remaining $(k-1)$ edges (between the $B_i$ and $C_i$) are uncut. That is, in the standard optimum, we have (i) one part with $A$, (ii) $k-2$ distinct parts, containing $B_i$ and $C_i$ for $i \in [k-2]$, and (iii) a single part containing $B_{k-1}$, $D$, and all the $E_i$ for $i \in [k-3]$. All these parts have size $T$ exactly, so the solution satisfies the balance constraint.

The standard cost in this case is only $(k-1) \cdot m$, so this is the optimal solution for the standard $k$-way partitioning problem. However, now $A$ and all the $B_i$ are in different parts; hence regardless of the way we develop the hierarchy in the second step, we will have $m$ edges going from the part of $A$ to all the other $(k-1)$ parts in the hierarchy. This means that at least $\frac{b_{1}-1}{b_1} \cdot k \cdot m$ of our edges will induce a cost of $g_{1}$. In fact, if the branching factors in the hierarchy are $b_1$, \ldots, $b_{d}$, then the exact hierarchical cost will be
\begin{gather*}
 \left( \frac{b_{1}-1}{b_{1}} \cdot k \cdot g_{1} + \frac{b_{2}-1}{b_{2}} \cdot \frac{k}{b_{1}} \cdot g_{2} + \ldots \right) \cdot m \, = \\
 = \, \sum_{i=1}^d \left( g_i \cdot \frac{b_i-1}{b_i} \cdot \prod_{j=i}^{d} b_j \right) \cdot m \, .
\end{gather*}
For the simplest case of $b_1=\ldots=b_d=2$, this evaluates to $m \cdot (\frac{k}{2} \cdot g_1 + \frac{k}{4} \cdot g_{2}+ \ldots + 1 \cdot g_d)$. Recall that the optimal hierarchical cost is $(k-1) \cdot m + O(k) \leq k \cdot m$ for $m$ large enough; as such, the solution returned by the two-step method is at least a factor $\frac{b_{1}-1}{b_{1}} \cdot g_1$ larger.

As for the technical choice of our parameters: firstly, we need to ensure that $m \geq g_1 \cdot k = O(1)$. Furthermore, we need to ensure that none of the blocks are split in a reasonable solution, i.e.\ $b-1 > g_1 \cdot (m+1) \cdot (k-1)$; with this, splitting a single block already has larger cost than the maximum cost induced by all of our $(m+1) \cdot (k-1)$ edges altogether.
\end{proof}

Recall that as $b_1$ grows larger, this ratio approaches the upper bound of Lemma \ref{lem:twostep_approx}. We also note that the construction of Theorem \ref{th:two_step_bad} can also be extended to hyperDAGs by replacing each block with a larger hyperDAG of maximal density, following similar ideas as in the proof of Lemma \ref{lem:hyperDAG_NPhard}.

We note that it is only straightforward to generalize Theorem \ref{th:two_step_bad} to a setting where the balance constraint ensures that all of the parts are non-empty, i.e.\ $\epsilon<\frac{1}{k-1}$. For any such $\epsilon$, we can indeed generalize the idea above by setting $|A|=(1+\epsilon) \cdot \frac{n}{k}$, introducing $T':=\frac{n-|A|}{k-1}$, and setting $|B_i|=|D|=|E_i|=\frac{T'}{k-1}$ and $|C_i|=\frac{k-2}{k-1} \cdot T'$. If $\epsilon$ is larger, then the same technique only guarantees a smaller factor of difference. In particular, for general $\epsilon$, we need at least $\lceil \frac{k}{1+\epsilon} \rceil$ parts to cover the whole graph, and hence we can leave $h_{\emptyset} = k-\lceil \frac{k}{1+\epsilon} \rceil$ parts empty; then the same construction approach provides an example for a factor $(\frac{b_1-1}{b_1} \cdot k - h_{\emptyset}) \cdot g_1 \cdot \frac{1}{k-h_{\emptyset}}$ of difference.

\subsection{Observations on complexity}

Note that the hardness result in our main theorem also carries over easily to the case of this more complex cost function. In particular, recall that in the generalization of the proof to $k \geq 3$ colors, the relevant part of the construction can also be colored with at most two colors. As such, if we consider this construction in the hierarchical setting, then any partitioning can be modified (without increasing cost) such that we only use red and blue in the relevant part of the construction, and these two colors are siblings in the bottom level of the tree. The cut cost in this construction is identical in the standard and the hierarchical setting, so our reduction also applies to the hierarchical case.

Also, any $\alpha$-approximation of the regular partitioning problem provides an $O(\alpha)$ approximation in the hierarchical setting due to Lemma \ref{lem:twostep_approx}: we can use the approximation algorithm in the first step of our two-step method, and then find the optimum hierarchy assignment for this partitioning (this second step only needs $O(1)$ time as long as $k \in O(1)$). This provides a solution with cost that is at most $\alpha \cdot g_1 = O(\alpha)$ times the optimum.

The parameterized complexity of the problem also remains unchanged.

\renewcommand*{\proofname}{Proof}
\begin{lemma}
Even with the hierarchical cost function, the partitioning problem is in XP (with respect to the allowed cost as a parameter).
\end{lemma}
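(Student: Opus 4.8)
The plan is to adapt the configuration-based argument used for containment in XP in Lemma~\ref{lem:parameterized}, with only minor additions to handle the hierarchical cost function. The first step is to re-establish that a solution of cost at most $L$ can cut at most $L$ hyperedges. For a single hyperedge $e$, the counts satisfy $\lambda_e^{(0)} \leq \lambda_e^{(1)} \leq \cdots \leq \lambda_e^{(d)}$, since finer levels of the hierarchy can only intersect more (or equally many) parts; combined with $g_i \geq g_d = 1$, this gives that the cost induced by $e$ is at least $\sum_{i=1}^{d} (\lambda_e^{(i)} - \lambda_e^{(i-1)}) = \lambda_e^{(d)} - 1$. Hence every cut hyperedge (with $\lambda_e^{(d)} \geq 2$) contributes at least $1$ to the total, and at most $L$ hyperedges can be cut altogether.

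Next I would enumerate configurations exactly as before: a subset $E_0$ of at most $L$ hyperedges that are permitted to be cut, together with a subset of the $k$ colors allowed in each $e \in E_0$. As $k \in O(1)$, this yields at most $L \cdot |E|^L \cdot 2^{k \cdot L} \leq n^{f(L)}$ configurations. The one genuinely new ingredient is that the hierarchical cost of a cut hyperedge depends not only on which colors it intersects but on how those colors sit in the hierarchy tree. To deal with this I would add an outer loop over all assignments $\pi$ of the $k$ parts to the $k$ leaves of the (fixed) hierarchy; since $k \in O(1)$, there are only $O(1)$ such assignments. For a fixed $\pi$ and a fixed configuration, each $\lambda_e^{(i)}$ is determined as the number of distinct level-$i$ ancestors of the leaves assigned to the colors permitted in $e$, so the cost $\sum_{i=1}^{d} g_i (\lambda_e^{(i)} - \lambda_e^{(i-1)})$ of the configuration is directly computable; I discard any $(\pi, \text{configuration})$ pair whose total exceeds $L$.

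For every surviving pair, feasibility is checked exactly as in Lemma~\ref{lem:parameterized}: remove the hyperedges $E_0$, contract each connected component of the remaining hypergraph into a single item carrying a node count and a set of admissible colors (the intersection of the permitted color sets over the incident $e \in E_0$), and run the same dynamic program to decide whether the items can be distributed among the $k$ colors so that each part receives at most $(1+\epsilon) \cdot \frac{n}{k}$ nodes. Crucially, the balance constraint is unchanged in the hierarchical setting---it is still a single size bound $|P_j^{(d)}| \leq (1+\epsilon) \cdot \frac{n}{k}$ per leaf part---so the feasibility test depends only on the configuration and not on the arrangement $\pi$, and the dynamic program carries over verbatim. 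The algorithm accepts if and only if some $(\pi, \text{configuration})$ pair has cost at most $L$ and passes the feasibility check, giving an overall running time of $O(1) \cdot n^{f(L)} \cdot \poly(n) = n^{f(L)}$.

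The only mild obstacle is the coupling between the cost (which depends on the hierarchy arrangement) and the balance constraint (which does not). This is resolved cleanly by the observation that, with $k \in O(1)$, the number of arrangements is constant: one simply guesses the arrangement to evaluate the cost and then reuses the arrangement-independent feasibility dynamic program of Lemma~\ref{lem:parameterized} without modification.
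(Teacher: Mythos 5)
Your proposal is correct and follows essentially the same route as the paper: both reuse the configuration enumeration and feasibility dynamic program of Lemma~\ref{lem:parameterized}, observing that each cut hyperedge still costs at least $1$ (your explicit verification via $\lambda_e^{(0)} \leq \cdots \leq \lambda_e^{(d)}$ and $g_i \geq g_d = 1$ is a nice touch the paper glosses over) and that the hierarchical cost of a configuration is directly computable. Your extra outer loop over the $O(1)$ leaf arrangements $\pi$ is harmless but redundant, since the configuration enumeration already ranges over all color subsets and the balance constraint is symmetric under color relabeling, which is why the paper treats the parts as sitting at fixed leaf positions without a separate enumeration.
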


\begin{proof}
We can follow the same proof approach as in Lemma \ref{lem:parameterized}: it still holds that if the cost is at most $L$, then the number of cut hyperedges is also at most $L$. Hence we again consider every configuration, i.e.\ every subset $E_0$ of at most $L$ hyperedges and every possible subset of the $k$ colors appearing in each hyperedge $e \in E_0$. Note that we can still compute the cost induced by each such configuration according to our hierarchical cost function, and exclude the configurations where this cost is above $L$. We can iterate through the remaining configurations, and check their feasibility exactly as in Lemma \ref{lem:parameterized}.
\end{proof}

\section{The hierarchy assignment problem} \label{app:hierass}

Next we briefly study the second step of the two-step method as a separate problem: given a fixed partitioning of the hypergraph, we want to assign the $k$ parts to the $k$ positions in the hierarchy optimally, i.e.\ such that the hierarchical cost is minimized.

\subsection{Motivation and general discussion}

The first natural question in this hierarchy assignment problem is the number of possible solutions, i.e.\ non-equivalent ways to assign the parts to slots in the hierarchy. A simple combinatorial argument shows that the number of non-equivalent hierarchy assignments is
\[ f(k) \, = \, \frac{k!}{b_{1} ! \, \cdot \, (b_{2}!) ^ {b_{1}} \, \cdot \, (b_{3}!) ^ {b_{1} \cdot b_{2}} \, \cdot \, \ldots} \: = \: \frac{k!}{\prod_{i=1}^{d} \: (b_i !) ^ {\prod_{j=1}^{i-1} \, b_j} } \: , \]
since there are $k!$ possible permutations in general, but on every level $i$ there are $b_1 \cdot \ldots \cdot b_{i-1}$ internal nodes of the tree, each of them with $b_i$ children, and any permutation of these children provides an equivalent solution. One can check that for several choices of the branching factors $b_1, \ldots, b_d$, this function $f(k)$ grows exponentially in $k$; a more detailed discussion of this formula is beyond the scope of our paper.

This function $f(k)$ can be interpreted as the difference between the size of the search space (up to symmetries) in the two versions of the partitioning problem: any regular $k$-way partitioning corresponds to $f(k)$ different solutions in the hierarchical case. This rapid growth of $f(k)$ can have a significant impact in practice; for example, it means that there are much fewer opportunities for symmetry breaking in a branch-and-bound based exploration of the search space.

Recall that with our previous assumption that $k \in O(1)$, we also have have $f(k) \in O(1)$, and hence the problem is not interesting from a complexity-theoretic perspective, since the solution space has constant size. Hence to study this problem, we briefly explore the case when $k$ is a variable part of the input. This setting might be realistic in applications where partitioning is time-critical, so instead of a fixed architecture, we apply more machines for larger hypergraphs, e.g.\ by increasing $k$ proportionally to $n$. These machines could then be e.g.\ connected over a network in the highest hierarchy level, resulting in $b_{1} \in \Theta(k)$ and $b_2, \ldots, b_d \in O(1)$ for our problem.

In general, given a hypergraph with its set of nodes already sorted into $k$ parts, we can contract each part into a single node to obtain a simplified hypergraph $G'$ on $k$ nodes as the new input of the hierarchy assignment problem. If a hyperedge was uncut in our partitioning, then it will only contain a single node after this contraction step; we can remove all such hyperedges for simplicity, since they have no effect on the cost of our hierarchy assignment. On the other hand, note that even if two hyperedges are different in the original hypergraph, it can happen that they contain they same subset of nodes after the contraction step. As such, the simplified input to our problem is in fact a multi-hypergraph, i.e.\ it can contain multiple copies of the same hyperedge. Alternatively, we can also represent $G'$ as a simple hypergraph with hyperedge weights $w_e \in \mathbb{Z}^+$ assigned to each $e \in E$.

\subsection{Complexity for two levels}

Finally, we briefly study the hardness of the hierarchy assignment problem for the simplest case of only $d=2$ levels. In this case, it turns out that the complexity depends on the choice of the branching factors $b_1$ and $b_2$ (recall that $b_1 \cdot b_2 = k$).

Note that with $d=2$, we only have two levels of the hierarchy: cuts on the bottom level induce a cost of $1$, while cuts on the top level induce a cost of $g_1 > 1$. In fact, if a hyperedge $e$ contains $\lambda_e\!^{(2)}$ different nodes altogether, then this will in any case induce a cost of $w_e \cdot (\lambda_e\!^{(2)}-1)$, plus a further cost of $w_e \cdot (\lambda_e\!^{(1)}-1) \cdot (g_1-1)$ for every data transfer over the higher level. As such, we can subtract this fixed amount of $\sum_{e \in E} \, w_e \cdot (\lambda_e\!^{(2)}-1)$ from the cost, and divide the remaining cost by $g_1'=(g_1-1)$ to obtain a simplified cost function for the problem. With this, the problem formulation is as follows: given a multi-hypergraph on $k$ nodes, can we partition the nodes into $b_1$ sets of $b_2$ nodes each such that the sum of the connectivity cost $w_e \cdot (\lambda_e-1)$ over all hyperedges is minimized?

We separate the two parts of Theorem \ref{th:step2} for clarity.

\begin{lemma}
Hierarchy assignment is polynomially solvable for $b_2=2$.
\end{lemma}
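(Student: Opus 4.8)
The plan is to reduce the $b_2=2$ case to a maximum-weight perfect matching instance on the complete graph over the $k$ contracted nodes. Recall that after the simplifications described above, the task is to partition the $k$ nodes into $b_1 = k/2$ pairs so as to minimize $\sum_{e \in E} w_e \cdot (\lambda_e - 1)$, where $\lambda_e$ now denotes the number of pairs (top-level parts of size $2$) that the hyperedge $e$ intersects.

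First I would rewrite $\lambda_e$ in terms of the chosen pairing. Fix a pairing $M$, i.e.\ a perfect matching on the $k$ nodes, and let $m_e$ denote the number of pairs of $M$ that are \emph{entirely} contained in $e$. Each such pair places two nodes of $e$ into a single part, whereas every other node of $e$ occupies its own part; hence $\lambda_e = |e| - m_e$. Substituting, the objective becomes $\sum_{e} w_e (|e|-1) - \sum_e w_e \cdot m_e$. Since the first sum is a constant independent of $M$, minimizing the cost is equivalent to \emph{maximizing} $\sum_e w_e \cdot m_e$.

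Next I would convert this into a pairwise objective. For each pair of nodes $u,v$, define the weight $W(u,v) := \sum_{e \, : \, \{u,v\} \subseteq e} w_e$, i.e.\ the total weight of hyperedges containing both $u$ and $v$; all these weights can be precomputed in polynomial time by iterating over every hyperedge $e$ and each of its $\binom{|e|}{2}$ internal pairs. Because a matched pair $\{u,v\} \in M$ contributes to $m_e$ exactly for those $e$ that contain both $u$ and $v$, we obtain the identity $\sum_e w_e \cdot m_e = \sum_{\{u,v\} \in M} W(u,v)$. Thus the quantity to be maximized is precisely the total $W$-weight of the matching $M$.

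Finally, the problem reduces to finding a maximum-weight perfect matching on $K_k$ with edge weights $W(u,v)$; such a matching exists since $k = 2 b_1$ is even, and it can be computed in polynomial time via Edmonds' blossom algorithm, yielding the optimal pairing of parts directly. The step I expect to require the most care in the full write-up is verifying the two structural identities $\lambda_e = |e| - m_e$ and the additive decomposition of $\sum_e w_e m_e$ over matched pairs; once these are in place, the reduction to maximum-weight matching (and hence polynomial solvability) is immediate.
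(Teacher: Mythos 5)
Your proposal is correct and matches the paper's own argument: both reduce the $b_2=2$ case to maximum-weight matching on the complete graph over the $k$ contracted nodes, with edge weight equal to the total weight of hyperedges containing both endpoints, and both invoke Edmonds' algorithm. Your derivation via the identity $\lambda_e = |e| - m_e$ is just a slightly more explicit write-up of the paper's "saved cost" accounting (and your insistence on a \emph{perfect} matching is equivalent, since the weights are nonnegative).
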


\begin{proof}
For $b_2=2$, our problem can be solved through a maximal matching. That is, we can create an edge-weighted complete graph on $k$ nodes where the weight $w_{(u,v)}$ of each edge $(u,v)$ is the number of hyperedges containing both $u$ and $v$. In the worst case (if all nodes of all hyperedges are in different parts), the total cost of a partitioning is $\sum_{e \in E} \, w_e \cdot (|e|-1)$. However, if we pair up $u$ and $v$ (as one of our parts of size $b_2=2$), then this implies that we have saved a cost of $w_{(u,v)}$ compared to this total cost. In general, if we find a weighted matching of total weight $w_0$, then the corresponding partitioning will have a cost of $\sum_{e \in E} \, w_e \cdot (|e|-1) - w_0$. As such, finding the best partitioning is equivalent to finding a maximal-weight matching, which can be done in polynomial time using Edmond's algorithm.
\end{proof}

\begin{lemma}
Hierarchy assignment is NP-hard for $b_2=3$.
\end{lemma}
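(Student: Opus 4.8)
The plan is to reduce from the \emph{Partition into Triangles} problem, which is well known to be NP-complete: given a graph $H(V_H, E_H)$ on $3m$ vertices, decide whether $V_H$ can be partitioned into $m$ triples, each of which induces a triangle in $H$. I will work with the simplified formulation of the hierarchy assignment problem derived above: for $d=2$ and the fixed value $b_2=3$ (so that $b_1 = k/3$ is the variable part of the input), we are given a multi-hypergraph on $k$ nodes and must partition the nodes into $b_1$ triples so as to minimize $\sum_{e} w_e (\lambda_e - 1)$, where $\lambda_e$ counts the triples that $e$ meets. Since the true hierarchical cost differs from this quantity only by an additive constant $\sum_{e} w_e(\lambda_e^{(2)}-1)$ and a positive multiplicative factor $g_1' = g_1-1$, establishing NP-hardness of this simplified objective immediately yields NP-hardness of the hierarchy assignment problem.

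Given an instance $H(V_H, E_H)$ with $|V_H| = 3m$, I construct the instance whose node set is $V_H$ itself, so that $k = 3m$ and $b_1 = m$, and for each edge $\{u,v\} \in E_H$ I add a single size-$2$ hyperedge $\{u,v\}$ of weight $1$. For such a size-$2$ hyperedge $e = \{u,v\}$ we have $\lambda_e - 1 = 0$ when $u$ and $v$ lie in the same triple and $\lambda_e - 1 = 1$ otherwise; hence the objective $\sum_e w_e(\lambda_e - 1)$ equals exactly the number of edges of $H$ that cross between distinct triples. Minimizing this is therefore the same as maximizing the number of edges that are kept \emph{inside} triples.

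The key structural observation is that every triple contains exactly $3$ nodes and so induces at most $3$ edges of $H$; summing over the $m$ triples, the number of edges lying inside triples is at most $3m$. Consequently the crossing cost is always at least $|E_H| - 3m$, and this bound is attained if and only if each of the $m$ triples induces exactly $3$ edges, i.e.\ is a triangle of $H$. Thus a partition of cost at most $|E_H| - 3m$ exists precisely when $H$ admits a partition into $m$ triangles, which completes the reduction and shows NP-hardness.

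The step I expect to require the most care is not the combinatorics but the bookkeeping verifying that the simplified cost function is genuinely equivalent to the original hierarchical cost, so that the decision threshold $|E_H| - 3m$ on the simplified objective transfers to a valid threshold for the true problem; this follows from the cost reduction already carried out above, but it must be invoked explicitly so that hardness is preserved. Everything else is elementary, resting only on the at-most-three-edges-per-triple counting bound and on the NP-completeness of Partition into Triangles.
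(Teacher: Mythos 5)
Your proof is correct, but it takes a genuinely different route from the paper's. The paper reduces from $3$-dimensional matching: it forces the chosen triples to respect the tripartite structure via a high-weight size-$3$ hyperedge for every triple in $X \times Y \times Z$, and then engineers the 3DM objective --- gain $(0,1)$ for a triple meeting a hyperedge in two resp.\ three nodes --- through a gain-complementation trick, replacing each original size-$3$ hyperedge by three size-$2$ edges (gain $(1,3)$) and adding a size-$3$ hyperedge for every node triple \emph{not} forming an original hyperedge (gain $(-1,-2)$), finishing with a somewhat delicate count of the total gain. You instead reduce from Partition into Triangles (a standard NP-complete problem), observing that with unit-weight size-$2$ hyperedges the simplified objective $\sum_e w_e(\lambda_e - 1)$ is exactly the number of edges of $H$ crossing between triples, that each triple absorbs at most $3$ edges, and that the resulting lower bound $|E_H| - 3m$ is attained precisely when every triple is a triangle. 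Both arguments handle the cost bookkeeping the same way, invoking the affine equivalence between the simplified objective and the true hierarchical cost (fixed additive term $\sum_e w_e(\lambda_e^{(2)}-1)$, positive factor $g_1 - 1 > 0$), so your threshold transfers just as the paper's does; you also match the paper's level of rigor in working directly with the contracted multi-hypergraph formulation. What each approach buys: yours is substantially shorter and yields a slightly sharper statement, since hardness already holds when the contracted instance is a simple graph with unit edge weights and no parallel hyperedges; the paper's heavier 3DM machinery, on the other hand, is what its subsequent extension sketch to general $b_2 \geq 3$ (via the auxiliary classes $Z_1, \ldots, Z_{b_2-3}$) is built on --- though your route would extend comparably by reducing from the NP-complete problem of partitioning a graph into cliques of fixed size $b_2$, where the counting bound $\binom{b_2}{2}$ per part plays the role of your at-most-three-edges bound.
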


\begin{proof}
For $b_2=3$, we present a reduction from the $3$-dimensional matching (3DM) problem: given a tripartite hypergraph $X$,$Y$,$Z$, and a set of hyperedges $(x,y,z) \in X \times Y \times Z$ of size $3$, our task is to find the largest 3-dimensional matching, i.e.\ the largest possible subset of hyperedges that are disjoint. The problem remains NP-hard even in the special case when $|X|=|Y|=|Z|$ and the hypergraph is $3$-regular, i.e.\ each node has degree exactly $3$ \cite{GJ79}.

Given a 3DM problem, we turn this into an instance of hierarchy assignment with $b_2=3$ on the same set of nodes. If we consider 3DM as a problem where we need to divide the nodes into triplets, and each triplet has value $1$ if it induces a hyperedge and $0$ otherwise, then this is a rather intuitive reduction: both problems require us to divide the nodes into triplets. As such, we begin by considering the same set of nodes as in our 3DM problem, and for each hyperedge $(x,y,z)$, we add a hyperedge over the same set of nodes (with weight $1$) in our hierarchy assignment problem.

There are two challenges to address in order to complete the reduction. Firstly, we need to restrict the selected triplets to the tripartite structure, i.e.\ ensure that the triplets chosen in hierarchy assignment are always selected from $X \times Y \times Z$. Moreover, we also need to model the slightly different objective function, i.e.\ a different gain value when a triplet $V_t$ intersects a hyperedge $e$. More specifically, in 3DM, if $|V_t \cap e|=2$, then this does not result in any gain, and we only have a gain of $1$ when $|V_t \cap e|=3$. In contrast to this, in hierarchy assignment with $b_2=3$, the saved cost (compared to the worst-case) is $(\lambda_e-1)$: there is already a gain of $1$ for $e$ if $|V_t \cap e|=2$, and a gain of $2$ for $e$ if $|V_t \cap e|=3$.

For the first problem, we can simply consider each triple in $X \times Y \times Z$, and add a hyperedge with a constant weight $w_0$ that contains exactly these three nodes. Since the number of triples is in $O(k^3)$, the new number of hyperedges is still polynomial in $k$. This already ensures that the optimal triplet assignment satisfies the tripartite condition: since the input hypergraph of the 3DM problem is $3$-regular, any triplet can only result in a constant gain. Hence by choosing $w_0$ large enough, we can ensure that any solution that contains a non-tripartite triplet can be improved by reorganizing the non-tripartite triplets into arbitrary tripartite triplets. Note that these auxiliary hyperedges contribute the same total gain to any feasible (i.e.\ tripartite) triplet, so they do not influence the optimality of a solution among the feasible triplets.

It remains to model the cost function of 3DM appropriately. Recall that in hierarchy assignment, each triplet $V_t$ results in a gain that we will denote by $(1,2)$: a gain of $1$ for each hyperedge that contains two nodes of the triplet, and a gain of $2$ for each hyperedge that contains all three nodes of the triplet. To correctly model the cost of 3D-matching, we would need a gain of $(0,1)$ instead: $0$ for each hyperedge with two nodes in $V_t$, and $1$ for each hyperedge with three nodes in $V_t$.

In the 3DM problem, each of the hyperedges has size $3$. Note that if we replace such a size-$3$ hyperedge by $3$ distinct hyperedges of size $2$ (connecting each pair of nodes), then we change this gain of $(1,2)$ into a gain of $(1,3)$: having two nodes of $V_t$ in an original hyperedge now provides a gain of $1$, and having all three nodes of $V_t$ in it now provides a gain of $3$.

The key observation is that we can obtain our desired gain of $(0,1)$ as $(1,3)-(1,2)$. That is, for each size-$3$ hyperedge $e$ of the 3DM problem, we (i) replace the hyperedge by three size-$2$ hyperedges connecting each pair of nodes, and (ii) we add a gain of ``$(-1,-2)$'' by adding a hyperedge for \textit{every} triplet of nodes in the hypergraph except for $e$. This indeed provides the desired gain. Intuitively, if we combine only two nodes of $e$ in a triplet, then we get a gain of $1$ (due to the size-$2$ edge) and lose an otherwise guaranteed gain of $1$ due to the fact that these two nodes have one less size-$3$ hyperedge in common. If we combine all three nodes of $e$ in a triplet, then we get a gain of $3$ from the three size-$2$ edges, and lose an otherwise guaranteed gain of $2$ due to the fact that these three nodes do not form an additionally added size-$3$ hyperedge.

Note that this is just the intuitive explanation of the construction: in practice, we can drastically simplify it by decreasing the weight of each size-$3$ hyperedge by the same amount. In other words, our construction is equivalent to the following: (i) we replace each hyperedge $e$ by three size-$2$ hyperedges connecting each pair of nodes, and (ii) we add a single size-$3$ hyperedge for every triplet of nodes that \textit{do not} form a hyperedge originally. It is clear that in this construction, the number of hyperedges is polynomial in $k$.

To compute the total gain of a solution, consider one of the $\frac{k}{3}$ triplets we choose. For any two nodes $v_1$, $v_2$  in the triplet and a node $u$ outside of it, we always have a gain of $1$: either if $(v_1, v_2, u)$ was an hyperedge originally (then we added a size-$2$ edge $(v_1,v_2)$), or if it was not (then we added $(v_1, v_2, u)$ as a new hyperedge, which has two nodes in this triplet). For any three nodes $v_1$, $v_2$, $v_3$ that were chosen as a triplet, we have a gain of $3$ if this was an original hyperedge (due to the three size-$2$ edges added), and a gain of $2$ if it was not (since we added $(v_1, v_2, v_3)$ as a new hyperedge). As such, the triplet results in a gain of $3 \cdot (k-3) + 3$ if it is an original hyperedge, and $3 \cdot (k-3) + 2$ if not. Finally, we need to consider the extra hyperedges of weight $w_0$ added for each tripartite $(x, y, z)$; if our triplet is tripartite, then these result in a total gain of $2 \cdot w_0 + 3 \cdot (\frac{k}{3}-1) \cdot w_0 = (k-1) \cdot w_0$. Altogether, this means that there is a perfect 3D-matching for $X,Y,Z$ if and only if there exists a solution with gain of at least $\frac{k}{3} \cdot \left( 3 \cdot (k-3) + 3 + (k-1) \cdot w_0 \right)= (1+\frac{w_0}{3}) \cdot k^2 - (2+\frac{w_0}{3}) \cdot k$, i.e.\ a hierarchy assignment of cost at most $\sum_{e \in E} \, w_e \cdot (|e|-1) - (1+\frac{w_0}{3}) \cdot k^2 + (2+\frac{w_0}{3}) \cdot k$.
\end{proof}

Note that for $b_2 \geq 3$ values, we can extend this approach to show NP-hardness by introducing auxiliary classes of nodes $Z_1$, $Z_2$, \ldots, $Z_{b_2-3}$, and ensuring with extra hyperedges that the $i$-th nodes of $Z$, $Z_1$, $Z_2$, \ldots, $Z_{b_2-3}$ always end up in the same $b_2$-tuple. This works for $b_2$ values of smaller magnitudes; from some point, it introduces too many auxiliary nodes into the graph, and hence a different approach is needed. In the other extreme, note that for $b_2 \in \Theta(k)$, we essentially have a regular $b_1$-way partitioning problem (with $\epsilon=0$), so NP-hardness follows easily.

\section{Further discussion of the hierarchical setting} \label{app:gen}

Finally, we briefly discuss two further questions that arise very naturally regarding our results on the hierarchical cost model.

\subsection{HyperDAGs and multi-constraint partitioning}

Note that we have analyzed the hierarchical cost function in Section \ref{sec:hier} for general hypergraphs. It is a natural question how the results in this section carry over to hyperDAGs or to a multi-constraint partitioning setting. We provide a brief discussion of this below, without going into much technical detail.

Firstly, most of the results in Section \ref{sec:hier} carry over to hyperDAGs with relative ease. The positive results (Lemma \ref{lem:twostep_approx}, or the first point of Theorem \ref{th:step2}) require no modification at all. For Lemma \ref{lem:recurse} and Theorem \ref{th:two_step_bad}, we need to show that the corresponding example constructions can also be converted to hyperDAGs; we discuss this below. 
The only claim that is not straightforward to adapt to hyperDAGs is the second point of Theorem \ref{th:step2}.

For the conversion in Lemma \ref{lem:recurse} and Theorem \ref{th:two_step_bad}, we can essentially apply similar techniques to the NP-hardness proof for hyperDAGs (Lemma \ref{lem:hyperDAG_NPhard}). In particular, both constructions consist of blocks; we replace each of these blocks by two-level blocks, i.e.\ two groups of size $b_0$ and $b_1$, respectively (for some $b_0$, $b_1$), and $b_0$ hyperedges that each contain a distinct node of the first group and all the $b_1$ nodes of the second group. The idea is then that with $b_1$ being much larger than $b_0$, the second group replaces the original blocks in the constructions, while the nodes in the first group serve as a generating nodes in order to ensure that our construction is a hyperDAG. If the further hyperedges connecting the blocks are disjoint (we can ensure this in both of the constructions), then we can select an arbitrary generating node for these hyperedges from the second groups.

In particular, in the construction of Lemma \ref{lem:recurse}, we can select $b_0=\frac{n}{36}$ and $b_1=\frac{5 n}{36}$ for the larger blocks, and $b_0=\frac{n}{72}$ and $b_1=\frac{5 n}{72}$ for the smaller blocks. This implies $(1+\epsilon) \cdot \frac{n}{4} < 2 \cdot \frac{5 n}{36}$ for $\epsilon<\frac{1}{9}$, hence the second recursive step still needs to split one of the blocks of size $\frac{5 n}{36}$ on the left side, resulting in a cost of at least $\frac{n}{36}=O(n)$. In the construction of Theorem \ref{th:two_step_bad}, we convert each original block of size $b$ into a first group of size $b_0=\frac{1}{k^2} \cdot b$ and a second group of size $b_1=\frac{k^2-1}{k^2} \cdot b$; this ensures that even if all the first groups are combined, they have a total size that is still smaller than any single second group, i.e.\ $\frac{n}{k^2}<\frac{k^2-1}{k^2} \cdot \frac{T}{k-1}$ (recall from the proof of Theorem \ref{th:two_step_bad} that $T=\frac{n}{k}$). This ensures that the second groups can still only be combined as before in order to form parts of size $T$, unless one of the second groups is split (which results in a cost of at least $b_0$, which is a significantly larger magnitude than $k \cdot m$).

Adapting the results to a multi-constraint setting is significantly simpler. In particular, the negative results in Lemma \ref{lem:recurse} and in Theorem \ref{th:two_step_bad} carry over without any changes, since single-constraint partitioning is a special case of the multi-constraint setting. The proof of Lemma \ref{lem:twostep_approx} also requires no change. Theorem \ref{th:step2} is also unaffected, since in the hierarchy assignment problem, we already begin with a fixed partitioning that satisfies the constraint(s).

Finally, the results can also be adapted to the combination of the two settings, i.e.\ layer-wise hyperDAG partitioning. When transforming the constructions of Lemma \ref{lem:recurse} and Theorem \ref{th:two_step_bad} to hyperDAGs, we simply need to ensure that the first and second groups of the blocks end up in the first and second layer of the DAG, respectively; if this holds, then it is sufficient to consider the second layer for both of the proofs. To achieve this, we only need to consider the few extra hyperedges that were previously assigned generator nodes from the second groups for simplicity, add a new artificial generator node for each of these, and besides, also add a few isolated nodes to the first layer to ensure that these new artificial nodes can be sorted into any of the parts as desired.

\subsection{Different processor topologies}

While the tree-like hierarchy of processors accurately captures the majority of today's computing architectures, it is another natural question to study how an analogue of the hierarchical partitioning problem would behave if the underlying topology for the connections between processors was different. As a straightforward generalization of our setting, we can consider an arbitrary processor topology represented by a weighted complete graph on $k$ nodes (each node representing a processor), where the weight of the edge between each pair of nodes describes the cost of communication between the given two processors. We can assume for simplicity that the weights in this graph satisfy the triangle inequality. As a natural generalization of our hierarchical cost function to this setting, we can interpret the cost incurred by a given hyperedge in our partitioning as the cost of the smallest subtree (Steiner tree) in this graph that contains the set of processors which appear in the hyperedge.

Note that some of our claims, e.g. Lemma \ref{lem:recurse}, only use the fact that the communication cost between any pair of processors is a constant; as such, these easily carry over to any topology. Lemma \ref{lem:twostep_approx} also carries over if we replace $g_1$ with the the maximum transfer cost between any pair of processors in our topology, and we again assume that the minimum transfer cost is normalized to $1$.

Theorem \ref{th:two_step_bad} is more complicated to adjust to this case. If we denote the cost of communication between processors $p_1$ and $p_2$ by $g_{p_1, p_2}$, and the set of processors by $\Upsilon$, then in our construction, the cost of the regular partitioning optimum will be proportional to $\min_{p \in \Upsilon} \, \sum_{p' \neq p} g_{p, p'}$ (i.e.\ the smallest possible total distance from one processor to all others), while the cost of the real hierarchical optimum will be proportional to $(k-1) \cdot \min_{p' \neq p} \, g_{p, p'}$ (i.e.\ $(k-1)$ times the distance between the closest processor pair), plus an additional constant cost. Theorem \ref{th:two_step_bad} will then hold with the ratio of the two expressions above, minus an arbitrarily small $\delta>0$.

The analogue of Theorem \ref{th:step2} is not straightforward to define in such a general network topology where we have no concept of `levels'. However, it is easy to see that in general topologies, the hierarchy assignment problem is NP-hard, since the second point of our original Theorem \ref{th:step2} is a special case of this.

\end{document}